\documentclass{article}

\usepackage{bbold}
\usepackage{lscape}
\usepackage{amsmath}
\usepackage{amsthm}
\usepackage{amssymb,stmaryrd}
\usepackage{xspace}
\usepackage{mathrsfs}

\usepackage{pgf, pgfcore, tikz} 
\usetikzlibrary{arrows} 
\usetikzlibrary{automata,positioning}
\usetikzlibrary{decorations.pathmorphing}
\usetikzlibrary{shapes.geometric}

\usepackage{subfig}

\usepackage{listings}
\usepackage{algpseudocode}
\usepackage{algorithm}

\newtheorem{theorem}{Theorem}
\newtheorem{definition}[theorem]{Definition}

\newtheorem{proposition}[theorem]{Proposition}
\newtheorem{lemma}[theorem]{Lemma}
\newtheorem{corollary}[theorem]{Corollary}

\newtheoremstyle
{exemple}
{\topsep}
{\topsep}
{\fontseries{lt}\selectfont}
{}
{\bfseries \scshape}
{.}
{ }
{\thmname{#1}\thmnumber{ #2}\thmnote{ #3}}

\theoremstyle{exemple}
\newtheorem{exemple}[theorem]{Example}

\pagestyle{plain}

\newcommand{\hmid}{\ \mid\ }
\newcommand{\XML}{XML\xspace}
\newcommand{\Figure}{Figure\xspace} 
\newcommand{\ol}[1]{\overline{#1}} 

\newcommand{\AutA}{\mathcal{A}} 
\newcommand{\AutB}{\mathcal{B}} 
\newcommand{\La}[1]{L(#1)} 

\newcommand{\AutAGeneral}{\AutA = (Q, \Sigma, Q_f, \Delta)} 

\newcommand{\Lh}[1]{{\cal H}_{#1}} 
\newcommand{\LhA}{\Lh{\AutA}}
\newcommand{\SLh}{S_{\AutA}} 
\newcommand{\BLh}{\AutB_{\AutA}} 

\newcommand{\racineRun}[1]{\underset{#1}{\hookrightarrow}} 
\newcommand{\pathLabeled}[1]{\overset{#1}{\leadsto}}
\newcommand{\run}{r}

\newcommand{\subsetatc}[1]{\left\lfloor #1 \right\rfloor}

\newcommand{\T}[1]{T_{#1}}
\newcommand{\TSigma}{\T{\Sigma}}
\newcommand{\height}[1]{{\it height}(#1)}
\newcommand{\Pos}[1]{{\it nodes}(#1)}
\newcommand{\Node}{p} 
\newcommand{\Root}{\epsilon} 
\newcommand{\subtree}[2]{{#1}_{|#2}}
\newcommand{\lin}[1]{[#1]}

\newcommand{\Hdg}[1]{H_{#1}}
\newcommand{\HdgSigma}{\Hdg{\Sigma}}
\newcommand{\open}[1]{{\sf open}(#1)}  


\newcommand{\PostOp}{\it Post}
\newcommand{\Post}[2]{\PostOp_{#1}(#2)}
\newcommand{\Posta}[1]{\Post{a}{#1}}
\newcommand{\PostAllStar}[1]{\PostOp^*(#1)}
\newcommand{\MinPostAllStar}[1]{\PostOp^*_{\lfloor \rfloor}(#1)}
\newcommand{\PostAll}[1]{\PostOp(#1)}
\newcommand{\MinPostAll}[1]{\PostOp_{\lfloor \rfloor}(#1)}
\newcommand{\PostAlli}[2]{\PostOp^{#1}(#2)}
\newcommand{\MinPostAlli}[2]{\PostOp^{#1}_{\lfloor \rfloor}(#2)}
\newcommand{\PointFixeEtoile}{\Pi_{\HdgSigma}}   
\newcommand{\PointFixe}{P_{\TSigma}}   

\newcommand{\Pt}[1]{P_{#1}} 
\newcommand{\pih}[1]{\pi_{#1}} 

\newcommand{\rel}[1]{{\sf rel}(#1)}
\newcommand{\relOp}{\sf rel}
\newcommand{\id}{{\sf id}} 

\newcommand{\macroSet}{\mathscr{P}} 
\newcommand{\relSet}{\mathscr{R}} 
\newcommand{\mswSet}{\mathscr{W}} 

\newcommand{\wordsOf}[1]{\overline{#1}}

\newcommand{\X}{X} 
\newcommand{\Y}{Y} 
\newcommand{\Pref}[1]{{\it Pref}(#1)} 

\newcommand{\PPref}[1]{{\it PPref}(#1)} 

\newcommand{\VPA}{VPA\xspace}
\newcommand{\VPAs}{VPAs\xspace}
\newcommand{\vparule}[4]{
  {\ensuremath{#1}}
  \xrightarrow{\ensuremath{#2}:\ensuremath{#3}}
  {\ensuremath{#4}}
}
\newcommand{\stack}{\sigma} 
\newcommand{\ConfSet}{{\mathscr C}}
\newcommand{\Safe}[1]{{\it Safe}(#1)}
\newcommand{\LSafe}[1]{{\it LSafe}(#1)}
\newcommand{\Reach}[1]{{\it Reach}(#1)}
\newcommand{\Pred}[2]{{\it Pred}_{#1}(#2)}
\newcommand{\finiteH}{H} 
\newcommand{\vparel}[1]{{\sf rel}_{#1}} 
\newcommand{\vpaid}{{\it id}} 
\newcommand{\vpaheq}{\sim} 
\newcommand{\roundcup}{\sqcup} 
\newcommand{\satform}[2]{\varphi_{#1}(#2)} 
\newcommand{\charact}[1]{v_{#1}} 
\newcommand{\charactval}{v} 

\title{Visibly pushdown automata on trees:\\ universality and $u$-universality}
\author{V\'eronique Bruy\`ere \and Marc Ducobu \and Olivier Gauwin}
\date{}

\begin{document}
\maketitle

\begin{abstract}
An automaton is universal if it accepts every possible input.
We study the notion of $u$-universality,
which asserts that the automaton accepts every input starting with $u$.
Universality and $u$-universality are both EXPTIME-hard
for non-deterministic tree automata.
We propose efficient antichain-based techniques to address these problems for 
visibly pushdown automata operating on trees.
One of our approaches yields algorithms for the universality
and $u$-universality of hedge automata.
\end{abstract}

\section{Introduction}

The model-checking framework provided many successful tools
for decades, starting from the seminal work of B\"uchi. 
A lot of them rely on the links between logics used to express
properties on words, and automata allowing to check them.
Some of these results have been adapted to trees,
and more recently to words with a nesting structure.

\emph{Visibly pushdown automata} (\VPAs) have been introduced to process such 
words with nesting \cite{AlurMadhusudan04}. 
\VPAs are similar to pushdown automata,
but operate on a partitioned alphabet: 
a given letter is associated with one action (push or pop), and thus
cannot push when firing a transition, 
and pop when firing another.
Such automata were introduced to express and check properties
on control flows of programs, 
where procedure calls push on the stack, and returns pop. 
They are also suitable to express properties on \XML documents 
\cite{KumarMadhusudanViswanathan07}.
These are usually represented as trees, but are serialized as a sequence
of opening and closing tags, also called the \emph{linearization}
of this document, or its corresponding \emph{\XML stream}.

Processing such streams without building the corresponding tree
is permitted by online algorithms.
It is often crucial to detect \emph{at the earliest position} of the stream
whether it satisfies a given property or not.
When the property is given by an automaton,
we call this automaton \emph{$u$-universal}
when the stream begins with word $u$,
and $u$ ensures that the whole stream is accepted by the automaton,
whatever it contains after $u$.
Indeed, this is a variant of universality of automata:
universality is $\epsilon$-universality,
and amounts to assert that the property will be true for every possible stream,
and thus can be asserted before reading the first letter.
While universality of automata is a very strong property, 
$u$-universality arises each time
an automaton checks the \emph{presence of a pattern} in trees,
and this pattern appears in $u$.

A delay in the detection of a violation may be exploited to
break firewalling systems
when they use \XML for logs \cite{BenediktJeffreyLey-WildJL08},
or to perform a denial of service attack on a remote program.
In a less critical sense, it can also be used in \XML validators, 
to assert validation or non-validation of a document before reading it entirely.
For program traces, this is usually addressed by online verification
algorithms operating on words but without considering the nesting relation
between program calls and returns \cite{KupfermanVardi01}.
In the \XML setting, some streaming algorithms have been proposed.
Most of them are not earliest, and require a delay between the position
where acceptance/refusal can be decided, and the position where it is claimed.

Indeed, testing $u$-universality is computationally hard on 
linearizations of trees.
When the property is specified by a deterministic automaton,
this can be checked in cubic time.
On non-deterministic automata,
$u$-universality becomes EXPTIME-complete \cite{GauwinNiehrenTison09b}.
Non-determinism naturally arises when automata are obtained
from logic formulas, as for instance XPath expressions
with descendant axis \cite{FrancisDavidLibkin11,GauwinNiehren11}.

In this paper we propose new algorithms for deciding universality
and $u$-universality of non-deterministic tree automata
on unranked trees accessed through their linearization.
Our goal is to obtain algorithms that outperform
the usual approach consisting in determinizing the automaton.
We want our algorithms for $u$-universality to be incremental,
in that, for a letter $a$, deciding the $ua$-universality
should reuse as much information from $u$-universality computation as possible.
Indeed we want to find the earliest position allowing to assert acceptance,
so we have to test $u$-universality for every prefix $u$ before that point.

We use \emph{antichains} to get smaller objects to manipulate,
and develop other ad-hoc methods.
Antichains have been applied recently to decision problems related to 
non-deterministic automata:
universality and inclusion for finite word automata 
\cite{DewulfDoyenHenzinger06},
and for non-deterministic bottom-up tree automata
\cite{BouajjaniHHTV08}.
Some simulation relations are also known on unranked trees \cite{Srba06}
but it is unclear whether they can help for our problems,
as they do in other contexts \cite{AbdullaChenHolik10,DoyenRaskin10}.
Nguyen \cite{VanNguyen09} proposed an algorithm for testing the universality
of \VPAs. This algorithm simultaneously performs an on-the-fly determinization and reachability checking by $\cal P$-automaton. The notion of $\cal P$-automaton introduced in \cite{EsparzaHRS00,EsparzaKS03} provides a symbolic technique to compute the sets of all reachable configurations of a \VPA.
This algorithm has been later improved by Nguyen and Ohsaki
\cite{VanNguyenOhsaki12} by introducing antichains of over transitions of $\cal P$-automaton, in a way to generate reachable configurations as small as possible.
Our algorithms for universality are alternative to this one since we do not use the regularity property of the set  of reachable configurations. And our techniques for incrementally testing $u$-universality
are totally new wrt this algorithm.
A problem similar to $u$-universality is addressed in 
\cite{MadhusudanViswanathan09} in the context of query answering.
Their algorithm applies to non-deterministic \VPAs 
recognizing a canonical language of a query, 
but the automata are assumed to only accept prefixes $u$
for which $u$-universality holds, which is precisely the goal of our algorithms.

We contribute two algorithms for checking $u$-universality 
of \VPAs on linearizations of unranked trees.
The first algorithm is by reduction to $u$-universality (and also universality)
of hedge automata.
\emph{Hedge automata} are the standard automaton model
used for unranked trees \cite{BruggemannkleinMurataWood01},
and runs in a bottom-up manner.
Hedge automata are similar to \XML schema models like DTDs or Relax NG.
The second algorithm is a direct algorithm on \VPAs.
Such an algorithm was known in the deterministic case 
\cite{GauwinNiehrenTison09b}, and relied on the incremental
computation of safe states.
This algorithm cannot be generalized
to the non-deterministic case, as sets of safe states do not contain 
enough information.
Instead, we use sets of safe configurations, 
which may be infinite, but manipulated through finite antichains.
We show how SAT solvers can be used to update these antichains.

The paper is structured as follows.
In Section~\ref{sec:preliminaries} we define trees,
visibly pushdown automata and the problem of $u$-universality.
Section~\ref{sec:hedge-approach} details our first algorithm,
relying on a translation to hedge automata.
Section~\ref{sec:safe} contains our second algorithm,
namely the incremental computation of sets of safe configurations.

\section{Trees, Automata and $u$-universality}
\label{sec:preliminaries}

\subsection{Unranked Trees}
We recall here the standard definition of unranked trees, as 
provided for instance in \cite{tata}.
Let $\Sigma$ be a finite \emph{alphabet},
and $\Sigma^*$ (resp. $\Sigma^+$) be the set of all words 
(resp. non empty words) over $\Sigma$. The empty word is denoted by $\epsilon$.
Given two words $v, w \in \Sigma^*$  over $\Sigma$,
$v$ is a \emph{prefix} (resp. \emph{proper prefix}) of $w$
if there exists a word $v' \in \Sigma^*$ (resp. $v' \in \Sigma^+$) such that
$v v' = w$.
Let $\mathbb{N}_0$ be the set of all non-negative integers.

An \emph{unranked tree} $t$ over $\Sigma$ is a partial function 
$t: \mathbb{N}_0^* \rightarrow \Sigma$
such that the domain is non-empty, finite and prefix-closed.
The domain is denoted by $\Pos{t}$
and contains the \emph{nodes} of the tree $t$, 
with the root being the empty word $\Root$. 
The function $t$ labels each node $\Node$ with a letter $t(\Node)$ of $\Sigma$.
A node labeled by $a \in \Sigma$ is called an \emph{$a$-node}.
The set of all unranked trees over $\Sigma$ is denoted by
$\T{\Sigma}$.

The \emph{subtree} of $t$ rooted at node $\Node$ of $t$ is the tree denoted by 
$\subtree{t}{\Node}$, which domain is the set of nodes $\Node'$ such that
$\Node \Node'\in\Pos{t}$ and verifying 
$\subtree{t}{\Node}(\Node') = t(\Node \Node')$.
For a given node $\Node\in\Pos{t}$, we call \emph{children} of $\Node$
the nodes $\Node i \in\Pos{t}$ for $i\in\mathbb{N}_0$,
and use the usual definitions for parents, ancestors and descendants.
The \emph{height} of a tree is the length of its longest branch 
(with the length being the number of nodes).

\begin{exemple}
\label{def_ex_arbres}
Let $t_1: \{ \Root, 1, 2, 3, 4, 5, 51, 52  \} 
\rightarrow \{a, b, c\}$ such that 
$t_1(\Root) = c$, 
$t_1(1) = a$, $t_1(2) = a$, $t_1(3) = a$,
$t_1(4) = a$, $t_1(5) = b$, $t_1(51) = b$, $t_1(52) = b$.
Tree $t_1$ is an unranked tree with height $3$. It can be represented as in 
\Figure~\ref{exemple_arbre_1}.

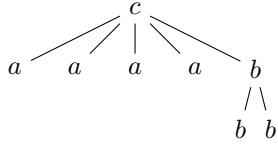
\begin{figure}[h!]
\centering
\begin{tikzpicture}
[level distance=8mm, 
level 1/.style={sibling distance=8mm},
level 2/.style={sibling distance=4mm}, 
level 3/.style={sibling distance=4mm},
] 
\node {$c$}
child  {node  {$a$}}
child  {node  {$a$}}
child  {node  {$a$}}
child  {node  {$a$}}
child  {node  {$b$}
  child {node {$b$}}
  child {node {$b$}}};
\end{tikzpicture}
\caption{Representation of unranked tree $t_1$.}
\label{exemple_arbre_1}
\end{figure}

Another example is $t_2: \{ \Root, 1, 11, 12, 121, 122, 2, 3, 31, 32, 33, 34 \} 
\rightarrow \{ a, b, c \}$ 
as illustrated in \Figure~\ref{exemple_arbre_2}.

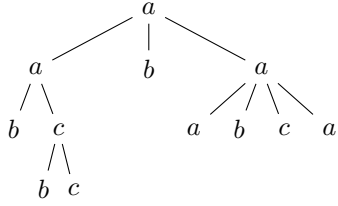
\begin{figure}[h!]
\centering
\begin{tikzpicture}
[level distance=8mm, 
level 1/.style={sibling distance=15mm},
level 2/.style={sibling distance=6mm}, 
level 3/.style={sibling distance=4mm},
] 
\node {$a$}
child  {node  {$a$}
  child {node {$b$}}
  child {node {$c$}
    child {node {$b$}}
    child {node {$c$}}}}
child  {node  {$b$}}
child {node {$a$}
  child {node {$a$}}
  child {node {$b$}}
  child {node {$c$}}
  child {node {$a$}}};
\end{tikzpicture}
\caption{Representation of $t_2$.}
\label{exemple_arbre_2}
\end{figure}
\end{exemple}

\paragraph{Linearization}
Trees can be described by well-balanced 
words which correspond to a depth-first traversal of the tree.
An opening tag is used to notice the arrival on a node
and a closing tag to notice the departure of a node.
For each $a \in \Sigma$, let $a$ itself represent 
the opening tag and $\overline{a}$ the related closing tag.
The \emph{linearization} $\lin{t}$ of $t\in\T\Sigma$ is the 
\emph{well-balanced} word over $\Sigma \cup \overline{\Sigma}$,
with $\overline{\Sigma} = \{ \overline{a}\ |\ a \in \Sigma \}$, 
inductively defined by:
$$ \lin{t} = 
a ~ \lin{\subtree{t}{1}} \cdots \lin{\subtree{t}{n}} ~ \overline{a} $$ 
with $a = t(\epsilon)$ and the root has $n$ children.
We denote by  $\lin{\TSigma}$ 
the set of linearizations of all trees in $\TSigma$.
Let $\PPref{\TSigma}$ denote the set of all proper prefixes of
$\lin{\TSigma}$:
$\PPref{\TSigma} = 
\{ u\in(\Sigma\cup\overline\Sigma)^* \hmid 
\exists v\in(\Sigma\cup\ol\Sigma)^+,\ uv\in\lin{\TSigma}\}$.

\begin{exemple}
Let $t_1$ and $t_2$ be the trees defined in Example
\ref{def_ex_arbres}, then

$\lin{t_1}$ = $c\ a\ \overline{a}$ $a\ \overline{a}$ $a\ \overline{a}$
$a\ \overline{a}$
$b\ b\ \overline{b}$ $b\ \overline{b}\ \overline{b}\ \overline{c}$

$\lin{t_2}$ = $a\ a\ b\ \overline{b}$ $c\ b\ \overline{b}$ 
$c\ \overline{c}\ \overline{c}\ \overline{a}$ 
$b\ \overline{b}$
$a\ a\ \overline{a}$ $b\ \overline{b}$ $c\ \overline{c}$ 
$a\ \overline{a}\ \overline{a}\ \overline{a}$ 
\end{exemple}

\subsection{Visibly pushdown automata} \label{sec:VPA}

Visibly pushdown automata (\VPAs, \cite{AlurMadhusudan04,AlurMadhusudan09}) 
are pushdown automata operating on a partitioned alphabet
where only call symbols can push, return symbols can pop, and internal
symbols can do transitions without considering the stack.

In this paper we only consider languages of unranked trees, 
so we use \VPAs as unranked trees acceptors,
operating on their linearization 
(also named \emph{streaming tree automata} \cite{GauwinNiehrenRoos08}).
This corresponds to the following restrictions.
First, the alphabet is only partitioned into call symbols $\Sigma$
and return symbols $\overline\Sigma$, and does not contain internal symbols.
Second, all linearizations recognized by these \VPAs are such that 
all pairs of matched call $a$ and return $\overline b$ are such that $a=b$,
corresponding to the label of the tree of the corresponding node.
Third, all linearizations are well-matched and single-rooted, 
so the acceptance condition is that a final state is reached on empty stack.

\begin{definition} \label{def:VPA}
A \emph{visibly pushdown automaton} $\AutA$ over alphabet
$\Sigma$ is a tuple $\AutA = (Q, \Sigma, \Gamma, Q_i, Q_f, \Delta)$ where
$Q$ is a finite set of states containing initial states $Q_i\subseteq Q$
and final states $Q_f \subseteq Q $, a finite set $\Gamma$ of stack symbols,
and a finite set $\Delta$ of rules.
Each rule in $\Delta$ is of the form
$\vparule{q}{a}{\gamma}{q'}$
with $a\in\Sigma\cup\overline\Sigma$, $q,q'\in Q$, and $\gamma\in\Gamma$.
\end{definition}

The left-hand side of a rule $\vparule{q}{a}{\gamma}{p}\in\Delta$ is
$(q,a)$ if $a\in\Sigma$, and 
$(q,a,\gamma)$ if $a\in\ol\Sigma$. 
A \VPA is \emph{deterministic} if
it has at most one initial state,
and it does not have two distinct rules with the same left-hand side.

A \emph{configuration} of a \VPA $\AutA$ is a pair $(q,\stack)$
where $q\in Q$ is a state and $\stack\in\Gamma^*$ a stack content. 
A configuration is \emph{initial} (resp. \emph{final}) if $q\in Q_i$ (resp. $q\in Q_f$)
and $\stack=\epsilon$.
For $a\in\Sigma\cup\overline\Sigma$, 
we write $(q,\stack)\xrightarrow{a} (q',\stack')$
if there is a transition $\vparule{q}{a}{\gamma}{q'}$ in $\Delta$
verifying $\stack' = \gamma\cdot\stack$ if $a\in\Sigma$,
and $\stack = \gamma\cdot\stack'$ if $a\in\overline\Sigma$.
We extend this notation to words, by writing 
$(q_0,\stack_0)\xrightarrow{a_1\cdots a_n} (q_n,\stack_n)$
whenever there exist configurations $(q_i,\stack_i)$ such that
$(q_{i-1},\stack_{i-1})\xrightarrow{a_i} (q_i,\stack_i)$ for all $1\le i\le n$.
From $u\in(\Sigma\cup\overline\Sigma)^*$ and 
the set of configurations $\ConfSet\subseteq Q\times\Gamma^*$,
we also define $\Post{u}{\ConfSet}$
as the set of configurations $(q',\stack')$ for which there exists
a configuration $(q,\stack)\in\ConfSet$ such that 
$(q,\stack) \xrightarrow{u} (q',\stack')$.

A \emph{run} of a \VPA $\AutA$ on a linearization 
$\lin t=a_1\cdots a_n$ of $t\in\T\Sigma$
is a sequence $(q_0,\stack_0)\cdots (q_n,\stack_n)$
of configurations $(q_i,\stack_i)$ such that 
$(q_0,\stack_0)$ is initial,
and for every $1\le i\le n$,
$(q_{i-1},\stack_{i-1}) \xrightarrow{a_i} (q_i,\stack_i)$.
Such a run is \emph{accepting} if $(q_n,\stack_n)$ is final.
A tree $t\in\T\Sigma$ is \emph{accepted} by $\AutA$ if there is an accepting run on its
linearization $\lin t$.
The set of accepted trees is called the \emph{language} of $\AutA$ and is written
$\La\AutA$.

\subsection{Universality and $u$-universality}
\label{sec:universality}

We conclude the preliminaries with the notions of universality and $u$-universality,
that we will study in the remainder of the paper. 

\begin{definition}
A tree automaton $\AutA$ over $\Sigma$
is said \emph{universal} if ${\AutA}$ accepts all trees  $t \in \T{\Sigma}$.
Let $u \neq \epsilon$ be a prefix of $\lin{t_0}$
for some tree $t_0\in\T\Sigma$. 
The tree automaton $\AutA$ is said \emph{$u$-universal} 
if for all trees $t \in \T{\Sigma}$, if $u$ is a prefix of $\lin{t}$, then 
$t$ is accepted by  ${\AutA}$.
\end{definition}

In other words, $u$-universality allows to assert that any tree linearization
beginning with $u$ is accepted by the automaton. The two previous definitions does not depend on the tree automaton $\AutA$ but only on the language $\La\AutA$. Therefore they are independent on the kind of tree automata that are used, as soon as they are equivalent. 

Our objective is to propose \emph{incremental} algorithms for $u$-universality,
in the following sense. 
The linearization $\lin{t_0}$ of a given tree $t_0$ is read letter by letter, 
and while $\AutA$ is not $u$-universal for 
the current read prefix $u$ of $\lin{t_0}$, 
the next letter of $\lin{t_0}$ is read. 
For instance Algorithm~\ref{algo:incremental-membership} shows how $u$-universality
is checked incrementally.
When processing a new letter, we try to reuse
prior computations as much as possible.
The automaton can be supposed to be not universal, 
otherwise it is $u$-universal for all such words $u$.

\begin{algorithm}
\begin{algorithmic}
\Function{Incremental-$u$-universality}{$\AutA$, $w$}
\State $i \gets 1$
\While{$i \le |w|$}
  \If {$\AutA$ is $w_1\cdots w_i$-universal}
    \State \Return True
  \EndIf
  \State $i \gets i+1$
\EndWhile
\State \Return False
\EndFunction
\end{algorithmic}
\caption{Checking $u$-universality incrementally
\label{algo:incremental-membership}}
\end{algorithm}

It has been shown in \cite{GauwinNiehrenTison09b} that 
$u$-universality is EXPTIME-complete for \VPAs, 
but in PTIME for deterministic \VPAs.
Determinization is in exponential time for \VPAs,
and our algorithms aim at avoiding this exponential blowup.

An incremental $u$-universality check as described in Algorithm~\ref{algo:incremental-membership} is very useful. First, given a tree $t_0$, it allows a streaming membership test of $t_0$ in $\AutA$: its linearization $\lin{t_0}$ is read letter by letter, and the algorithm declares as soon as possible whether $t_0$ is accepted by $\AutA$. Second, when a property (of XML documents for instance) is given by a tree automaton, then Algorithm~\ref{algo:incremental-membership} detects at the earliest position of $\lin{t_0}$ whether $t_0$ satisfies the property.

\section{Hedge automata approach}
\label{sec:hedge-approach}

We present algorithms for testing universality and $u$-universality 
of a non deterministic visibly pushdown automaton. The approach followed in this section
is based on a translation of the VPA into an hedge automaton. Algorithms with
several optimizations are then provided for checking universality and $u$-universality of hedge automata. 

\subsection{Hedge automata}

We present the standard notion of hedge automata 
\cite{BruggemannkleinMurataWood01,tata},
the usual automaton model for expressing properties on \XML documents.
Indeed, a hedge automaton resembles a DTD:
a DTD is a set of rules like $a \rightarrow b^+c$ saying that 
children of an $a$-node
must be a non empty sequence of $b$-nodes followed by a $c$-node.
Hedge automata are a bit more expressive than DTDs, in that
regular languages operate on states instead of labels,
enabling for instance to distinguish two kinds of $a$-nodes.

A \emph{hedge} $h$ over a finite alphabet $\Sigma$ is a sequence (empty or not)
of unranked trees over $\Sigma$.
The set of all hedges over $\Sigma$ is denoted by $\HdgSigma$.
For instance, given the trees
$t_1$ and $t_2$ from Example~\ref{def_ex_arbres}, the
sequence $t_1t_2t_1$ is a hedge.

\begin{definition}
A \emph{hedge automaton} over $\Sigma$
is a tuple $\AutA = (Q, \Sigma, Q_f, \Delta)$ where 
$Q$ is a finite set of states,
$Q_f \subseteq Q$ is the set of final states,
and $\Delta$ is a finite set of transition rules of  the following type:
$$(a, L, q)$$
where $a \in \Sigma$, $q \in Q$, and 
$L \subseteq Q^*$  is a regular  language over $Q$,
called a {\em horizontal language}.
\end{definition}

We denote by $\LhA$ the set of all horizontal languages of $\AutA$.
Note that  for every $a\in\Sigma$ and $q\in Q$,
we can assume that there is only one $L$ such that $(a,L,q)\in\Delta$.
Indeed, we can replace all rules $(a,L',q)$
by one rule $(a,L,q)$ where $L$ is the union of all such $L'$. 
A hedge automaton is \emph{deterministic} if for all pairs of rules
$(a,L_1,q_1)$ and $(a,L_2,q_2)$ we have $L_1\cap L_2=\emptyset$ or $q_1=q_2$.

A \emph{run} of $\AutA$ on a tree $t \in \T{\Sigma}$ is a tree $\run \in
\T{Q}$ with the same domain as $t$ such that for each node $p \in \Pos{r}$ and 
its $n$ children $p1, p2, \dots, pn$,
if $a = t(p)$ and $q = \run(p)$, then there is a rule $(a,L,q) \in \Delta$ 
with $\run(p1) \run(p2) \dots \run(pn) \in L$.
In particular,
to apply the rule $(a,L,q)$ at a leaf, the empty word $\epsilon$ has to belong to $L$.
Intuitively, a hedge automaton $\AutA$ operates in a bottom-up manner on a tree $t$: 
with a run $\run$, it assigns a state to each leaf, 
and then to each internal node, 
according to the states assigned to its children. 
We use notation
$t \racineRun{\AutA} q$ to indicate the existence of a run $\run$ on $t$ that 
labels the root of $t$ by the state $q$. Such a run $\run$ is \emph{accepting} if
$q$ is final, i.e. $\run(\Root)\in Q_f$. 
An unranked tree~$t$ is \emph{accepted} by $\AutA$ if there exists 
an accepting run on it.
The \emph{language}~$\La{\AutA}$ of $\AutA$ is the set of all unranked trees
accepted by $\AutA$.

\begin{exemple}
\label{exemple_aut}
Let $\AutA = (Q, \Sigma, Q_f, \Delta)$ be a hedge automaton over $\Sigma = \{ a, b, c \}$ with
$Q = \{ q_a, q_b, q_c, q_f \}$, $Q_f = \{ q_f \}$, and $\Delta = \{ 
(a, L_1, q_a),$ $(b, L_1, q_b),$ $(c, L_1, q_c),$
$(a, L_2, q_f),$
$(a, L_3, q_f),$ $(b, L_3, q_f),$ $(c, L_3, q_f) \}$ where 
$L_1 = Q^*$, $L_2 = q_bq_c$ and $L_3 = Q^*q_fQ^*$.

Let $t_1$ and $t_2$ the trees from Example \ref{def_ex_arbres}.
\Figure~\ref{fig_ex_runs} represents
a run $r_1$ of $\AutA$ on $t_1$ and two runs, $r_2$ and $r_3$,
of $\AutA$ on $t_2$. The runs $r_1$ and $r_2$ are not accepting, whereas $r_3$ is accepting.
The tree $t_1$ is not accepted by $\AutA$, whereas  $t_2$ is accepted by $\AutA$.
The language of $\AutA$ is the set of all trees having 
a subtree $s$ which root is an $a$-node and has
two children with $s(1) = b$ and $s(2) = c$.

\begin{figure}[h!]
\centering
\subfloat[Run $r_1$ of $\AutA$ on $t_1$]
{
\begin{tikzpicture}
[level distance=8mm, 
level 1/.style={sibling distance=8mm},
level 2/.style={sibling distance=4mm}, 
level 3/.style={sibling distance=4mm},
] 
\node {$q_c$}
child  {node  {$q_a$}}
child  {node  {$q_a$}}
child  {node  {$q_a$}}
child  {node  {$q_a$}}
child  {node  {$q_b$}
  child {node {$q_b$}}
  child {node {$q_b$}}};
\end{tikzpicture}
}
\hspace{10px}
\subfloat[Run $r_2$ of $\AutA$ on $t_2$]
{
\begin{tikzpicture}
[level distance=8mm, 
level 1/.style={sibling distance=15mm},
level 2/.style={sibling distance=6mm}, 
level 3/.style={sibling distance=4mm},
]
\node {$q_a$}
child  {node  {$q_a$}
  child {node {$q_b$}}
  child {node {$q_c$}
    child {node {$q_b$}}
    child {node {$q_c$}}}}
child  {node  {$q_b$}}
child {node {$q_a$}
  child {node {$q_a$}}
  child {node {$q_b$}}
  child {node {$q_c$}}
  child {node {$q_a$}}};
\end{tikzpicture}
}
\hspace{10px}
\subfloat[Run $r_3$ of $\AutA$  on $t_2$]
{
\begin{tikzpicture}
[level distance=8mm, 
level 1/.style={sibling distance=15mm},
level 2/.style={sibling distance=6mm}, 
level 3/.style={sibling distance=4mm},
] 
\node {$q_f$}
child  {node  {$q_f$}
  child {node {$q_b$}}
  child {node {$q_c$}
    child {node {$q_b$}}
    child {node {$q_c$}}}}
child  {node  {$q_b$}}
child {node {$q_a$}
  child {node {$q_a$}}
  child {node {$q_b$}}
  child {node {$q_c$}}
  child {node {$q_a$}}};
\end{tikzpicture}
}
\caption{Examples of runs}
\label{fig_ex_runs}
\end{figure}
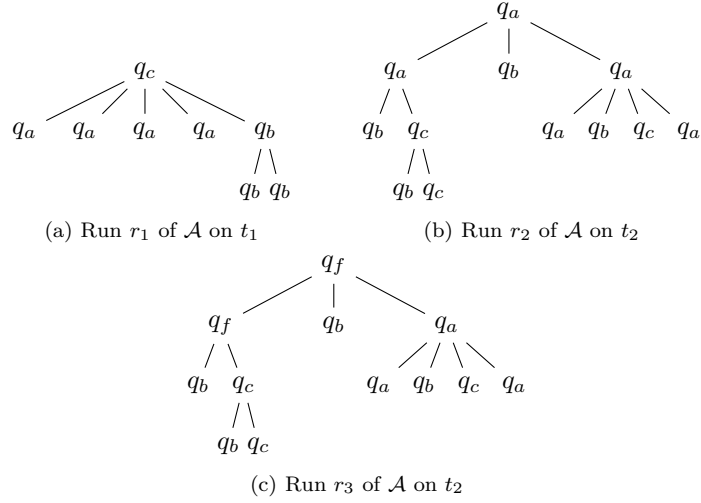
\end{exemple}

\subsection{From VPAs to hedge automata} 

In this section, we describe a translation of \VPAs into hedge automata, 
with the aim to transfer universality and $u$-universality testing of a \VPA 
to a hedge automaton.

\begin{theorem} \label{thm:VPAVersHedgeaut}
Let $\AutA$ be a \VPA. Then one can construct a hedge automaton $\AutA_H$ such that for all $t \in \TSigma$, $\lin{t} \in \La{\AutA}$ iff $t \in \La{\AutA_H}$.
\end{theorem}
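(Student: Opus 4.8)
The plan is to directly construct a hedge automaton $\AutA_H$ whose states record the \emph{state transformation} that the \VPA $\AutA$ performs while reading the linearization of a subtree. The key observation is that a \VPA reading $\lin{\subtree{t}{\Node}} = a\,\lin{\subtree{t}{\Node 1}}\cdots\lin{\subtree{t}{\Node n}}\,\ol a$ starts by pushing some $\gamma$ onto the stack (via a call rule on $a$), then processes the children's linearizations without ever touching the part of the stack below $\gamma$ (because the word is well-matched), and finally pops exactly that $\gamma$ (via a return rule on $\ol a$). Thus the net effect of reading a well-matched subtree linearization is a relation on states that is independent of the stack contents below. I would therefore take the states of $\AutA_H$ to be pairs $(p,q)\in Q\times Q$, with the intended meaning that $t \racineRun{\AutA_H}(p,q)$ holds iff $(p,\epsilon)\xrightarrow{\lin t}(q,\epsilon)$, i.e. reading $\lin t$ on empty stack drives $\AutA$ from $p$ to $q$. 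The final states of $\AutA_H$ are the pairs $(p,q)$ with $p\in Q_i$ and $q\in Q_f$, which captures exactly the acceptance condition ``final state on empty stack'' after an initial state.

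Next I would define the transition rules. For a label $a\in\Sigma$ and a target state $(p,q)$, I want a rule $(a,L_{(p,q)},(p,q))$ whose horizontal language $L_{(p,q)}\subseteq (Q\times Q)^*$ describes which sequences of children-states are compatible. A sequence $(p_1,q_1)\cdots(p_n,q_n)$ should lie in $L_{(p,q)}$ precisely when there is a call rule $\vparule{p}{a}{\gamma}{p_1}$, the children are chained so that $q_i = p_{i+1}$ for $1\le i<n$, and there is a matching return rule $\vparule{q_n}{\ol a}{\gamma}{q}$ with the \emph{same} stack symbol $\gamma$ that was pushed. The leaf case is handled by requiring, when $n=0$, that $\epsilon\in L_{(p,q)}$ iff there is a call $\vparule{p}{a}{\gamma}{p'}$ immediately followed by a matching return $\vparule{p'}{\ol a}{\gamma}{q}$. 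The central point is that $L_{(p,q)}$ is a regular language over $Q\times Q$: it is essentially the product, over the choice of $\gamma$, of a ``first component constrained to $p_1$'' condition, a chaining condition $q_i=p_{i+1}$ recognizable by a small finite automaton reading the child-states, and final/initial boundary conditions; all of these are regular, and there are finitely many choices of $\gamma$, so a finite union of regular languages remains regular.

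The correctness proof then splits into the two directions of the biconditional $t\racineRun{\AutA_H}(p,q) \iff (p,\epsilon)\xrightarrow{\lin t}(q,\epsilon)$, which I would prove simultaneously by induction on the height of $t$. The forward direction uses the run of $\AutA_H$ and the inductive hypothesis on each subtree to assemble a full \VPA run on $\lin t$: push $\gamma$, run through each child using that its net effect is $(p_i,\epsilon)\to(q_i,\epsilon)$ (and hence $(p_i,\stack)\to(q_i,\stack)$ for any $\stack$ by a stack-invariance lemma), chain them via $q_i=p_{i+1}$, and pop $\gamma$. The backward direction takes a \VPA run on $\lin t$, identifies the unique $\gamma$ pushed at the first step and popped at the last, uses well-matchedness to locate the intermediate configurations at the $n$ child-boundaries where the stack returns to $\gamma\cdot\epsilon$, reads off the pairs $(p_i,q_i)$, and invokes the inductive hypothesis to get a run of $\AutA_H$ on each child. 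Once the biconditional is established, the theorem follows by specializing to the acceptance conditions: $\lin t\in\La\AutA$ iff some run goes from an initial state on empty stack to a final state on empty stack, iff $t\racineRun{\AutA_H}(p,q)$ for some $(p,q)$ with $p\in Q_i,\,q\in Q_f$, iff $t\in\La{\AutA_H}$.

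I expect the main obstacle to be the backward direction of the induction, specifically the \textbf{stack-invariance lemma}: that a well-matched linearization never inspects or alters the stack below the symbol pushed at its opening call, so the computation on a child is genuinely independent of the surrounding context. This requires a careful argument (itself an induction on the structure of well-matched words) showing that at every intermediate position the stack height stays strictly above the base level until the final pop, so that the \VPA's behaviour on $\lin{\subtree{t}{\Node i}}$ depends only on its entry state and not on $\stack$. Making the bookkeeping of matching $\gamma$'s precise, and confirming that the intermediate boundary configurations have exactly the stack $\gamma\cdot\epsilon$, is where the real care is needed; the regularity of $L_{(p,q)}$ and the forward direction are comparatively routine.
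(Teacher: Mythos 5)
Your construction is essentially the paper's: states $Q\times Q$ recording the net state transformation of $\AutA$ on a subtree linearization, final states $Q_i\times Q_f$, horizontal languages enforcing the chaining $q_i=p_{i+1}$ together with a matching push/pop of the same $\gamma$, and the same height-induction on the invariant $t\racineRun{\AutA_H}(p,q)\iff(p,\epsilon)\xrightarrow{\lin t}(q,\epsilon)$ (the paper merely indexes its horizontal languages $L_{s,s'}$ by the post-call and pre-return states and puts the $\gamma$-matching in the rule's side condition rather than inside $L$, a purely presentational difference). The stack-invariance fact you flag as the main obstacle is exactly what the paper invokes when it passes from $(q_0,\epsilon)\xrightarrow{\lin h}(q_{n+1},\epsilon)$ to $(q_0,\gamma)\xrightarrow{\lin h}(q_{n+1},\gamma)$, so the proposal is correct and follows the same route.
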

 
\begin{proof}
Let $\AutA = (Q, \Sigma, \Gamma, Q_i, Q_f, \Delta)$ be a \VPA.
We define the hedge automaton $\AutA_{H} = (Q', \Sigma, Q'_f, \Delta')$
such that 
\begin{itemize}
\item $Q'  = Q \times Q$
\item $Q'_f = Q_i \times Q_f$
\item
$\Delta' = \{ (a, L_{s,s'}, (q,q')) \mid \exists \gamma \in \Gamma, 
\vparule{q}{a}{\gamma}{s} \in \Delta
\text{ and } 
\vparule{s'}{\ol{a}}{\gamma}{q'} \in \Delta  \}$
where $L_{s,s'} = \{(s,q_1)\cdot (q_1,q_2)\cdots (q_{n-1},q_n)\cdot (q_n,s') \mid n \geq 0,\ 
s,q_1, \dots, q_n, s' \in Q \} \cup K_{s,s'}$, and  $K_{s,s'} = \emptyset$ if $s \neq s'$,  $K_{s,s'} = \{\epsilon\}$ otherwise.
\end{itemize}

Notice that each language $L_{s,s'}$ is regular.
Let us prove for all $t \in \TSigma$ and $q,q' \in Q$ that:
$$
(q,\epsilon) \xrightarrow{\lin{t}} (q',\epsilon) 
~\iff~
t  \racineRun{\AutA_H} (q,q')
$$
As a consequence, we will have $\lin{t} \in \La{\AutA}$ iff $t \in \La{\AutA_H}$.

We proceed by induction on the height of $t$.
We begin with the basic case $\height{t} = 1$, i.e.  $t$ be a $a$-leaf for some $a \in \Sigma$.
Then $t  \racineRun{\AutA_H} (q,q')$ iff $\exists s \in Q, \gamma \in \Gamma$ such that $\vparule{q}{a}{\gamma}{s}\in \Delta$ and $\vparule{s}{\ol{a}}{\gamma}{q'}  \in \Delta$ (recall that $\epsilon \in L_{s,s}$). This is equivalent to $(q,\epsilon) \xrightarrow{\lin{t}} (q',\epsilon)$.

Let $i > 1$ and suppose that the property holds for all trees of height less than~$i$. Let $t$ be a tree of height $i$ such that $a = t(\epsilon)$ and the root has $n$ children.

Let $r$ be a run of $\AutA_H$ on $t$ such that $(q,q') = r(\epsilon)$. 
Then, by definition of $\AutA_H$, there exist $q_1, \cdots, q_{n+1} \in Q$ 
and $\gamma \in \Gamma$ such that 
$r(1) = (q_1,q_2)$, $r(2) = (q_2,q_3), \ldots, 
r(n) = (q_n,q_{n+1})$, $\vparule{q}{a}{\gamma}{q_0} \in \Delta$ 
and $\vparule{q_{n+1}} {\ol{a}}{\gamma}{q'} \in \Delta$.
We know by induction hypothesis that 
$(q_i,\epsilon)  \xrightarrow{\lin{\subtree{t}{i}}} (q_{i+1},\epsilon)$ 
for all $1 \le i \le n$. 
It follows that $(q_0,\epsilon)  \xrightarrow{\lin{h}} (q_{n+1},\epsilon)$ 
where $h = \subtree{t}{1}\subtree{t}{2}\cdots\subtree{t}{n}$. 
We have also $(q_0,\gamma)  \xrightarrow{\lin{h}} (q_{n+1},\gamma)$ 
since $h$ is an edge, and thus 
$(q,\epsilon) \xrightarrow{\lin{t} = a\lin{h}\overline{a}}  (q',\epsilon)$.

Suppose now that $(q,\epsilon) \xrightarrow{\lin{t}}  (q',\epsilon)$. So there exist $q_1, \dots, q_{n+1} \in Q$ and $\gamma \in \Gamma$ such that $\vparule{q}{a}{\gamma}{q_1}  \in \Delta$, $\vparule{q_{n+1}}{\ol{a}}{\gamma}{q'}  \in \Delta$, and $(q_i,\gamma)  \xrightarrow{\lin{\subtree{t}{i}}} (q_{i+1},\gamma)$ for all $i$.
By induction hypothesis, $\subtree{t}{i}  \racineRun{\AutA_H} (q_i,q_{i+1})$ for all $i$, and thus $t \racineRun{\AutA_H} (q,q')$.
\end{proof}

As a consequence of Theorem~\ref{thm:VPAVersHedgeaut}, universality and $u$-universality testing of a \VPA $\AutA$ is transfered to the hedge automaton $\AutA_H$.

\subsection{Checking universality}

A standard method to check universality of a hedge automaton
is to determinize it, complement it, and check for emptiness.
As determinization is in exponential time \cite{tata}, 
we propose in this section
an antichain-based algorithm for checking universality
without explicit determinization.

Such an algorithm has been proposed in \cite{BouajjaniHHTV08}
for finite (ranked) tree automata. In the context of hedge automata,
additional difficulties have to be solved due to the fact that
the accepted trees are unranked.

In our approach, the main idea is to find as fast as possible
one tree rejected by the hedge automaton (if it exists)
by performing a kind of bottom-up implicit determinization.
Antichains will limit the computations.

\subsubsection{Macrostates and $\PostOp$ operator} \label{sec:Post}

To test universality of a hedge automaton $\AutA$,
we have to check that all the trees of $\T{\Sigma}$  
belong to $\La{\AutA}$. 
Instead of working with trees we work with sets of
states, which are called \emph{macrostates}.
A macrostate is associated with each tree~$t$: it is 
the set of all the states $q$ labeling the root of a run of 
$\AutA$ on $t$, i.e. such that $t \racineRun{\AutA} q$. 
To compute the macrostates,
we make bottom-up computations by applying
a $\PostOp$ operator defined as follows.

\begin{definition}
Let $\AutAGeneral$ be a hedge automaton.
A \emph{macrostate} is a set of states $P  \subseteq Q$. A 
\emph{macrostate word} $\pi = P_1 P_2 \cdots P_n$, $n \ge 0$, is 
a word over the alphabet $2^Q$. We denote by $\wordsOf{\pi}$
the set $\{p_1p_2 \cdots p_n \mid p_i \in P_i, \forall i, 1 \leq i \leq n\}$.
Given $a \in \Sigma$ and $\pi$ a macrostate word, let
$$\Post{a}{\pi} =  \{q \in Q \mid \exists (a,L,q) 
\in \Delta: L \cap \wordsOf{\pi} \neq \emptyset \}$$
For $\macroSet \subseteq 2^Q$ a set of macrostates, let
$$\PostAll{\macroSet} = \{ \Post{a}{\pi} \mid a \in \Sigma, \pi \in
\macroSet^* \} \cup \macroSet$$
and $\PostAllStar{\macroSet} = \cup_{i \ge 0} 
\PostAlli{i}{\macroSet}$
such that $\PostAlli{0}{\macroSet} = \macroSet$, and for all
$i > 0$, $\PostAlli{i}{\macroSet} = \PostAll{\PostAlli{i-1}{\macroSet}}$.
\end{definition}

When $\pi = \epsilon$, $\Post{a}{\epsilon}$ is the set of all
states that can be assigned to an $a$-leaf of a tree, with $a \in \Sigma$. 
If an $a$-node has $n$ children to which the macrostates $P_1$, $\dots,$
$P_n$ have been assigned, then $\Post{a}{P_1 \cdots P_n}$ is the set
of all states that can be assigned to this node. The next lemma is immediate.

\begin{lemma}
\label{lem_macrostate_Pi}
Let $\AutAGeneral$ be a hedge automaton and $t \in \TSigma$ be such that
its root is an $a$-node with $n$ children.
Let $P_i = \{ q \in Q \mid \subtree{t}{i} \racineRun{\AutA} q \}$
for $1\le i\le n$.
Then 
$$\Post{a}{P_1 \cdots P_n} = \{ q \in Q \mid  t \racineRun{\AutA} q \}.$$
\end{lemma}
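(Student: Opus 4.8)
The plan is to unfold the definitions on both sides and show they describe exactly the same set of states. The statement says that the macrostate obtained by applying $\Posta{}$ to the sequence of children macrostates $P_1\cdots P_n$ coincides with the macrostate at the root, namely $\{q\mid t\racineRun{\AutA} q\}$. First I would fix notation: write $a=t(\Root)$, let the children of the root be the subtrees $\subtree{t}{1},\dots,\subtree{t}{n}$, and set $P_i=\{q\mid \subtree{t}{i}\racineRun{\AutA} q\}$ as in the statement. The whole argument is then a direct translation between the definition of a run of a hedge automaton and the definition of $\Posta{}$, with no induction needed — the induction is hidden in the fact that the $P_i$ are already the correct child macrostates by hypothesis.

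For the forward inclusion, I would take any $q$ with $t\racineRun{\AutA} q$, fix a witnessing run $\run$ with $\run(\Root)=q$, and let $q_i=\run(i)$ be the states it assigns to the children. By the definition of a run there is a rule $(a,L,q)\in\Delta$ with $q_1\cdots q_n\in L$, and each child-run restricted to $\subtree{t}{i}$ witnesses $\subtree{t}{i}\racineRun{\AutA} q_i$, so $q_i\in P_i$ for every $i$. Hence $q_1\cdots q_n\in\wordsOf{P_1\cdots P_n}$, giving $L\cap\wordsOf{P_1\cdots P_n}\neq\emptyset$, so by definition $q\in\Posta{P_1\cdots P_n}$. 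For the reverse inclusion I would reverse this reasoning: if $q\in\Posta{P_1\cdots P_n}$, there is a rule $(a,L,q)\in\Delta$ and a word $q_1\cdots q_n\in L\cap\wordsOf{P_1\cdots P_n}$; each $q_i\in P_i$ means there is a run on $\subtree{t}{i}$ labelling its root $q_i$, and assembling these child-runs with the root labelled $q$ yields a valid run of $\AutA$ on $t$ (the rule condition $q_1\cdots q_n\in L$ is exactly what licenses labelling the root $q$), so $t\racineRun{\AutA} q$.

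I do not expect any genuine obstacle, which is consistent with the paper calling the lemma "immediate." The only point requiring a little care is the handling of $n=0$, i.e. when the root is a leaf: then $P_1\cdots P_n$ is the empty macrostate word $\epsilon$, $\wordsOf{\epsilon}=\{\epsilon\}$, and the rule condition becomes $\epsilon\in L$, matching the convention stated in the run definition that applying $(a,L,q)$ at a leaf requires $\epsilon\in L$. The other mild subtlety is that assembling subruns into a global run, and conversely restricting a global run to subtrees, is sound precisely because a run of a hedge automaton is defined purely locally (each node's label must satisfy the rule condition against its children's labels), so there are no cross-subtree constraints to reconcile. Once these two bookkeeping remarks are made, the two inclusions close the proof.
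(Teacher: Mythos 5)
Your proof is correct and is exactly the definitional unfolding that the paper has in mind when it declares the lemma ``immediate'' and omits any argument: both inclusions follow from matching the run condition at the root against the definition of $\Post{a}{P_1\cdots P_n}$, with the $n=0$ case handled by the convention $\epsilon\in L$ at leaves. Your two bookkeeping remarks (the leaf case and the locality of the run condition, which makes splitting and reassembling runs sound) are precisely the points that make the lemma immediate, so nothing further is needed.
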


Given $\macroSet$ a set of macrostates, $\PostAll{\macroSet}$ is the
set of all macrostates that belong to $\macroSet$ or can be obtained via
$\Post{a}{\pi}$ with any letter $a \in \Sigma$, and any macrostate word
$\pi = P_1 P_2 \cdots P_n$ with $P_i \in \macroSet, \forall i$. More precisely, we have:

\begin{lemma} \label{lem:pointfixe}
\label{postall_i_emptyset}
Let $\AutAGeneral$ be a hedge automaton and $i \ge 1$.
A macrostate $P$ belongs to $\PostAlli{i}{\emptyset}$ iff
there exists a tree $t \in \TSigma$ with $\height{t} \le i$ such that
$P = \{ q \in Q \mid t \racineRun{\AutA} q \}$.
\end{lemma}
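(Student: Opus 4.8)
The plan is to proceed by induction on $i$, using Lemma~\ref{lem_macrostate_Pi} as the bridge between the operator $\PostOp$ and the bottom-up semantics $\racineRun{\AutA}$. A useful preliminary observation is that $\PostAll{\macroSet}\supseteq\macroSet$ for every set of macrostates $\macroSet$, by the explicit union $\cup\,\macroSet$ in its definition; hence $\PostAlli{i-1}{\emptyset}\subseteq\PostAlli{i}{\emptyset}$, which I will need in order to absorb trees of height strictly smaller than $i$ into $\PostAlli{i}{\emptyset}$. For the base case $i=1$, the only word in $\emptyset^*$ is the empty word, so $\PostAlli{1}{\emptyset}=\PostAll{\emptyset}=\{\Post{a}{\epsilon}\mid a\in\Sigma\}$. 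On the other side, a tree of height at most $1$ is exactly an $a$-leaf for some $a\in\Sigma$, and by Lemma~\ref{lem_macrostate_Pi} (in the degenerate case where applying $(a,L,q)$ at a leaf requires $\epsilon\in L$) its associated macrostate is precisely $\Post{a}{\epsilon}$. The two sides therefore coincide, settling $i=1$.

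For the inductive step I fix $i\ge 2$ and assume the claim for $i-1$. First I would treat the ``if'' direction: take a tree $t$ with $\height{t}\le i$ and macrostate $P=\{q\in Q\mid t\racineRun{\AutA}q\}$. If $\height{t}\le i-1$, the induction hypothesis gives $P\in\PostAlli{i-1}{\emptyset}$, and monotonicity gives $P\in\PostAlli{i}{\emptyset}$. Otherwise $\height{t}=i$; writing $a=t(\Root)$ and letting $\subtree{t}{1},\dots,\subtree{t}{n}$ be the subtrees rooted at its children, each has height at most $i-1$, so each macrostate $P_j=\{q\mid\subtree{t}{j}\racineRun{\AutA}q\}$ lies in $\PostAlli{i-1}{\emptyset}$ by the induction hypothesis. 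Then $\pi=P_1\cdots P_n$ is a word in $(\PostAlli{i-1}{\emptyset})^*$, and Lemma~\ref{lem_macrostate_Pi} yields $P=\Post{a}{\pi}$, hence $P\in\PostAll{\PostAlli{i-1}{\emptyset}}=\PostAlli{i}{\emptyset}$.

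For the converse, take $P\in\PostAlli{i}{\emptyset}=\PostAll{\PostAlli{i-1}{\emptyset}}$. By the definition of $\PostOp$, either $P\in\PostAlli{i-1}{\emptyset}$, in which case the induction hypothesis directly supplies a tree of height at most $i-1\le i$ with macrostate $P$; or $P=\Post{a}{\pi}$ for some $a\in\Sigma$ and some $\pi=P_1\cdots P_n$ with every $P_j\in\PostAlli{i-1}{\emptyset}$. In the latter case the induction hypothesis provides trees $t_j$ with $\height{t_j}\le i-1$ and $P_j=\{q\mid t_j\racineRun{\AutA}q\}$; forming the tree $t$ whose root is an $a$-node with subtrees $t_1,\dots,t_n$ gives $\height{t}\le i$, and Lemma~\ref{lem_macrostate_Pi} shows that its macrostate is exactly $\Post{a}{\pi}=P$. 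This closes the induction.

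The step I expect to require the most care is not any single deduction but the bookkeeping around heights and the degenerate cases. Concretely: the base case must correctly identify height-$\le 1$ trees with leaves and match them to $\Post{a}{\epsilon}$ through the empty macrostate word $\pi=\epsilon$ (the case $n=0$); and the inductive argument must exploit the monotonicity $\PostAlli{i-1}{\emptyset}\subseteq\PostAlli{i}{\emptyset}$ to cover trees of height strictly below $i$, since $\PostAlli{i}{\emptyset}$ is required to collect the macrostates of \emph{all} trees of height up to $i$, not merely those of height exactly $i$. Once these two points are handled, the remaining work is the routine translation between a tree of height $i$ and the pair consisting of its root label together with the macrostate word formed from its height-$\le(i-1)$ subtrees, which is exactly what Lemma~\ref{lem_macrostate_Pi} records.
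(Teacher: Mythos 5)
Your proof is correct and follows essentially the same route as the paper's: induction on $i$, with the base case identifying $\PostAlli{1}{\emptyset}$ with the macrostates of leaves via $\Post{a}{\epsilon}$, and the inductive step using Lemma~\ref{lem_macrostate_Pi} to translate between $\Post{a}{P_1\cdots P_n}$ and the macrostate of the tree assembled from (or decomposed into) its height-$\le i-1$ subtrees, with the monotonicity $\PostAlli{i-1}{\emptyset}\subseteq\PostAlli{i}{\emptyset}$ absorbing the shorter trees. No substantive differences.
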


\begin{proof}
We proceed by induction on $i$.

The basic case, $i = 1$, directly follows from
$\PostAlli{1}{\emptyset} = \{ \Post{a}{\epsilon} 
\mid a \in \Sigma \}$ and 
$\Post{a}{\epsilon} = \{ q \mid t \racineRun{\AutA} q \}$ with
$t$ being an $a$-leaf.

Let $i > 1$
and suppose that the property holds for all $j, 1 \le j < i$.

\noindent
($\Rightarrow$) Let $P \in \PostAlli{i}{\emptyset}$. If $P \in \PostAlli{i-1}{\emptyset}$,
then the property holds by induction hypothesis. Otherwise there exist $n \ge 0$,
$P_1, \dots, P_n \in \PostAlli{i-1}{\emptyset}$, and  $a \in \Sigma$,
such that $P = \Post{a}{P_1 \cdots P_n}$.
By induction hypothesis, $\forall k, 1 \le k \le n$, $\exists t_k \in \TSigma$
such that $\height{t_k} < i$ and 
$P_k = \{ q \mid t_k \racineRun{\AutA} q \}$. 
Let $t$ be the tree with the $a$-root and the $n$ subtrees $t_1, \dots, t_n$. Then $\height{t} \le i$
and $P = \{ q \mid t \racineRun{\AutA} q \}$ by Lemma \ref{lem_macrostate_Pi}.

\noindent
($\Leftarrow$) Let $t \in \TSigma$ with $\height{t} \leq i$ 
and $P = \{ q \in Q \mid t \racineRun{\AutA} q \}$. If $\height{t} < i$, then
by induction hypothesis $P \in \PostAlli{i-1}{\emptyset} \subseteq \PostAlli{i}{\emptyset}$. 
Otherwise let $a$ be the label of the root of $t$ and 
$\subtree{t}{1}, \dots, \subtree{t}{n}$ its $n$ subtrees.
Let $P_k = \{ q \in Q \mid \subtree{t}{k} \racineRun{\AutA} q \}$, $1 \le k \le n$.
As $\height{\subtree{t}{k}} < i$, we have by induction hypthesis that $P_k \in
\PostAlli{i-1}{\emptyset}$. By Lemma \ref{lem_macrostate_Pi},
$P = \Post{a}{P_1 \cdots P_n}$, and thus $P \in \PostAlli{i}{\emptyset}$.
\end{proof}

Given a tree $t \in \TSigma$ we define $\Pt{t}$ as the macrostate $\Pt{t} = \{ q \in Q \mid t \racineRun{\AutA} q \}$. 
More generally, given a hedge $h = t_1t_2 \cdots t_n \in \HdgSigma$ 
we denote by $\pih{h}$ the macrostate word 
$\pih{h} = P_{t_1}P_{t_2} \cdots P_{t_n}$.
The previous lemmas indicate that $\PostAllStar{\emptyset} = \{ \Pt{t} \mid t \in \TSigma\}$, and more generally that $(\PostAllStar{\emptyset})^* = \{ \pih{h} \mid h \in \HdgSigma\}$. 

The next proposition is an immediate consequence of Lemmas \ref{lem_macrostate_Pi} and \ref{postall_i_emptyset}.

\begin{proposition}
\label{prop_universality}
Let $\AutAGeneral$ be a hedge automaton. Then
$\AutA$ is universal iff $\forall P \in \PostAllStar{\emptyset}$,
$P \cap Q_f \neq \emptyset$.
\end{proposition}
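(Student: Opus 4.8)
The plan is to reduce universality to a statement about the macrostates $\Pt{t}$ and then to read off the result directly from the earlier lemmas, since nothing more than bookkeeping is required. The first step is to translate the acceptance condition into macrostate language. By definition, a tree $t$ is accepted by $\AutA$ exactly when there is a run $\run$ on $t$ with $\run(\Root) \in Q_f$, i.e. when $t \racineRun{\AutA} q$ for some final $q$. Recalling that $\Pt{t} = \{ q \in Q \mid t \racineRun{\AutA} q \}$ collects precisely the states appearing at the root across all runs, acceptance of $t$ is equivalent to $\Pt{t} \cap Q_f \neq \emptyset$. Hence $\AutA$ is universal if and only if $\Pt{t} \cap Q_f \neq \emptyset$ for every $t \in \TSigma$.

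The second step is to identify the set $\{ \Pt{t} \mid t \in \TSigma \}$ with $\PostAllStar{\emptyset}$. Since $\PostAllStar{\emptyset} = \cup_{i \ge 0} \PostAlli{i}{\emptyset}$, a macrostate $P$ lies in $\PostAllStar{\emptyset}$ iff $P \in \PostAlli{i}{\emptyset}$ for some $i$. The index $i = 0$ contributes nothing, as $\PostAlli{0}{\emptyset} = \emptyset$, so it suffices to consider $i \ge 1$. For those indices, Lemma \ref{postall_i_emptyset} states that $P \in \PostAlli{i}{\emptyset}$ iff $P = \Pt{t}$ for some tree $t$ with $\height{t} \le i$. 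Since every unranked tree has some finite height, quantifying over all $i \ge 1$ and all heights $\le i$ ranges over exactly the trees of $\TSigma$; this yields the equality $\PostAllStar{\emptyset} = \{ \Pt{t} \mid t \in \TSigma \}$, as already noted in the text following the lemmas.

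The final step simply combines the two: $\AutA$ is universal iff $\Pt{t} \cap Q_f \neq \emptyset$ for all $t \in \TSigma$, which by the set equality is iff $P \cap Q_f \neq \emptyset$ for all $P \in \PostAllStar{\emptyset}$. I do not expect any real obstacle here; the proof is a matter of chaining the reformulation of acceptance with the characterization of $\PostAllStar{\emptyset}$ supplied by Lemma \ref{postall_i_emptyset} (whose own proof, via Lemma \ref{lem_macrostate_Pi}, carries the genuine inductive content). The only points deserving a moment of care are the trivial $i = 0$ contribution and the observation that letting $i$ grow unboundedly lets the bound $\height{t} \le i$ reach every tree, so that no tree is missed in the correspondence.
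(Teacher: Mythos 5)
Your proof is correct and follows exactly the route the paper intends: the paper states the proposition as an immediate consequence of Lemmas~\ref{lem_macrostate_Pi} and~\ref{postall_i_emptyset}, having already noted that $\PostAllStar{\emptyset} = \{ \Pt{t} \mid t \in \TSigma\}$, and your two steps (acceptance of $t$ is equivalent to $\Pt{t} \cap Q_f \neq \emptyset$, plus that set equality) are precisely the bookkeeping being left implicit. Your care with the vacuous $i=0$ level and with every tree having finite height is a welcome, if minor, addition.
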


\subsubsection{Relations and universality algorithm} \label{universality-relations}

Our method for checking universality of a hedge automaton
is to compute $\PostAllStar{\emptyset}$ by
iteratively applying the $\PostOp$ operator. However
to get $\PostAll{\macroSet}$, we have to compute $\macroSet^*$ which is
an infinite set of macrostate words.
To circumvent this problem, we represent a macrostate word by a relation
as described below, with the advantage that the set of relations is now finite.

We first introduce some notation. Let $\AutAGeneral$ be a hedge automaton and 
$\LhA$ be the set of horizontal languages appearing in its transition rules.
We recall that these languages are regular.
Let $L \in \LhA$ and $\AutB_L$ be a (word) automaton over the alphabet $Q$ that accepts $L$.
Let $S_L$ be its set of states,  $I_L$ its set of initial states, and $F_L$ its set of final states.
We denote by $\BLh$ the automaton which is the disjoint union of all 
the automata $\AutB_L$ with $L \in \LhA$. Its set of states is denoted by 
$\SLh = \underset{L \in \LhA}{\bigcup} S_L$. 
A run in $\BLh$ from state $s \in \SLh$ to state $s' \in \SLh$
labeled by word $w \in Q^*$ is denoted by $s \pathLabeled{w} s'$.

\begin{definition}
Let $\AutAGeneral$ be a hedge automaton  and
$\pi$ a macrostate word. Then $\rel{\pi} \subseteq \SLh \times \SLh$ 
is the relation
$$ \rel{\pi} = \{ (s,s') \mid  s \pathLabeled{w} s' 
\text{ with } w \in \wordsOf{\pi} \}.$$
\end{definition}

In other words, if $\pi = P_1 \cdots P_n$ with $P_i \subseteq Q$ for all $i$, 
then $(s,s')$ belongs to $\rel{\pi}$ iff there is a path in $\BLh$ from $s$ to $s'$ 
that is labeled by a word $p_1 \cdots p_n \in \wordsOf{\pi}$. 
The notation $\relOp$ is naturally extended to sets $\mswSet$ of macrostate words as $\rel{\mswSet} = \{\rel{\pi} \mid \pi \in \mswSet\}$.

Notice there are finitely many relations $r \subseteq \SLh \times \SLh$ , since
$\SLh$ is a finite set. If $\relSet$ is a set of relations $r  \subseteq \SLh \times \SLh$, then
$\relSet^*$ denotes the set of all relations obtained by composing 
relations in $\relSet$:
$\relSet^* = \{r_1\circ r_2\circ\cdots \circ r_n \hmid 
n\ge 0 \text{ and } r_i\in\relSet \text{ for all } 1\le i\le n \}$. 
In particular $\relSet^*$ contains the identity relation $\vpaid_{\SLh}$ over $\SLh$,
obtained when $n=0$.

\begin{lemma}
\label{relSetStar-macroSetStar}
Let $\AutAGeneral$ be a hedge automaton. If $\macroSet$ a set of macrostates
and $\relSet$ a set of relations such that $\rel{\macroSet} = \relSet$, then
$\rel{\macroSet^*} = \relSet^*$.
\end{lemma}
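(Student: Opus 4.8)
## Proof Plan for Lemma~\ref{relSetStar-macroSetStar}

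The plan is to prove the two set inclusions $\rel{\macroSet^*} \subseteq \relSet^*$ and $\relSet^* \subseteq \rel{\macroSet^*}$ separately, with the central mechanism being the interaction between concatenation of macrostate words and composition of relations. The key structural fact I would isolate first is a \emph{homomorphism property}: for any two macrostate words $\pi$ and $\pi'$, we have $\rel{\pi \cdot \pi'} = \rel{\pi} \circ \rel{\pi'}$. This follows directly from the definition of $\rel{\cdot}$ together with the observation that $\wordsOf{\pi \cdot \pi'} = \wordsOf{\pi} \cdot \wordsOf{\pi'}$ (every word in the product splits uniquely at the boundary between the two blocks of macrostates), so a path in $\BLh$ labeled by a word of $\wordsOf{\pi\cdot\pi'}$ is exactly a path that factors through some intermediate state via a prefix in $\wordsOf{\pi}$ and a suffix in $\wordsOf{\pi'}$. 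I would also record the base case $\rel{\epsilon} = \vpaid_{\SLh}$, since the empty macrostate word labels only empty paths.

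With the homomorphism property in hand, the forward inclusion proceeds by induction on the number $n$ of factors. Take $r \in \rel{\macroSet^*}$, so $r = \rel{\pi}$ for some $\pi \in \macroSet^*$, meaning $\pi = \pi_1 \cdots \pi_n$ where each $\pi_i$ is a single macrostate in $\macroSet$ (i.e.\ a macrostate word of length one). By the homomorphism property, $\rel{\pi} = \rel{\pi_1} \circ \cdots \circ \rel{\pi_n}$. Each $\rel{\pi_i}$ lies in $\rel{\macroSet} = \relSet$ by hypothesis, so their composition lies in $\relSet^*$ by definition. The case $n=0$ gives $\rel{\epsilon} = \vpaid_{\SLh} \in \relSet^*$.

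The reverse inclusion is symmetric and uses the same machinery in the other direction. Take $r \in \relSet^*$, so $r = r_1 \circ \cdots \circ r_n$ with each $r_i \in \relSet = \rel{\macroSet}$. For each $i$ choose a macrostate $P_i \in \macroSet$ (a length-one macrostate word) with $r_i = \rel{P_i}$; such a choice exists precisely because $\rel{\macroSet} = \relSet$. Setting $\pi = P_1 \cdots P_n \in \macroSet^*$ and applying the homomorphism property gives $\rel{\pi} = r_1 \circ \cdots \circ r_n = r$, so $r \in \rel{\macroSet^*}$; the case $n=0$ again matches the identity relation.

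I expect the only real subtlety to be the careful handling of the word-splitting in the homomorphism property, in particular checking that $\wordsOf{\pi \cdot \pi'} = \wordsOf{\pi} \cdot \wordsOf{\pi'}$ holds \emph{exactly} (the length of each block is fixed, so concatenated words factor uniquely block-by-block), and being precise that the relevant macrostate words in the definition of $\macroSet^*$ are concatenations of \emph{length-one} words drawn from $\macroSet$, matching the way $\relSet^*$ composes single relations from $\relSet$. Everything else is a routine bookkeeping argument once the homomorphism lemma is established.
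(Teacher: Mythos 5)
Your proposal is correct and follows essentially the same route as the paper: the paper's proof also reduces the lemma to the identity $\rel{P_1\cdots P_n}=\rel{P_1}\circ\cdots\circ\rel{P_n}$, proved by splitting a labeled path in $\BLh$ at the block boundaries and recomposing, which is exactly your homomorphism property iterated. Your explicit treatment of the $n=0$ case and of $\wordsOf{\pi\pi'}=\wordsOf{\pi}\cdot\wordsOf{\pi'}$ is a slightly more careful write-up of the same argument.
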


\begin{proof}
Let us prove that for any macrostate word $\pi = P_1 \cdots P_n$,
$\rel{\pi} = \rel{P_1} \circ\cdots \circ \rel{P_n}$ ; the lemma is an immediate
consequence.

Let $(s, s') \in \rel{P_1 \cdots P_n}$, that is, $\exists w = p_1 \cdots p_n \in \wordsOf{\pi} :
s \pathLabeled{w} s'$. Let $s = s_1, s_2, \cdots, s_n, s_{n+1} =  s' \in \SLh$ 
be such that $s_i \pathLabeled{p_i} s_{i+1}$ for all $i$.
As $p_i \in P_i$ and $(s_i, s_{i+1}) \in \rel{P_i}$, it follows that $(s,s') \in
\rel{P_1} \circ \cdots \circ \rel{P_n}$.

Conversely, let $(s,s') \in \rel{P_1} \circ \cdots \circ \rel{P_n}$. 
Let $s = s_1, s_2, \cdots, s_n, s_{n+1} =  s' \in \SLh$ be such that
$(s_i,s_{i+1})  \in \rel{P_i}$ for all $i$.
By definition, for all $i$, there exists $p_i \in P_i$ such that  
$s_i \pathLabeled{p_i} s_{i+1}$.
So for $w = p_1 \cdots p_n$, we have  $s_1 \pathLabeled{w} s_{n+1}$ 
showing that $(s, s') \in \rel{P_1 \cdots P_n}$.
\end{proof}

The $\PostOp$ operator is adapted to relations in the following way.

\begin{definition}
Let $\AutAGeneral$ be a hedge automaton, $r \subseteq \SLh \times \SLh$ a 
relation,
and $a \in \Sigma$ a letter. Then
$$\Posta{r} =  \{ q \in Q \mid \exists (a,L,q) 
\in \Delta, \exists (s, s') \in r:
s \in I_L \text{ and } s' \in F_L \}. $$
\end{definition}

\begin{lemma}
\label{post_a_rel_pi}
Let $a \in \Sigma$ 
and $\pi$ be a macrostate word, then $\Posta{\pi} = \Posta{\rel{\pi}}$.
\end{lemma}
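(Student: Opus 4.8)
The plan is to unfold both definitions and show that each membership condition for $q$ reduces to the same statement about accepting runs in the component automata $\AutB_L$. Fix $a \in \Sigma$ and a macrostate word $\pi$. By definition, $q \in \Posta{\pi}$ holds iff there is a rule $(a,L,q) \in \Delta$ with $L \cap \wordsOf{\pi} \neq \emptyset$, whereas $q \in \Posta{\rel{\pi}}$ holds iff there is a rule $(a,L,q) \in \Delta$ together with a pair $(s,s') \in \rel{\pi}$ such that $s \in I_L$ and $s' \in F_L$. Since the quantification over rules $(a,L,q) \in \Delta$ is common to both, it suffices to prove, for a fixed horizontal language $L$, the equivalence between $L \cap \wordsOf{\pi} \neq \emptyset$ and the existence of a pair $(s,s') \in \rel{\pi}$ with $s \in I_L$ and $s' \in F_L$.

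For the forward direction I would take $w \in L \cap \wordsOf{\pi}$. Since $\AutB_L$ accepts $L$, the word $w \in L$ yields an accepting run, i.e.\ states $s \in I_L$ and $s' \in F_L$ with $s \pathLabeled{w} s'$ in $\AutB_L$, hence also in $\BLh$ as $\AutB_L$ is one of its components. Because $w \in \wordsOf{\pi}$, this run witnesses $(s,s') \in \rel{\pi}$, giving the desired pair.

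For the converse, starting from $(s,s') \in \rel{\pi}$ with $s \in I_L$ and $s' \in F_L$, the definition of $\rel{\pi}$ provides some $w \in \wordsOf{\pi}$ with a run $s \pathLabeled{w} s'$ in $\BLh$. The point to justify carefully is that this run stays inside the single component $\AutB_L$: since $\BLh$ is the \emph{disjoint} union of the automata $\AutB_{L'}$, there are no transitions between states of distinct components, so any run starting at $s \in S_L$ visits only states of $S_L$. Thus $s \pathLabeled{w} s'$ is in fact a run of $\AutB_L$, and with $s \in I_L$, $s' \in F_L$ this gives $w \in L$, whence $w \in L \cap \wordsOf{\pi}$ and the intersection is nonempty.

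The only subtlety, and the single place where the argument uses more than a bare unfolding of definitions, is this disjointness observation in the converse direction; every other step is a direct translation between ``$w \in L$'' and ``$\AutB_L$ has an accepting run on $w$.''
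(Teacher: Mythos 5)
Your proof is correct and follows essentially the same route as the paper: the paper's proof is a chain of equalities unfolding both definitions, with the middle step being exactly your equivalence between $L \cap \wordsOf{\pi} \neq \emptyset$ and the existence of $(s,s')\in\rel{\pi}$ with $s\in I_L$ and $s'\in F_L$. You merely make explicit the disjointness argument ensuring the run stays in $\AutB_L$, which the paper leaves implicit.
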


\begin{proof}
For  $a \in \Sigma$
and $\pi$ a macrostate word, we have
\begin{eqnarray} \nonumber
\Post{a}{\pi} 
&=&  \{q \in Q \mid \exists (a,L,q) 
\in \Delta: L \cap \wordsOf{\pi} \neq \emptyset \} \\ \nonumber
&=& \{ q \in Q \mid \exists (a,L,q) \in \Delta,
\exists s, s' \in \SLh, \exists w \in \wordsOf{\pi} : s \pathLabeled{w} s', s \in I_L
\text{ and } s' \in F_L \} \\ \nonumber
&=& \{ q \in Q \mid \exists (a,L,q) 
\in \Delta, \exists (s, s') \in \rel{\pi} :
s \in I_L \text{ and } s' \in F_L \} \\ \nonumber
&=& \Post{a}{\rel{\pi}}.
\end{eqnarray}
\end{proof}

\begin{lemma} \label{lem:calcul de post avec rel}
\label{post_all_relSet}
Let $\macroSet$ be a set of macrostates,
then $\PostAll{\macroSet} = \{ \Post{a}{r} \mid a \in \Sigma, r \in
\rel{\macroSet}^* \} \cup \macroSet$.
\end{lemma}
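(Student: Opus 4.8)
The plan is to prove the equality
$$
\PostAll{\macroSet} = \{ \Post{a}{r} \mid a \in \Sigma,\ r \in \rel{\macroSet}^* \} \cup \macroSet
$$
by combining the definition of $\PostAll{\macroSet}$ with the two preceding lemmas, so that the main work is purely bookkeeping about how macrostate words correspond to relations. Recall that by definition $\PostAll{\macroSet} = \{ \Post{a}{\pi} \mid a \in \Sigma,\ \pi \in \macroSet^* \} \cup \macroSet$. Thus, after discarding the common summand $\macroSet$ from both sides, it suffices to establish the set equality
$$
\{ \Post{a}{\pi} \mid a \in \Sigma,\ \pi \in \macroSet^* \}
= \{ \Post{a}{r} \mid a \in \Sigma,\ r \in \rel{\macroSet}^* \}.
$$

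First I would use Lemma~\ref{post_a_rel_pi}, which states $\Posta{\pi} = \Posta{\rel{\pi}}$ for every macrostate word $\pi$, to rewrite each value $\Post{a}{\pi}$ on the left as $\Post{a}{\rel{\pi}}$. This turns the left-hand set into $\{ \Post{a}{\rel{\pi}} \mid a \in \Sigma,\ \pi \in \macroSet^* \}$, so that every element is now indexed by a relation of the form $\rel{\pi}$ with $\pi \in \macroSet^*$. The goal then reduces to showing that the relations arising this way are exactly the relations in $\rel{\macroSet}^*$; once the index sets coincide, the two sets of $\Post{a}{\cdot}$ values coincide as well, ranging over the same $a \in \Sigma$.

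For that last point I would invoke Lemma~\ref{relSetStar-macroSetStar}. Applying it with $\relSet = \rel{\macroSet}$ (so that the hypothesis $\rel{\macroSet} = \relSet$ holds trivially) gives $\rel{\macroSet^*} = \rel{\macroSet}^*$. By the definition of $\rel{\cdot}$ extended to sets of macrostate words, $\rel{\macroSet^*} = \{ \rel{\pi} \mid \pi \in \macroSet^* \}$, so this is precisely the statement that the collection of relations $\{\rel{\pi} \mid \pi \in \macroSet^*\}$ equals $\rel{\macroSet}^*$. Substituting this identity of index sets into the rewritten left-hand set immediately yields $\{ \Post{a}{r} \mid a \in \Sigma,\ r \in \rel{\macroSet}^* \}$, completing the chain of equalities.

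I do not anticipate a genuine obstacle here, since the proposition is essentially an assembly of two lemmas already available. The only point demanding a little care is to avoid confusing equality of the families of \emph{values} $\Post{a}{\pi}$ with equality of the families of \emph{indices} $\rel{\pi}$: a priori distinct macrostate words may yield the same relation, but this causes no trouble because we are comparing sets, and the map $\pi \mapsto \rel{\pi}$ need only be checked to have image exactly $\rel{\macroSet}^*$, which is what Lemma~\ref{relSetStar-macroSetStar} provides. Hence the argument is a direct substitution, and I would present it as a short displayed chain of equalities rather than a two-inclusion argument.
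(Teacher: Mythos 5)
Your proposal is correct and follows essentially the same route as the paper: unfold the definition of $\PostAll{\macroSet}$, apply Lemma~\ref{post_a_rel_pi} to replace $\Post{a}{\pi}$ by $\Post{a}{\rel{\pi}}$, and then use Lemma~\ref{relSetStar-macroSetStar} (with $\relSet=\rel{\macroSet}$) to identify the index set $\{\rel{\pi}\mid\pi\in\macroSet^*\}$ with $\rel{\macroSet}^*$. Your extra remark that only the \emph{image} of $\pi\mapsto\rel{\pi}$ matters is a sound clarification of a point the paper leaves implicit.
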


\begin{proof}
By definition, $\PostAll{\macroSet} = 
\{ \Post{a}{\pi} \mid a \in \Sigma, \pi \in
\macroSet^* \}$. By Lemma \ref{post_a_rel_pi}, this set is equal to 
$\{ \Post{a}{\rel{\pi}} \mid a \in \Sigma, \pi
\in \macroSet^* \}$ which is equal to 
$\{ \Post{a}{r} \mid a \in \Sigma, 
r \in \rel{\macroSet}^* \}$ by Lemma \ref{relSetStar-macroSetStar}.
\end{proof}

We are now able to propose an algorithm to check universality of hedge automata.
With Algorithm \ref{algo:universality}, the set $\PostAllStar{\emptyset}$ is computed incrementally 
and the universality test is performed thanks to Proposition~\ref{prop_universality}. 
More precisely, at step $i$, variable $\macroSet$ is used for $\PostAlli{i}{\emptyset}$ 
and variable $\relSet^*$ is used for $\rel{\macroSet}^*$. We compute $\relSet^*$ with Function \textproc{CompositionClosure}, and then possible new macrostates with $\{ \Post{a}{r} \mid a \in  \Sigma, r \in \relSet^* \}$. The algorithm stops when no new macrostate is found or the hedge automaton is declared not universal.

\begin{algorithm}
\begin{algorithmic}
\Function{Universality}{$\AutA$}
\State $\macroSet \gets \emptyset$
\State $\relSet^* \gets \{\vpaid_{\SLh}\}$
\Repeat
    \State $\macroSet_{new} \gets \{ \Post{a}{r} \mid a \in  \Sigma, r \in \relSet^* \}$
    \If {$\exists P \in \macroSet_{new}: P \cap F = \emptyset$}
        \State \Return False \quad // Not universal
    \EndIf
     \State $\relSet' \gets \rel{\macroSet_{new} \setminus \macroSet} \setminus \relSet^*$
     \If  {$\relSet' \neq \emptyset$}
         \State $\macroSet \gets \macroSet \cup \macroSet_{new}$
         \State $\relSet^* \gets \textproc{CompositionClosure}(\relSet^*,\relSet')$
    \EndIf
\Until{$\relSet' = \emptyset$}
\State \Return True \quad // Universal
\EndFunction
\end{algorithmic}
\caption{Checking universality
\label{algo:universality}}
\end{algorithm}

Let us detail Function $\textproc{CompositionClosure}(\relSet^*, \relSet')$ which computes the set  $(\relSet^* \cup \relSet')^*$. In Algorithm~\ref{algo:composition}, we show how to compute $(\relSet^* \cup \relSet')^*$ given the inputs $\relSet^*$ and $\relSet'$, without recomputing $\relSet^*$ from $\relSet$. Initially, \emph{Relations} is equal to $\relSet^* \cup \relSet'$ and will be equal to $(\relSet^* \cup \relSet')^*$ at the end of the computation.  \emph{ToProcess} contains the relations that can produce new relations by composition with an element of \emph{Relations}.

\begin{algorithm}
\newcommand{\allfuncs}{{\it Relations}}
\newcommand{\toprocess}{{\it ToProcess}}
\newcommand{\newfuncs}{{\it NewRelations}}
\newcommand{\compose}{{\it compose}}
\newcommand{\addroot}{{\it add\_root}}
\newcommand{\func}{{\it rel}}
\newcommand{\compositions}{{\it compositions}}

\begin{algorithmic}
\Function{CompositionClosure}{$\relSet^*,\relSet'$}
    \State $\allfuncs \gets \relSet^* \cup \relSet'$
    \State $\toprocess \gets \relSet'$
    \While{$\toprocess \not= \emptyset$}
        \State $\func \gets \textproc{Pop}(\toprocess)$
        \State $\newfuncs \gets \emptyset$
        \For {$r\in\allfuncs$}
            \State $\newfuncs \gets 
            \newfuncs \cup \{ r \circ \func, \func \circ r \}$
        \EndFor
        \State $\toprocess \gets \toprocess \cup (\newfuncs \setminus \allfuncs)$
        \State $\allfuncs \gets \allfuncs \cup \newfuncs$
    \EndWhile
    \State \Return \allfuncs
\EndFunction
\end{algorithmic}
\caption{Computing $(\relSet^* \cup \relSet')^*$
\label{algo:composition}}
\end{algorithm}

\begin{proposition}
Given $\relSet^*$ and $\relSet'$, Algorithm~\ref{algo:composition} computes $(\relSet^* \cup \relSet')^*$.
\end{proposition}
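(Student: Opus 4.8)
The plan is to establish termination together with two inclusions, \emph{soundness} and \emph{completeness}, for the value of \emph{Relations} returned by the algorithm. Throughout, write $A$ and $T$ for the current contents of \emph{Relations} and \emph{ToProcess}, and $f$ for the relation popped at the start of an iteration; recall from the surrounding construction that $\relSet^*$ is closed under composition and contains $\vpaid_{\SLh}$. Termination is immediate: every relation manipulated lies in the finite set of binary relations over $\SLh$, the variable $A$ never shrinks, and an element is pushed onto $T$ only when it is simultaneously a genuinely new member of $A$ (it belongs to \emph{NewRelations} but not yet to \emph{Relations}). Hence only finitely many relations are ever inserted into $T$, while each iteration removes one, so the \textbf{while}-loop halts.

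For soundness I would maintain the invariant $A \subseteq (\relSet^* \cup \relSet')^*$. It holds initially, since $A = \relSet^* \cup \relSet'$ is contained in its own composition closure. Each iteration enlarges $A$ only by relations of the form $r \circ f$ or $f \circ r$ with $r, f \in A$; because $(\relSet^* \cup \relSet')^*$ is closed under composition, the invariant survives. Thus at termination $A \subseteq (\relSet^* \cup \relSet')^*$.

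Completeness is the delicate part, and the crux is choosing the right invariant: I would prove $T \subseteq A$ together with the statement that $r_1 \circ r_2 \in A$ for all $r_1, r_2 \in A \setminus T$, i.e. every pair of already-processed relations has its composition present. Initially $A \setminus T = \relSet^* \setminus \relSet'$ is contained in $\relSet^*$, and closure of $\relSet^*$ gives the claim. For the inductive step, a short set computation shows that after popping $f$ and adding the relations $\{\, r \circ f,\ f \circ r \mid r \in A \,\}$ to $A$ and the new ones to $T$, the updated value of $A \setminus T$ is exactly the previous $A \setminus T$ together with $f$. Any composition of two members of this set is then either covered by the previous invariant (both factors lying in the old $A \setminus T$) or equals $f \circ r$, $r \circ f$, or $f \circ f$ for some $r \in A$, and each such composition was just inserted into $A$. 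At termination $T = \emptyset$, so the invariant states precisely that $A$ is closed under composition; since moreover $A \supseteq \relSet^* \cup \relSet'$ and $\vpaid_{\SLh} \in \relSet^* \subseteq A$, the set $A$ contains all generators, is closed under composition, and hence satisfies $(\relSet^* \cup \relSet')^* \subseteq A$. Combined with soundness, this gives $A = (\relSet^* \cup \relSet')^*$, completing the proof.
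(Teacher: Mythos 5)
Your proof is correct, but it takes a genuinely different route from the paper's. The paper argues by contradiction over the completed run: it assumes some $r\in(\relSet^*\cup\relSet')^*$ is missing from the output, writes $r=r'_2\circ r'_1$ with both factors already in \emph{Relations} (a ``wlog'' that implicitly hides an induction on the length of the composition), observes that at least one factor must have passed through \emph{ToProcess}, and derives a contradiction at the moment the last of the two is popped. You instead run a forward loop-invariant argument: \emph{ToProcess} $\subseteq$ \emph{Relations}, and the already-processed relations (those in \emph{Relations} $\setminus$ \emph{ToProcess}) are pairwise composition-closed; your set computation showing that popping $f$ turns the processed set into its old value plus $\{f\}$ is the key step and checks out, and at termination the empty \emph{ToProcess} yields closure under binary composition, whence the full closure follows since the generators and $\vpaid_{\SLh}$ are present. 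Your version buys two things the paper's does not make explicit: a termination argument (the paper omits one entirely), and a clean replacement of the ``wlog'' decomposition by induction on the invariant, which is arguably the more rigorous of the two. The paper's proof is shorter and localizes the argument to a single hypothetical missing element, but both establish the same double inclusion.
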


\begin{proof}
Let \emph{Relations} be the set computed by Algorithm~\ref{algo:composition}. Clearly, $\text{\emph{Relations}} \subseteq (\relSet^* \cup \relSet')^*$. Assume by contradiction there exists $r$ that belongs to  $(\relSet^* \cup \relSet')^* \setminus \text{\emph{Relations}}$. Then $r \not\in \relSet^* \cup \relSet'$ and we can suppose wlog that $r = r'_2 \circ r'_1$ with $r'_1, r'_2 \in \text{\emph{Relations}}$. Notice that at least one element among $r'_1, r'_2$ has been added to \emph{ToProcess} during the execution of Algorithm~\ref{algo:composition}, since otherwise $r'_1, r'_2 \in \relSet^*$ and then $r \in \relSet^*$. If $r'_1$ is the last one (among $r'_1, r'_2$) to be popped from \emph{ToProcess}, then the relation $r'_2 \circ r'_1$ is added to \emph{NewRelations}, which leads to a contradiction. The conclusion is similar if $r'_2$ is is the last one to be popped.
\end{proof}

\subsubsection{Antichain-based optimization} \label{sec:antichain-univ}

In this section we explain how to use the
concept of antichain for saving computations.
We show that it is sufficient to only compute
the $\subseteq$-minimal elements of $\PostAllStar{\emptyset}$
for checking universality.

Consider the set $2^Q$ of all macrostates, with the $\subseteq$ operator. 
An \emph{antichain} $\macroSet$ of macrostates is a set of 
pairwise incomparable macrostates with respect to $\subseteq$. 
Given a set $\macroSet$ of macrostates, we denote by 
$\lfloor \macroSet \rfloor$ the \emph{$\subseteq$-minimal} elements of 
$\macroSet$, similarly we denote by $\lceil \macroSet \rceil$ 
the \emph{$\subseteq$-maximal} elements of $\macroSet$. 
A set $\macroSet$ of macrostates is \emph{$\subseteq$-upward closed} 
(resp. \emph{$\subseteq$-downward closed}) 
if for all $P \in \macroSet$ and $P \subseteq P'$ (resp. $P' \subseteq P$), 
we have $P' \in \macroSet$.
The same notions can be defined for a set of relations (instead of macrostates). 

\begin{definition}
Let $\AutAGeneral$ be a hedge automaton. Let $\macroSet \subseteq 2^Q$ be a set of macrostates, let 
$$\MinPostAll{\macroSet} =  \lfloor \PostAll{\macroSet} \rfloor$$
and $\MinPostAllStar{\macroSet} = \cup_{i \ge 0} 
\MinPostAlli{i}{\macroSet}$
such that $\MinPostAlli{0}{\macroSet} =  \lfloor \macroSet \rfloor$, and for all
$i > 0$, $\MinPostAlli{i}{\macroSet} = \MinPostAll{\MinPostAlli{i-1}
{\macroSet}}$.
\end{definition}

\begin{lemma}
\label{lemma_postallstar_and_subset_order}
Given $\macroSet$ a set of macrostates, for all $P \in \PostAllStar{\macroSet}$, there exists $P' \in
\MinPostAllStar{\macroSet}$ such that $P' \subset P$.
\end{lemma}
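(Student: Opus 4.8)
The plan is to prove, by induction on $i \ge 0$, the following level-by-level strengthening: for every $P \in \PostAlli{i}{\macroSet}$ there exists $P' \in \MinPostAlli{i}{\macroSet}$ with $P' \subseteq P$ (here I read the $\subset$ of the statement as non-strict containment). The lemma follows at once, since any $P \in \PostAllStar{\macroSet}$ lies in some $\PostAlli{i}{\macroSet}$, and the corresponding $P'$ then lies in $\MinPostAlli{i}{\macroSet} \subseteq \MinPostAllStar{\macroSet}$.

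First I would isolate the \emph{monotonicity} of the $\PostOp$ operator, which is the engine of the argument: if $\pi = P_1 \cdots P_n$ and $\pi' = P_1' \cdots P_n'$ are macrostate words of the same length with $P_j' \subseteq P_j$ for every $j$, then $\Post{a}{\pi'} \subseteq \Post{a}{\pi}$ for every $a \in \Sigma$. This holds because $\wordsOf{\pi'} \subseteq \wordsOf{\pi}$ directly from the definition of $\wordsOf{\cdot}$, so any rule $(a,L,q) \in \Delta$ witnessing $q \in \Post{a}{\pi'}$ (i.e. with $L \cap \wordsOf{\pi'} \ne \emptyset$) also witnesses $q \in \Post{a}{\pi}$. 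I will also use repeatedly that $2^Q$ is finite, hence $\macroSet$ and all sets occurring below are finite, so that below any element $P$ of a set of macrostates there is always a $\subseteq$-minimal element $P' \subseteq P$ of that set.

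The base case $i = 0$ is immediate: $\PostAlli{0}{\macroSet} = \macroSet$, $\MinPostAlli{0}{\macroSet} = \lfloor \macroSet \rfloor$, and the existence of a minimal $P' \subseteq P$ is exactly the finiteness remark. For the inductive step I would unfold $\PostAlli{i}{\macroSet} = \PostAll{\PostAlli{i-1}{\macroSet}}$ and split according to the two parts of the definition of $\PostAll$. If $P \in \PostAlli{i-1}{\macroSet}$, the induction hypothesis gives $P'' \in \MinPostAlli{i-1}{\macroSet}$ with $P'' \subseteq P$; since $\MinPostAlli{i-1}{\macroSet} \subseteq \PostAll{\MinPostAlli{i-1}{\macroSet}}$ (the inflationary clause $\cup\,\macroSet$), a $\subseteq$-minimal element below $P''$ in $\lfloor \PostAll{\MinPostAlli{i-1}{\macroSet}} \rfloor = \MinPostAlli{i}{\macroSet}$ is the desired $P'$. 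If instead $P = \Post{a}{\pi}$ with $\pi = P_1 \cdots P_n$ and each $P_j \in \PostAlli{i-1}{\macroSet}$, the induction hypothesis provides $P_j' \in \MinPostAlli{i-1}{\macroSet}$ with $P_j' \subseteq P_j$; setting $\pi' = P_1' \cdots P_n'$, monotonicity gives $\Post{a}{\pi'} \subseteq \Post{a}{\pi} = P$, and since $\Post{a}{\pi'} \in \PostAll{\MinPostAlli{i-1}{\macroSet}}$, taking a minimal element of $\MinPostAlli{i}{\macroSet}$ below it yields $P' \subseteq \Post{a}{\pi'} \subseteq P$.

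The delicate point, and the only place where care is needed, is the bookkeeping in the inductive step: one must not confuse $\lfloor \PostAll{\PostAlli{i-1}{\macroSet}} \rfloor$ with $\MinPostAlli{i}{\macroSet} = \lfloor \PostAll{\MinPostAlli{i-1}{\macroSet}} \rfloor$, since the minimisation is applied to $\PostAll$ of the already-minimised set rather than of the full one. The two facts that reconcile them are precisely (i) monotonicity, which lets one replace each $P_j$ by a minimal $P_j'$ while only shrinking the output, and (ii) the inflationary property $\macroSet \subseteq \PostAll{\macroSet}$, which guarantees that each element of $\MinPostAlli{i-1}{\macroSet}$ survives into $\PostAll{\MinPostAlli{i-1}{\macroSet}}$ and hence dominates some element of $\MinPostAlli{i}{\macroSet}$. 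Everything else is routine.
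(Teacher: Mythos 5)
Your proof is correct and follows essentially the same route as the paper's: induction on the iteration index $i$, driven by monotonicity of the $\PostOp$ operator under componentwise inclusion together with the inflationary clause $\macroSet\subseteq\PostAll{\macroSet}$, which is exactly the sketch the paper gives. The only cosmetic difference is that the paper states monotonicity at the level of relations over $\SLh$ (composition preserves inclusion) whereas you state it directly on macrostate words via $\wordsOf{\pi'}\subseteq\wordsOf{\pi}$; your reading of $\subset$ as $\subseteq$ is also the intended one.
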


\begin{proof}
The proof is done by induction on $i$ such that $\PostAllStar{\macroSet} = \cup_{i \ge 0} 
\PostAlli{i}{\macroSet}$, and on the next two observations:
\begin{itemize}
\item Given $a \in \Sigma$, and $r, r'$ two relations over $\SLh$, if 
$r \subseteq r'$ then $\Post{a}{r} \subseteq \Post{a}{r'}$.
\item Let $r_1, \cdots, r_n, r_1',
\cdots, r_n'$ be relations over $\SLh$, if  $r_i \subseteq r_i', \forall 1 \le i \le n$, then
$r_1 \circ \cdots \circ r_n \subseteq r_1' \circ \cdots \circ r_n'$.
\end{itemize}
\end{proof}

Notice that thanks to Lemma~\ref{lem:calcul de post avec rel}, given an antichain of macrostates $\macroSet$, we can compute $\MinPostAll{\macroSet}$ as $\lfloor \{ \Post{a}{r} \mid a \in \Sigma, r \in \lfloor \rel{\macroSet}^* \rfloor \} \cup \macroSet \rfloor.$ 
We have the next counterpart of Proposition~\ref{prop_universality}.

\begin{proposition}
Let $\AutAGeneral$ be a hedge automaton.
$\AutA$ is universal if and only if $\forall P \in 
\MinPostAllStar{\emptyset}, P \cap Q_f \neq \emptyset$.
\end{proposition}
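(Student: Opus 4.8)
The plan is to leverage Proposition~\ref{prop_universality}, which already states that $\AutA$ is universal if and only if every $P \in \PostAllStar{\emptyset}$ satisfies $P \cap Q_f \neq \emptyset$. Since both propositions share the same left-hand side, it suffices to prove that the condition ``every macrostate meets $Q_f$'' is equivalent whether quantified over the full set $\PostAllStar{\emptyset}$ or over its minimal fragment $\MinPostAllStar{\emptyset}$. I would treat the two implications separately, each reusing Proposition~\ref{prop_universality} as a black box.

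For the direction from universality to the minimal condition, I would first establish the inclusion $\MinPostAllStar{\emptyset} \subseteq \PostAllStar{\emptyset}$; once it is available, any $P \in \MinPostAllStar{\emptyset}$ lies in $\PostAllStar{\emptyset}$, so the property $P \cap Q_f \neq \emptyset$ (guaranteed there by Proposition~\ref{prop_universality}) transfers verbatim. For the converse I would invoke Lemma~\ref{lemma_postallstar_and_subset_order}: given an arbitrary $P \in \PostAllStar{\emptyset}$, the lemma produces some $P' \in \MinPostAllStar{\emptyset}$ with $P' \subseteq P$; the minimal condition gives $P' \cap Q_f \neq \emptyset$, and from $P' \subseteq P$ we get $P \cap Q_f \neq \emptyset$. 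Applying Proposition~\ref{prop_universality} then closes the argument.

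The main obstacle, and the only nontrivial ingredient, is the inclusion $\MinPostAllStar{\emptyset} \subseteq \PostAllStar{\emptyset}$. I would prove by induction on $i$ that $\MinPostAlli{i}{\emptyset} \subseteq \PostAllStar{\emptyset}$. The base case is immediate, since $\MinPostAlli{0}{\emptyset} = \lfloor \emptyset \rfloor = \emptyset$. The inductive step rests on two facts: that $\PostOp$ is monotone in its set argument (if $\macroSet \subseteq \macroSet'$ then $\PostAll{\macroSet} \subseteq \PostAll{\macroSet'}$, because $\macroSet^* \subseteq \macroSet'^*$), and that $\PostAllStar{\emptyset}$ is a fixpoint of $\PostOp$, i.e.\ $\PostAll{\PostAllStar{\emptyset}} = \PostAllStar{\emptyset}$. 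The latter holds because the sets $\PostAlli{i}{\emptyset}$ are increasing and any macrostate word over $\PostAllStar{\emptyset}$ already lies in $(\PostAlli{k}{\emptyset})^*$ for some finite $k$; alternatively it follows from the semantic reformulation $\PostAllStar{\emptyset} = \{\Pt{t} \mid t \in \TSigma\}$ together with Lemmas~\ref{lem_macrostate_Pi} and~\ref{postall_i_emptyset}. Combining these, $\MinPostAlli{i}{\emptyset} = \lfloor \PostAll{\MinPostAlli{i-1}{\emptyset}} \rfloor \subseteq \PostAll{\MinPostAlli{i-1}{\emptyset}} \subseteq \PostAll{\PostAllStar{\emptyset}} = \PostAllStar{\emptyset}$, where the middle inclusion uses the induction hypothesis and monotonicity. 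Taking the union over $i$ yields the inclusion, completing the plan.
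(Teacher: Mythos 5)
Your proposal is correct and follows essentially the same route as the paper's proof: both directions reduce to Proposition~\ref{prop_universality}, using the inclusion $\MinPostAllStar{\emptyset} \subseteq \PostAllStar{\emptyset}$ for one implication and Lemma~\ref{lemma_postallstar_and_subset_order} for the other. The only difference is that you spell out a (correct) inductive proof of that inclusion via monotonicity of $\PostOp$ and the fixpoint property of $\PostAllStar{\emptyset}$, whereas the paper treats it as immediate.
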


\begin{proof}
The proof is based on Proposition~\ref{prop_universality}.

\noindent
($\Rightarrow$) As $\MinPostAllStar{\emptyset} \subseteq \PostAllStar{\emptyset}$,
the proof is immediate.

\noindent
($\Leftarrow$) Suppose that $\forall P \in  \MinPostAllStar{\emptyset}, 
P \cap Q_f \neq \emptyset$. Let $P' \in \PostAllStar{\emptyset}$.
By Lemma~\ref{lemma_postallstar_and_subset_order}, $\exists P \in
\MinPostAllStar{\emptyset} : P \subseteq P'$. It follows that $P' \cap Q_f \neq \emptyset$.
\end{proof}

Algorithm \ref{algo:universality_antichain} checks whether a given
hedge automaton is universal by computing incrementally
$\MinPostAllStar{\emptyset}$. It is an adaptation of 
Algorithm \ref{algo:universality}. 

\begin{algorithm}
\begin{algorithmic}
\Function{Universality}{$\AutA$}
\State $\macroSet \gets \emptyset$
\State $\relSet^*_{min} \gets \{\vpaid_{\SLh}\}$
\Repeat
    \State $\macroSet_{new} \gets \lfloor \{ \Post{a}{r} \mid a \in  \Sigma, r \in \relSet^*_{min} \} \rfloor$
    \If {$\exists P \in \macroSet_{new}: P \cap F = \emptyset$}
        \State \Return False \quad // Not universal
    \EndIf
     \State $\relSet' \gets \rel{\macroSet_{new} \setminus \macroSet} \setminus \relSet^*_{min}$
     \If  {$\relSet' \neq \emptyset$}
         \State $\macroSet \gets \lfloor \macroSet \cup \macroSet_{new} \rfloor$
         \State $\relSet^*_{min} \gets \lfloor \textproc{CompositionClosure}(\relSet^*_{min},\relSet') \rfloor$
    \EndIf
\Until{$\relSet' = \emptyset$}
\State \Return True \quad // Universal
\EndFunction
\end{algorithmic}
\caption{Checking universality
\label{algo:universality_antichain}}
\end{algorithm}

Notice that in Algorithm \ref{algo:universality_antichain}, to compute 
$\lfloor \rel{\macroSet}^* \rfloor$, 
we first make a call to Function \textproc{CompositionClosure} and then we only keep
the $\subseteq$-minimal elements of the result. An optimisation could be,
at each step of the \textproc{CompositionClosure} computation, to only consider the
minimal elements.

\subsection{Checking $u$-universality} \label{sec:HedgeAutomataAndu-universality}

In this section, given $\AutA$ a hedge automaton and 
$u \neq \epsilon$ a word in $\PPref{\TSigma}$, 
we propose a method to check whether $\AutA$ is $u$-universal. 
This method is incremental, as explained in Section~\ref{sec:universality}.
As in the previous section, we first propose our approach, then transform it into an algorithm (thanks to relations), and finally propose some optimizations.

We need the following notation. Let $u$ be the current read proper prefix of $\lin{t_0}$ for a given tree $t_0$. If $u = a_1\lin{h_1}a_2\lin{h_2} \cdots a_n\lin{h_n}$ with $a_i \in \Sigma, h_i \in \HdgSigma$, for $1 \leq i \leq n$, then $\open{u} =  a_1a_2 \cdots a_n$. 
In other words, $a_1, a_2, \ldots, a_n$ are the read open tags 
which closing tags have not been read yet. 
The partial reading of $t_0$ according to $u$ indicates a current list of ancestors respectively labeled by $a_1, a_2, \ldots a_n$ as depicted in \Figure \ref{open_u}.

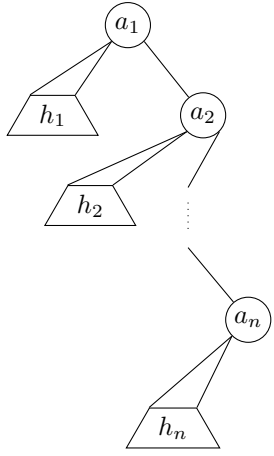
\begin{figure}[h!]
\centering
\begin{tikzpicture}
\def\step{1.2}
\def\istep{1.8*\step-0.2}
\def\dstep{\istep-0.4}
\def\nstep{\dstep-0.2}
\def\msize{6mm}

\node at (0,0) {$a_1$};
\node[circle, draw, minimum size=\msize] (a1) at (0,0) {};
\node[trapezium, draw] (h1) at (-1,-\step) {$h_1$};
\node at (1,-\step) {$a_2$};
\node[circle, draw, minimum size=\msize] (a2) at (1,-\step) {};
\node[trapezium, draw] (h2) at (-0.5,-2*\step) {$h_2$};
\coordinate (a3) at (0.8,-1.8*\step);
\coordinate (ai) at (0.8, -\istep);
\coordinate (ai-plus) at (0.8, -\dstep);
\coordinate (an-moins) at (0.8, -\nstep);
\node at (1.6,  -\nstep-0.8*\step) {$a_n$};
\node[circle, draw, minimum size=\msize] (an) at (1.6,  -\nstep-0.8*\step) {};
\node[trapezium, draw]  (hn) at (0.6, -\nstep-2*\step) {$h_n$};

\draw (a1.south west) -- (h1.top left corner);
\draw (a1.south west) -- (h1.top right corner);
\draw (a1.south east) -- (a2);
\draw (a2.south west) -- (h2.top left corner);
\draw (a2.south west) -- (h2.top right corner);
\draw (a2.south east) -- (a3);
\draw[dotted] (ai) -- (ai-plus);
\draw (an-moins) -- (an);
\draw (an.south west) -- (hn.top left corner);
\draw (an.south west) -- (hn.top right corner);
\end{tikzpicture}
\caption{Current reading of a tree $t_0$ according to the prefix $a_1\lin{h_1}a_2\lin{h_2} \cdots  a_n\lin{h_n}$.}
\label{open_u}
\end{figure}

Given $u$, let $w_i = a_1\lin{h_1} \cdots a_{i-1}\lin{h_{i-1}}$, for 
$1 \leq i \leq n$, such that $w_1 = \epsilon$. The incremental method is based 
on the usage of some sets 
$$\X_{w_ia_i},  ~1 \leq i \leq n,$$ 
such that each 
$\X_{w_ia_i}$ is defined from $\X_{w_{i-1}a_{i-1}}$, with the underlying idea that 
$\AutA$ is $w_ia_i$-universal iff $\X_{w_ia_i}$ is empty.
This permits to check $u$-universality when $u$ ends with a $\Sigma$-symbol.
Moreover, we will see that each element of $\X_{w_ia_i}$
is a witness of some word $v$ such that the tree $t$ with
$\lin{t} = w_ia_iv$ is not accepted by $\AutA$. 
For words $u$ ending with a $\ol\Sigma$-symbol,
we will explain at the end of this section how the test of 
$w_ia_i\lin{h_i}$-universality can be easily performed 
using the set $\X_{w_ia_i}$.

\subsubsection{Incremental approach}

Let us give the definition of $\X_{w_ia_i}$ for all $i$. We begin with the basic case $i=1$, i.e. with set 
$X_a$.

We use notation $\PointFixe$ for $\PostAllStar{\emptyset}$ and $\PointFixeEtoile$ for $(\PostAllStar{\emptyset})^*$ as introduced in Section~\ref{sec:Post} (recall that $\PostAllStar{\emptyset} = \{ \Pt{t} \mid t \in \TSigma\}$ and $(\PostAllStar{\emptyset})^* = \{ \pih{h} \mid h \in \HdgSigma\}$ by Lemma~\ref{lem:pointfixe}).
Given a set $\mswSet$ of macrostate words, 
we define $\Pref{\mswSet}$ as the set 
$\{\pi \in \PointFixeEtoile \mid \exists\pi' \in \PointFixeEtoile :  
\pi  \pi' \in   \mswSet \}$.

\paragraph{Basic case}
We need to define $\X_{a}$ such that $\X_a = \emptyset$ iff $\AutA$ is $a$-universal, i.e. all trees $t$ such that $\lin{t} = a \lin{h}\overline{a}$ with $h \in \HdgSigma$, are accepted by $\AutA$.  The test of $a$-universality is performed in two steps. We first collect all macrostate words $\pih{h} \in\PointFixeEtoile$ (see Lemmas~\ref{lem_macrostate_Pi} and~\ref{postall_i_emptyset}). 
Then for each of them we compute $\Post{a}{\pih{h}}$ and 
check whether $\Post{a}{\pih{h}} \cap Q_f \neq \emptyset$ 
(see Proposition~\ref{prop_universality}). 
If for some $\pih{h}$, we have $\Post{a}{\pih{h}} \cap Q_f = \emptyset$, 
then $\pih{h}$ is a witness of non $a$-universality of $\AutA$, 
since $a \lin{h}\overline{a}$ is not accepted by $\AutA$. 
More precisely, we have the next definition and proposition.

\begin{definition}
\label{def-X_a}
Let $\AutA = (Q, \Sigma, Q_f, \Delta)$ be a hedge automaton, and let $a \in \Sigma$ be a letter. We define
$$\X_a = \{ \pi \in \PointFixeEtoile \mid \Post{a}{\pi} \cap Q_f = \emptyset \}.$$
\end{definition}

\begin{proposition} \label{prop:Xa}
\label{prop-X_a}
$\AutA$ is $a$-universal iff $\X_a = \emptyset$. Moreover, if $\X_a$ is not empty, for all $\pi \in \X_a$, let $h \in \HdgSigma$ be such that $\pi = \pih{h}$. Then $a \lin{h}\overline{a} \in \lin{\TSigma \setminus \La{\AutA}}$.
\end{proposition}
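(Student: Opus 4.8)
The plan is to reduce the whole statement to Lemma~\ref{lem_macrostate_Pi}, which already records exactly which macrostate is attached to the root of a tree in terms of the macrostate word of its children hedge. First I would pin down what $a$-universality means concretely. Since $a \in \Sigma$ is a call symbol, the linearization $\lin{t}$ of a tree begins with $a$ if and only if the root of $t$ is an $a$-node; moreover every such tree factors as an $a$-root together with a children hedge $h \in \HdgSigma$ (the ordered sequence of its root subtrees, possibly empty), so that $\lin{t} = a\lin{h}\overline{a}$, and conversely every $h \in \HdgSigma$ arises this way. Hence $\AutA$ is $a$-universal exactly when every tree whose root is an $a$-node is accepted, i.e.\ when for every $h \in \HdgSigma$ the tree $t_h$ with $\lin{t_h}=a\lin{h}\overline a$ lies in $\La\AutA$.

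Next I would translate acceptance of $t_h$ into the $\PostOp$ operator. Writing $h = t_1 \cdots t_n$ and $P_{t_i} = \{q \mid t_i \racineRun{\AutA} q\}$, so that $\pih{h} = P_{t_1}\cdots P_{t_n}$, Lemma~\ref{lem_macrostate_Pi} yields $\Post{a}{\pih{h}} = \{q \mid t_h \racineRun{\AutA} q\} = \Pt{t_h}$. Thus $t_h$ is accepted if and only if $\Post{a}{\pih{h}} \cap Q_f \neq \emptyset$. Combining this with the identification $\PointFixeEtoile = \{\pih{h} \mid h \in \HdgSigma\}$ supplied by Lemma~\ref{lem:pointfixe}, $a$-universality becomes equivalent to requiring $\Post{a}{\pi}\cap Q_f \neq \emptyset$ for every $\pi \in \PointFixeEtoile$, which by Definition~\ref{def-X_a} is precisely $\X_a = \emptyset$. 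This establishes the equivalence.

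For the \emph{moreover} claim I would simply exhibit the witness produced by any $\pi \in \X_a$. Choosing $h$ with $\pi = \pih{h}$ (such $h$ exists because $\pi \in \PointFixeEtoile$), the computation above gives $\Pt{t_h} = \Post{a}{\pih{h}} = \Post{a}{\pi}$, and $\pi \in \X_a$ says this set misses $Q_f$; hence no run of $\AutA$ on $t_h$ is accepting, so $t_h \in \TSigma \setminus \La\AutA$ and $a\lin{h}\overline a = \lin{t_h} \in \lin{\TSigma \setminus \La\AutA}$. I do not expect a genuine obstacle: the only point requiring slight care is that $\Post{a}{\pih{h}}$ depends only on $\pi = \pih{h}$ and not on the particular $h$ chosen, so that the witness argument is well defined whichever such $h$ one takes. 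This is immediate from the definition of $\Post{a}{\cdot}$, which reads off only the word set $\wordsOf{\pi}$.
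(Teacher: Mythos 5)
Your proof is correct and follows exactly the route the paper itself takes: the paper gives no formal proof of this proposition, but the prose preceding Definition~\ref{def-X_a} justifies it in the same way you do, via Lemma~\ref{lem_macrostate_Pi} (so that $\Post{a}{\pih{h}} = \Pt{t_h}$ for the tree $t_h$ with $a$-root and children hedge $h$) together with the identification $\PointFixeEtoile = \{\pih{h} \mid h \in \HdgSigma\}$ from Lemma~\ref{lem:pointfixe}. Your added remark that the witness is well defined because $\Post{a}{\cdot}$ depends only on $\wordsOf{\pi}$ is a small but accurate point of care beyond what the paper states.
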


Let us now proceed with the general case, that is, the definition of $\X_{w_ia_i}$ with $i > 1$. 
For all proper prefixes $w_ja_j$ of $w_ia_i$, we can suppose that $\AutA$ is not 
$w_ja_j$-universal, otherwise $\AutA$ would be trivially $w_ia_i$-universal. We define 
$\X_{w_ia_i}$ and then, explain how to check $w_ia_i$-universality knowing $\X_{w_ia_i}$.

\paragraph{General case}

Let $wa$ with $a \in \Sigma$. We first define $\X_{wa}$. 
Let $w = w'a'\lin{h'}$ with $a' \in \Sigma$ and $h' \in \HdgSigma$. 
We suppose that $\AutA$ is not $w'a'$-universal, 
and that $\X_{w'a'} \neq \emptyset$. Moreover  $\X_{w'a'}$ contains a witness 
of a word $v$ such that the tree $t$ with $\lin{t} = w'a'v$ is not accepted 
by $\AutA$. 

\begin{figure}[h!]
\centering
\subfloat[$w'a'\lin{h'}a$]
{
\begin{tikzpicture}
\def\step{1.2}
\def\msize{6mm}

\node at (0,0) {$a'$};
\node[circle, draw, minimum size=\msize] (ap) at (0,0) {};
\node at (-1,-\step) {$h'$};
\node[trapezium, draw, minimum width=1.3cm, minimum height=0.9cm, trapezium stretches=true] (hp) at (-1,-\step-0.2) {};
\node at (1,-\step) {$a$};
\node[circle, draw, minimum size=\msize] (a) at (1,-\step) {};

\node at (0.3,1.2) {$w'$};

\draw[decorate,decoration=snake] (-0.4,1.5) -- (0,0.6);
\draw[dotted] (0,0.6) -- (ap);
\draw (ap.south west) -- (hp.top left corner);
\draw (ap.south west) -- (hp.top right corner);
\draw (ap.south east) -- (a);
\end{tikzpicture}
}
\hspace{10px}
\subfloat[hedge $g$ with $\lin{g} = \lin{h'}a\lin{h_1}\overline{a}\lin{h_2}$]
{
\begin{tikzpicture}
\def\step{1.2}
\def\msize{6mm}

\node at (0,0) {$a'$};
\node[circle, draw, minimum size=\msize] (ap) at (0,0) {};
\node at (-1,-\step) {$h'$};
\node[trapezium, draw, minimum width=1.3cm, minimum height=0.9cm, trapezium stretches=true] (hp) at (-1,-\step-0.2) {};
\node at (0.5,-\step) {$a$};
\node[circle, draw, minimum size=\msize] (a) at (0.5,-\step) {};

\node at (0.5,-2*\step) {$h_1$};
\node[trapezium, draw, minimum width=1.3cm, minimum height=0.9cm, trapezium stretches=true] (h1) at (0.5,-2*\step-0.2) {};

\node at (2,-\step) {$h_2$};
\node[trapezium, draw, minimum width=1.3cm, minimum height=0.9cm, trapezium stretches=true] (h2) at (2,-\step-0.2) {};

\node[trapezium, trapezium angle=75,draw, dotted, minimum width=6cm, minimum height=2.4cm, trapezium stretches=true] (trap) at (0.5,-1.7*\step) {};

\node at (3.3,-1.5*\step) {$g$};
\node at (0.3,1.2) {$w'$};

\draw[decorate,decoration=snake] (-0.4,1.5) -- (0,0.6);
\draw[dotted] (0,0.6) -- (ap);
\draw (ap.south west) -- (hp.top left corner);
\draw (ap.south west) -- (hp.top right corner);
\draw (ap) -- (a);
\draw (ap.south east) -- (h2.top left corner);
\draw (ap.south east) -- (h2.top right corner);
\draw (a) -- (h1.top left corner);
\draw (a) -- (h1.top right corner);
\end{tikzpicture}
}
\caption{Current reading according to the prefix $wa$}
\label{current_reading}
\end{figure}
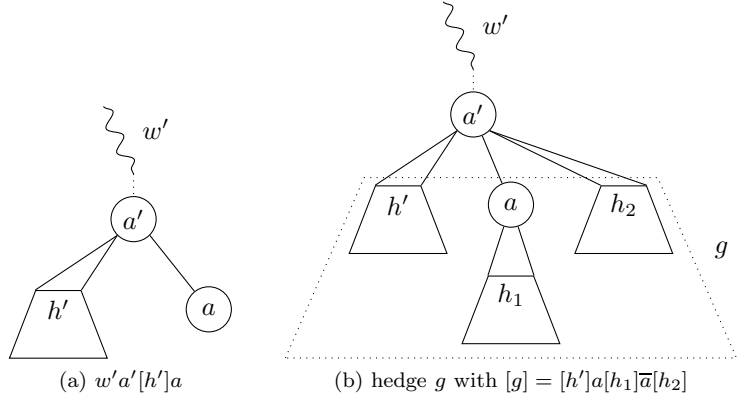

Let us define the set $\X_{wa}$ from the set $\X_{w'a'}$. 
In \Figure~\ref{current_reading}~(a), we indicate the current reading of 
a tree according to $wa$:  an internal node labeled by $a'$ with a sequence 
of subtrees equal to $h'$ followed by a child labeled by $a$. 
With this figure, we notice that $\AutA$ is not $wa$-universal iff there 
exists $h_1, h_2 \in  \HdgSigma$ such that for the hedge $g$ with 
$\lin{g} = \lin{h'}a\lin{h_1}\overline{a}\lin{h_2}$, we have $\pih{g} 
\in \X_{w'a'}$ (see \Figure~\ref{current_reading}~(b)). This observation 
leads to the next definition of $X_{wa}$.

\begin{definition}
\label{def-X_wa}
Let $wa \in \PPref{\TSigma}$ with $w = w'a'\lin{h'}$, $a, a' \in \Sigma$ and $h' \in \HdgSigma$. Let $\AutA = (Q, \Sigma, Q_f, \Delta)$ be a hedge automaton. We define
$$\X_{wa} = \{ \pi \in \PointFixeEtoile \mid \pih{h'} \Post{a}{\pi} \in \Pref{\X_{w'a'}} \}.$$
\end{definition}

As for the basic case (see Proposition~\ref{prop:Xa}), we have the next proposition.

\begin{proposition}
\label{prop:X_ua set of witness}
$\AutA$ is $wa$-universal iff $\X_{wa} = \emptyset$. Moreover, if $\X_{wa}$ is not empty, then $X_{wa} = \{ \pi_h \in \PointFixeEtoile \mid \exists v:
wa\lin{h}\overline{a}v \in \lin{\TSigma \setminus  \La{\AutA}}\}$.
\end{proposition}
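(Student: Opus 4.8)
The plan is to prove, by induction on the number $|\open{wa}|$ of currently open tags, the single characterization
$$\X_{wa} = \{\pih{h} \in \PointFixeEtoile \mid \exists v:\ wa\,\lin{h}\,\ol{a}\,v \in \lin{\TSigma \setminus \La{\AutA}}\},$$
from which both assertions of the proposition follow at once. For the first, I would observe that a tree $t$ witnesses non-$wa$-universality exactly when $wa$ is a prefix of $\lin{t}$ and $t \notin \La{\AutA}$; since $a \in \Sigma$ is a call symbol that has just been opened, every such $\lin{t}$ factors as $wa\,\lin{h}\,\ol{a}\,v$, where $h \in \HdgSigma$ collects the children of the $a$-node and $v$ closes the remaining open tags. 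Hence $\AutA$ fails to be $wa$-universal iff some $\pih{h}$ belongs to the right-hand side above, i.e.\ iff $\X_{wa} \neq \emptyset$; and the displayed equality is exactly the ``Moreover'' part.

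The base case is $w = \epsilon$, handled by Proposition~\ref{prop:Xa}: here $\ol{a}$ closes the root, forcing $v = \epsilon$, and $\pih{h} \in \X_a$ iff $\Post{a}{\pih{h}} \cap Q_f = \emptyset$ iff $a\,\lin{h}\,\ol{a} \in \lin{\TSigma \setminus \La{\AutA}}$. For the inductive step I would write $w = w'a'\lin{h'}$ as in Definition~\ref{def-X_wa} and use the structural picture of Figure~\ref{current_reading}: after reading $wa\,\lin{h}\,\ol{a}$ the open tags are precisely $\open{w'}a'$, so any completion must have the form $v = \lin{h_2}\,\ol{a'}\,v'$ for some hedge $h_2 \in \HdgSigma$ (the siblings of the $a$-subtree under $a'$) and some word $v'$, and conversely every such choice yields a valid completion.

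Writing $t_a$ for the tree with root $a$ and children-hedge $h$, and $g = h'\,t_a\,h_2$ for the hedge sitting under $a'$, this gives the rewriting
$$wa\,\lin{h}\,\ol{a}\,v = w'a'\,\lin{g}\,\ol{a'}\,v', \qquad \lin{g} = \lin{h'}\,a\,\lin{h}\,\ol{a}\,\lin{h_2}.$$
Therefore $wa\,\lin{h}\,\ol{a}\,v \in \lin{\TSigma \setminus \La{\AutA}}$ for some $v$ iff $w'a'\,\lin{g}\,\ol{a'}\,v' \in \lin{\TSigma \setminus \La{\AutA}}$ for some $h_2$ and $v'$, which by the induction hypothesis for $w'a'$ means $\pih{g} \in \X_{w'a'}$ for some $h_2$. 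Now Lemma~\ref{lem_macrostate_Pi} yields $\Pt{t_a} = \Post{a}{\pih{h}}$, so $\pih{g} = \pih{h'}\,\Post{a}{\pih{h}}\,\pih{h_2} \in \PointFixeEtoile$; and as $h_2$ ranges over $\HdgSigma$ the factor $\pih{h_2}$ ranges over all of $\PointFixeEtoile$ (remark after Lemma~\ref{lem:pointfixe}). Hence ``$\exists h_2: \pih{h'}\,\Post{a}{\pih{h}}\,\pih{h_2} \in \X_{w'a'}$'' is literally ``$\pih{h'}\,\Post{a}{\pih{h}} \in \Pref{\X_{w'a'}}$'', i.e.\ $\pih{h} \in \X_{wa}$ by Definition~\ref{def-X_wa}. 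Chaining these equivalences closes the induction.

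The step I expect to demand the most care is the decomposition $v = \lin{h_2}\,\ol{a'}\,v'$: one has to invoke the matched call/return discipline of linearizations to argue that the return symbol in $v$ which closes $a'$ comes after a block that is itself the linearization of a well-defined (possibly empty) hedge $h_2$, and that $v'$ then completes the tree above $a'$. Setting up the induction on $|\open{wa}|$ so that the hypothesis about $\X_{w'a'}$ carries the existential $v'$ used above --- with the base case $w' = \epsilon$ degenerating to $v' = \epsilon$ --- is the other delicate point; once these are in place, the macrostate-word manipulations are routine consequences of Lemma~\ref{lem_macrostate_Pi}.
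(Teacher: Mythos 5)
Your proof is correct and follows essentially the same route as the paper's: induction on $w$ establishing the witness characterization of $\X_{wa}$, with the base case from Proposition~\ref{prop:Xa} and the inductive step built on the decomposition $v = \lin{h_2}\,\ol{a'}\,v'$ and the identity $\pih{h'}\,\Post{a}{\pih{h}}\,\pih{h_2} = \pih{g}$ for $\lin{g}=\lin{h'}\,a\,\lin{h}\,\ol{a}\,\lin{h_2}$. You are in fact slightly more explicit than the paper in justifying that $\Post{a}{\pih{h}} = \Pt{t_a}$ via Lemma~\ref{lem_macrostate_Pi} and that $\pih{h_2}$ ranges over all of $\PointFixeEtoile$, which is a welcome clarification rather than a deviation.
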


\begin{proof}
We proceed by induction on $w$ to prove that $X_{wa} = \{ \pi_h \in \PointFixeEtoile \mid \exists v:
wa\lin{h}\overline{a}v \in \lin{\TSigma  \setminus  \La{\AutA}}\}$.
The basic case, $w = \epsilon$, directly follows from Proposition~\ref{prop:Xa}.

Let $w = w'a'\lin{h'}$ with $a' \in \Sigma$
and $h' \in \HdgSigma$. Suppose that the property holds for $\X_{w'a'}$,
i.e. $X_{w'a'} = \{ \pi_{h'} \in \PointFixeEtoile \mid \exists  v'  : 
 w'a'\lin{h'}v' \in \lin{\TSigma \setminus \La{\AutA}} \}$.

\noindent
($\subseteq$) Let $\pi_h \in \X_{wa}$. By definition, $\exists h',
h'' \in \HdgSigma : \pi_{h'} \Post{a}{\pi_h}\pi_{h''} \in \X_{w'a'}$.
Then, by induction hypothesis, $\exists v'  : w'a'\lin{h'}
a\lin{h}\overline{a}\lin{h''}\overline{a}'v' \in \lin{\TSigma \setminus \La{\AutA}}$.
Let $v = \lin{h''}\overline{a}'v'$, then $wa\lin{h}\overline{a}v \in \lin{\TSigma \setminus \La{\AutA}}$.

\noindent
($\supseteq$) Let $\pi_h \in \PointFixeEtoile$ such that 
$\exists v  : wa\lin{h}\overline{a}v \in \lin{\TSigma \setminus \La{\AutA}}$.
So there exists a word~$v'$ and hedges $h',h''$ such that $w'a'\lin{h'}a\lin{h}\overline{a} 
\lin{h''}\overline{a}'v' \in \lin{\TSigma \setminus \La{\AutA}}$.
By induction hypothesis, $\pi_g \in \X_{w'a'}$ with $\lin{g} = \lin{h'}a\lin{h}\overline{a}\lin{h''}$, and thus
$\pi_h \in \X_{wa}$.
\end{proof}

In this section, given a tree $t_0$ and the current read prefix $u$ of $\lin{t_0}$, we have shown how to test incrementally for $u$-universality as follows. Suppose that $u = a_1\lin{h_1}a_2\lin{h_2} \cdots a_n\lin{h_n}$ with $a_i \in \Sigma, h_i \in \HdgSigma$, for $1 \leq i \leq n$, and let  $w_i = a_1\lin{h_1} \cdots a_{i-1}\lin{h_{i-1}}$, for $1 \leq i \leq n$. We have defined set $\X_a$ and then each set $\X_{w_ia_i}$,  $1 < i \leq n$, from $\X_{w_{i-1}a_{i-1}}$, such that $\AutA$ is $w_ia_i$-universal iff $\X_{w_ia_i}$ is empty. 

It should be noted that it is also possible to test whether $\AutA$ is $w_ia_i[h_i]$-universal thanks to set $\X_{w_ia_i}$. Indeed, by  Proposition~\ref{prop:X_ua set of witness}, $\AutA$ is $w_ia_i[h_i]$-universal iff $\nexists \pi \in \PointFixeEtoile : \pi_{h_i} \pi \in \X_{w_ia_i}$.

\subsubsection{Algorithm for checking $u$-universality}

In this section, we propose an algorithm for $u$-universality checking. As done before for universality in Section~\ref{universality-relations}, we need to represent a macrostate word $\pi$ by the relation $\rel{\pi}$. Definitions~\ref{def-X_a} and~\ref{def-X_wa} are rephrased as follows. Given a set $\Y$ of relations, we define $\Pref{\Y}$ as the set $\{r \in \rel{\PointFixeEtoile} \mid \exists r' \in \rel{\PointFixeEtoile} :  r r' \in   \Y \}$.

\begin{definition} 
\label{def-Y_wa}
Let $\AutA = (Q, \Sigma, Q_f, \Delta)$ be a hedge automaton, and let $wa \in \PPref{\TSigma}$ with $a \in \Sigma$.
\begin{enumerate}
\item If $w = \epsilon$, we define $\Y_a =\{ r \in \rel{\PointFixeEtoile} \mid \Post{a}{r} \cap Q_f = \emptyset \}$.
\item If $w \neq \epsilon$, given $w = w'a'\lin{h'}$ with $a' \in \Sigma$ and $h' \in \HdgSigma$, we define $\Y_{wa} = \{ r \in \rel{\PointFixeEtoile} \mid \rel{\pih{h'}} \rel{\Post{a}{r}} \in \Pref{\Y_{w'a'}} \}$.
\end{enumerate}
\end{definition}

\begin{lemma}
\label{Y_wa and X_wa}
$\Y_{wa} = \rel{\X_{wa}}$.
\end{lemma}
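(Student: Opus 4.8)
The plan is to prove, by induction on $w$, a statement slightly stronger than the lemma: simultaneously that $\Y_{wa} = \rel{\X_{wa}}$ and that $\X_{wa}$ is \emph{$\rel$-saturated}, meaning that whenever $\pi \in \X_{wa}$ and $\pi' \in \PointFixeEtoile$ satisfy $\rel{\pi'} = \rel{\pi}$, then $\pi' \in \X_{wa}$. The saturation invariant is what makes the induction close: since $\rel{\cdot}$ is not injective, membership of a relation $r$ in $\rel{\X_{wa}}$ only guarantees that \emph{some} preimage of $r$ lies in $\X_{wa}$, and saturation upgrades this to \emph{every} preimage, which is exactly what turns the set equality $\Y_{wa} = \rel{\X_{wa}}$ into the pointwise equivalence $\pi \in \X_{wa} \iff \rel{\pi} \in \Y_{wa}$.

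Before the induction I would record three facts. First, $\rel{\cdot}$ is a monoid homomorphism from macrostate words (under concatenation) to relations (under composition), $\rel{\pi\pi'} = \rel{\pi}\,\rel{\pi'}$, which is the content of the computation in the proof of Lemma~\ref{relSetStar-macroSetStar}. Second, $\Post{a}{\pi}$ depends only on $\rel{\pi}$, namely $\Post{a}{\pi} = \Post{a}{\rel{\pi}}$ (Lemma~\ref{post_a_rel_pi}); consequently $\rel{\Post{a}{\rel{\pi}}} = \rel{\Post{a}{\pi}}$, and for $h' \in \HdgSigma$ the product $\rel{\pih{h'}}\,\rel{\Post{a}{\rel{\pi}}}$ equals $\rel{\pih{h'}\Post{a}{\pi}}$. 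Third, $\PointFixeEtoile = (\PostAllStar{\emptyset})^*$ is closed under concatenation, and by Lemma~\ref{lem_macrostate_Pi} the single macrostate $\Post{a}{\pi}$ is itself of the form $\Pt{t}$, so $\pih{h'}\Post{a}{\pi} \in \PointFixeEtoile$ whenever $\pi \in \PointFixeEtoile$; this is what makes the witnesses type-check.

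The key auxiliary step is a prefix-transfer claim: for any $\rel$-saturated $\mswSet \subseteq \PointFixeEtoile$ and any $\sigma \in \PointFixeEtoile$, one has $\sigma \in \Pref{\mswSet} \iff \rel{\sigma} \in \Pref{\rel{\mswSet}}$. The forward direction needs no saturation: if $\sigma\sigma' \in \mswSet$ then $\rel{\sigma}\,\rel{\sigma'} = \rel{\sigma\sigma'} \in \rel{\mswSet}$ with $\rel{\sigma'} \in \rel{\PointFixeEtoile}$. For the converse, from $\rel{\sigma}\,r' \in \rel{\mswSet}$ with $r' = \rel{\sigma'}$ and $\sigma' \in \PointFixeEtoile$, pick $\tau \in \mswSet$ with $\rel{\tau} = \rel{\sigma}\,\rel{\sigma'} = \rel{\sigma\sigma'}$; since $\sigma\sigma' \in \PointFixeEtoile$ (closure under concatenation) and $\mswSet$ is $\rel$-saturated, $\sigma\sigma' \in \mswSet$, whence $\sigma \in \Pref{\mswSet}$. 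This converse is precisely where non-injectivity of $\rel{\cdot}$ would otherwise block the argument, and it is the main obstacle that the saturation invariant is designed to overcome.

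With these in hand the induction is routine. For the base case $w = \epsilon$, $\pi \in \X_a \iff \Post{a}{\pi} \cap Q_f = \emptyset \iff \Post{a}{\rel{\pi}} \cap Q_f = \emptyset \iff \rel{\pi} \in \Y_a$, which yields both $\Y_a = \rel{\X_a}$ and the saturation of $\X_a$, since the defining condition factors through $\rel{\pi}$. For the inductive step with $w = w'a'\lin{h'}$, I would set $r = \rel{\pi}$ and chain $r \in \Y_{wa} \iff \rel{\pih{h'}}\,\rel{\Post{a}{r}} \in \Pref{\Y_{w'a'}} \iff \rel{\pih{h'}\Post{a}{\pi}} \in \Pref{\rel{\X_{w'a'}}} \iff \pih{h'}\Post{a}{\pi} \in \Pref{\X_{w'a'}} \iff \pi \in \X_{wa}$, using the induction hypothesis $\Y_{w'a'} = \rel{\X_{w'a'}}$, the homomorphism and $\Post$-facts, the prefix-transfer claim applied to the $\rel$-saturated set $\X_{w'a'}$, and Definition~\ref{def-X_wa}. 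Since every condition in this chain factors through $\rel{\pi}$, the set $\X_{wa}$ is again $\rel$-saturated, and the equivalence $\pi \in \X_{wa} \iff \rel{\pi} \in \Y_{wa}$ combined with saturation gives $\rel{\X_{wa}} = \Y_{wa}$, completing the induction.
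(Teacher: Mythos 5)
Your proof is correct and follows essentially the same route as the paper's: induction on $w$, the base case from Lemma~\ref{post_a_rel_pi}, and an inductive step that chains the same equivalences through $\Pref{\cdot}$ using the homomorphism property of $\relOp$ and the induction hypothesis $\Y_{w'a'}=\rel{\X_{w'a'}}$. The only difference is presentational: you promote to an explicit induction invariant (``$\rel$-saturation'' and the prefix-transfer claim) what the paper dispatches with the one-line remark that $\rel{\pi}=\rel{\pi'}$ implies $\pi'\in\X_{wa}$, which is a welcome clarification of exactly where non-injectivity of $\relOp$ must be handled.
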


\begin{proof}
The proof is done by induction on $w$. 

The basic case, $\Y_a = \rel{X_a}$, follows from Lemma \ref{post_a_rel_pi}.

Let $w = w'a'\lin{h'}$ with $a' \in \Sigma$ and $h' \in \HdgSigma$.
Suppose that $\Y_{w'a'} = \rel{X_{w'a'}}$ holds. 
Notice that for $\pi \in \X_{wa}$ and $\pi' \in \PointFixeEtoile$, if $\rel{\pi} = \rel{\pi'}$, then $\pi' \in \X_{wa}$ (see Lemma \ref{post_a_rel_pi}). We have for $r = \rel{\pi} \in \rel{\PointFixeEtoile}$:
\begin{eqnarray*}
r \in  \Y_{wa} &\Leftrightarrow& 
\exists r' \in \rel{\PointFixeEtoile}: 
\rel{\pi_{h'}} \rel{\Post{a}{r}} r' \in \Y_{w'a'} \\
&\Leftrightarrow&
\exists \pi' \in \PointFixeEtoile: 
\rel{\pi_{h'}\Post{a}{\pi}\pi'} \in \rel{X_{w'a'}} \\
&\Leftrightarrow& \pi \in \rel{X_{wa}}
\end{eqnarray*}
It follows that $\Y_{wa} = \rel{X_{wa}}$.
\end{proof}

The next proposition is the equivalent of Propositions~\ref{prop-X_a} and~\ref{prop:X_ua set of witness},
as a consequence of Lemma  \ref{Y_wa and X_wa}.

\begin{proposition}
\label{prop-Y_u}
$\AutA$ is $wa$-universal iff $\Y_{wa}$ is empty.
\end{proposition}

By definition of $\Y_{wa}$, it follows that $\AutA$ is $wa\lin{h}$-universal, with $h \in \HdgSigma$, iff $\nexists r \in \rel{ \PointFixeEtoile} : \rel{\pi_{h}} r \in \Y_{wa}$.

Let us now describe an algorithm to test whether a hedge automaton $\AutA$ is $u$-universal. We recall that $t_0$ is a given tree and $u$ its current read prefix. This algorithm is incremental and thus has already checked that $\AutA$ is not $wa$-universal for all non-empty proper prefixes $wa$ of $u$ thanks to Proposition~\ref{prop-Y_u}.

More precisely, let $u = a_1\lin{h_1}a_2\lin{h_2} \cdots a_n\lin{h_n}$ and $w_i = a_1\lin{h_1} \cdots a_{i-1}\lin{h_{i-1}}$, for $1 \leq i \leq n$, and suppose that for all $i$ the sets $\Y_{w_ia_i}$ have been computed and seen to be non empty. A stack is used to store all triples $(\Y_{w_ia_i}, \rel{h_{i}}, a_i)$, $1 \leq i \leq n$, with the triple $(\Y_{w_na_n}, \rel{h_{n}}, a_n)$ at the top of the stack. The stack has a depth equal to the length of $\open{u}$.

In Algorithm~\ref{algo:hedgeu-univ}, four functions are called according 
to the letter that is currently read in $t_0$ knowing that 
$u$ is the last read prefix of $t_0$. 
If it is the first letter~$a$ (resp. last letter $\overline{a}$) 
of $\lin{t_0}$, then Function \textproc{OpenRoot}($a$) 
(resp. \textproc{CloseRoot}($a$)) is called. 
Otherwise either Function \textproc{NextOpenTag}($a$) or 
\textproc{NextClosedTag}($a$) is called according to whether $a$ 
or $\overline{a}$ is the next read letter.

Function \textproc{OpenRoot}($a$) computes the set $\Y_a$ as defined in Definition~\ref{def-Y_wa}. If $\Y_a$ is empty, then $\AutA$ is declared $a$-universal. Otherwise, the stack is initialized with the triple $(\Y_a,\id,a)$ 

Function \textproc{CloseRoot}($a$) pops the stack to get its unique triple $(\Y_a,r,a)$ (since $\overline{a}$ is the last letter of $\lin{t_0}$). It checks whether $t_0 = u\overline{a}$ is accepted by the automaton with the emptiness test of $\Post{a}{r} \cap Q_f$.

If $u\neq \epsilon$ and the letter read after $u$ is $a$ with $a \in \Sigma$, 
then Function \textproc{NextOpenTag}($a$) reads the triple $(\Y',r',a')$ 
at the top of the stack and computes the $\Y_{ua}$ from the set $\Y'$ 
(as in Definition~\ref{def-Y_wa}). 
If $\Y_{ua}$ is empty, then $\AutA$ is declared $ua$-universal. 
Otherwise, the triple $(\Y_{ua},\id,a)$ is pushed on the stack.  
If the letter read after $u$ is $\overline{a}$ with 
$\overline{a} \in \overline{\Sigma}$, and $u\overline{a} \neq t_0$, 
then Function \textproc{NextClosedTag}($a$) pops once the stack 
to get the triple $(\Y,r,a)$ (notice that $\overline{a}$ is 
the closing tag of $a$ in this triple).
It then modifies the triple $(\Y',r',a')$ at the top of stack, 
by replacing $r'$ by $r'' = r' \circ \rel{\Post{a}{r}}$ 
(see \Figure~\ref{current_reading}~(b)). If there does not exist 
$s \in \rel{ \PointFixeEtoile}$ such that $r'' s \in \Y'$, 
then $\AutA$ is declared to be $u\overline{a}$-universal.

These four functions return True as soon as they can declare that 
$\AutA$ is $u$-universal for the current read prefix $u$ of $\lin{t_0}$. 

\begin{algorithm}
\newcommand{\pile}{{\it Stack}}
\newcommand{\push}{\textproc{Push}}
\newcommand{\pop}{\textproc{Pop}}
\newcommand{\Top}{\textproc{Top}}
\begin{algorithmic}
\Function{OpenRoot}{$a$}
    \State $\Y \gets \emptyset$
    \For {$r \in \rel{\PointFixeEtoile}$}
            \If {$\Post{a}{r} \cap Q_f = \emptyset$}
                   \State $\Y \gets \Y \cup \{r\}$  
            \EndIf 
    \EndFor
    \If {$\Y = \emptyset$}
	 \State \Return True \quad  // $u$-universal with $u$ the current read prefix
    \Else
	\State $\pile \gets \emptyset$
	\State $\push(\pile,(\Y,\id,a))$
    \EndIf 
\EndFunction

\medskip
\Function{CloseRoot}{$a$}
     \State $(\Y,r,a) \gets \pop(\pile)$
     \If {$\Post{a}{r} \cap Q_f = \emptyset$}
          \State \Return False \quad // $t_0$ is not accepted 
     \Else
          \State \Return True \quad // $t_0$ is accepted
     \EndIf
\EndFunction

\medskip
\Function{NextOpenTag}{$a$}
     \State $(\Y',r',a') \gets \Top(\pile)$ 
      \State $\Y \gets \emptyset$
    \For {$r \in \rel{\PointFixeEtoile}$}
            \If {$r' \circ \rel{\Post{a}{r}} \in \Pref{\Y'}$}
                   \State $\Y \gets \Y \cup \{r\}$  
            \EndIf 
    \EndFor
    \If {$\Y = \emptyset$}
	 \State \Return True \quad  // $u$-universal with $u$ the current read prefix
	     \Else
	\State $\push(\pile,(\Y,\id,a))$
    \EndIf 
\EndFunction

\medskip
\Function{NextClosedTag}{$a$}
     \State $(\Y,r,a) \gets \pop(\pile)$
     \State $(\Y',r',a') \gets \pop(\pile)$
     \State $r' \gets r' \circ \rel{\Post{a}{r}}$
     \If {$\nexists s \in \rel{ \PointFixeEtoile}: r' \circ s \in \Y' $}
     	\State \Return True \quad  // $u$-universal with $u$ the current read prefix
     \EndIf
     \State $\push(\pile,(\Y',r',a'))$
\EndFunction
\end{algorithmic}
\caption{Functions used for checking $u$-universality incrementally}
\label{algo:hedgeu-univ}
\end{algorithm}

\subsubsection{Antichain-based optimization}

In this section we explain how to use the concept of antichain to
avoid some computations when checking for $u$-universality. 
In particular we show that it is sufficient to 
only compute the $\subseteq$-maximal elements of set $\Y_{wa}$
as defined in Definition~\ref{def-Y_wa}.

\begin{lemma}
Let $wa \in \PPref{\TSigma}$ with $a \in \Sigma$, 
$\Y_{wa}$ is a $\subseteq$-downward closed set.
\end{lemma}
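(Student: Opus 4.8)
The plan is to show that if a relation $r \in \Y_{wa}$ and $r' \subseteq r$ with $r' \in \rel{\PointFixeEtoile}$, then $r' \in \Y_{wa}$. I would proceed by induction on $w$, mirroring the inductive structure of Definition~\ref{def-Y_wa}, and reducing downward-closure of $\Y_{wa}$ to two monotonicity facts about the operators involved: that $\Post{a}{\cdot}$ is monotone in its relational argument, and that composition $\circ$ is monotone in each factor. Both of these were already recorded as the two observations in the proof of Lemma~\ref{lemma_postallstar_and_subset_order}, so I would cite them rather than reprove them.

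\emph{Basic case} ($w = \epsilon$). Here $\Y_a = \{ r \in \rel{\PointFixeEtoile} \mid \Post{a}{r} \cap Q_f = \emptyset\}$. Take $r \in \Y_a$ and $r' \subseteq r$ in $\rel{\PointFixeEtoile}$. By monotonicity of $\Post{a}{\cdot}$ we have $\Post{a}{r'} \subseteq \Post{a}{r}$, so $\Post{a}{r'} \cap Q_f \subseteq \Post{a}{r} \cap Q_f = \emptyset$, whence $r' \in \Y_a$.

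\emph{Inductive case} ($w = w'a'\lin{h'}$). Assume $\Y_{w'a'}$ is $\subseteq$-downward closed. Recall $\Y_{wa} = \{ r \in \rel{\PointFixeEtoile} \mid \rel{\pih{h'}} \circ \rel{\Post{a}{r}} \in \Pref{\Y_{w'a'}}\}$. Suppose $r \in \Y_{wa}$ and $r' \subseteq r$ with $r' \in \rel{\PointFixeEtoile}$. Unfolding $\Pref{\cdot}$, there is some $s \in \rel{\PointFixeEtoile}$ with $\rel{\pih{h'}} \circ \rel{\Post{a}{r}} \circ s \in \Y_{w'a'}$. By monotonicity of $\Post{a}{\cdot}$, $\rel{\Post{a}{r'}} \subseteq \rel{\Post{a}{r}}$, and then by monotonicity of $\circ$ in its middle factor, $\rel{\pih{h'}} \circ \rel{\Post{a}{r'}} \circ s \subseteq \rel{\pih{h'}} \circ \rel{\Post{a}{r}} \circ s$. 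Since the larger relation lies in the downward-closed set $\Y_{w'a'}$, its subrelation does too, provided that subrelation is still realized by a composition of the required shape with factors drawn from $\rel{\PointFixeEtoile}$. This witnesses $\rel{\pih{h'}} \circ \rel{\Post{a}{r'}} \in \Pref{\Y_{w'a'}}$, giving $r' \in \Y_{wa}$.

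\textbf{The main obstacle} I anticipate is precisely the closure-under-the-$\Pref$-witness step: the downward-closure hypothesis on $\Y_{w'a'}$ is stated for \emph{relations}, but membership in $\Pref{\Y_{w'a'}}$ requires the witness $s$ to come from $\rel{\PointFixeEtoile}$, i.e. to be realizable as $\rel{\pi}$ for an actual macrostate word $\pi$ arising from a hedge. I must check that the smaller relation $\rel{\pih{h'}} \circ \rel{\Post{a}{r'}} \circ s$ is itself of the form $\rel{\pi''}$ for such a $\pi''$ and that the ambient set $\rel{\PointFixeEtoile}$ is closed under passing to these subrelations, so that downward-closure of $\Y_{w'a'}$ can legitimately be applied inside $\Pref{\cdot}$. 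Care is also needed because $r'$ only ranges over $\rel{\PointFixeEtoile}$, not over arbitrary relations, so I would keep all quantifiers restricted to that set throughout and invoke Lemma~\ref{relSetStar-macroSetStar} and Lemma~\ref{post_a_rel_pi} to guarantee the composed relations remain in $\rel{\PointFixeEtoile}$.
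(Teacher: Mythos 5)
Your proof is correct and follows essentially the same route as the paper's: induction on $w$, the monotonicity of $\Post{a}{\cdot}$ in its relational argument and of $\circ$ in each factor, and then an application of the downward-closure of $\Y_{w'a'}$ to the smaller composed relation. The only difference is that you explicitly flag and discharge the side condition that the smaller witness $\rel{\pih{h'}} \circ \rel{\Post{a}{r'}} \circ s$ still lies in $\rel{\PointFixeEtoile}$ (so that downward-closure, which is relative to $\rel{\PointFixeEtoile}$, applies) --- a point the paper's proof passes over silently; your resolution via Lemmas~\ref{relSetStar-macroSetStar} and~\ref{post_a_rel_pi} is the right one.
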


\begin{proof}
We proceed by induction on $w$. Notice that for $r, r' \in \rel{\PointFixeEtoile}$ and $a \in \Sigma$, if $r' \subseteq r$, then $\Post{a}{r'} \subseteq \Post{a}{r}$.

Consider the basic case where $w = \epsilon$. By definition $\Y_{a} = \{r \in  \rel{\PointFixeEtoile} \mid \Post{a}{r} \cap Q_f = \emptyset\}$. By the previous remark, $\Y_{a}$ is a $\subseteq$-downward closed set.

Let $w = w'a'\lin{h'}$, with $a' \in \Sigma$ and $h' \in \HdgSigma$.
Let $r \in \Y_{wa}$ and $r' \in \rel{\PointFixeEtoile}$ such that
$r' \subseteq r$. Let us show that $r' \in \Y_{wa}$.
As $r \in \Y_{wa}$, $\exists r'' \in \rel{\PointFixeEtoile} :
\rel{\pi_{h'}} \rel{\Post{a}{r}} r'' \in \Y_{w'a'}$.
As  $\Post{a}{r'} \subseteq \Post{a}{r}$ and $\Y_{w'a'}$ is $\subseteq$-downward closed,
it follows that $\rel{\pi_{h'}} \rel{\Post{a}{r'}} r'' \in \Y_{w'a'}$ and then
$r' \in \Y_{wa}$.
\end{proof}

As $Y_{wa}$ is $\subseteq$-downward closed, it can be 
described by the antichain  $\lceil \Y_{wa} \rceil$ of its maximal elements.
Let $w = w'a'\lin{h'}$ with $a \in \Sigma$ 
and $h' \in \HdgSigma$, 
the next lemma shows that it is possible to compute $Y_{wa}$ from
$\lceil \Y_{w'a'} \rceil$ without knowing the whole set $\Y_{w'a'}$.

\begin{lemma}
For $r \in  \rel{\PointFixeEtoile}$,  $r \in \Y_{wa}$ iff there exist $r' \in 
\lfloor \rel{\PointFixeEtoile} \rfloor$ and
$s \in \lceil \Y_{w'a'} \rceil$ such that  $\rel{\pi_{h'}} \rel{\Post{a}{r}} r' \subseteq s$.
\end{lemma}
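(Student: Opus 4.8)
The plan is to unfold Definition~\ref{def-Y_wa} and reduce the claim to two structural facts that are already available: that $\Y_{w'a'}$ is $\subseteq$-downward closed (the preceding lemma), and that composition of relations is $\subseteq$-monotone (the second observation in the proof of Lemma~\ref{lemma_postallstar_and_subset_order}). Unfolding the definition, $r \in \Y_{wa}$ holds exactly when $\rel{\pih{h'}} \rel{\Post{a}{r}} \in \Pref{\Y_{w'a'}}$, i.e.\ when there is some $r'' \in \rel{\PointFixeEtoile}$ with $\rel{\pih{h'}} \rel{\Post{a}{r}} r'' \in \Y_{w'a'}$. Thus the entire content of the lemma is that this arbitrary witness $r''$ may be taken $\subseteq$-minimal in $\rel{\PointFixeEtoile}$, and that membership in $\Y_{w'a'}$ may be read off from the maximal antichain $\lceil \Y_{w'a'} \rceil$. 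I will additionally use that $\rel{\PointFixeEtoile}$ is finite, so every element dominates some $\subseteq$-minimal element, and that a finite $\subseteq$-downward closed set is exactly the downward closure of its $\subseteq$-maximal elements.

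For the forward direction I would start from such a witness $r''$ and choose $r' \in \lfloor \rel{\PointFixeEtoile} \rfloor$ with $r' \subseteq r''$, which exists by finiteness. Monotonicity then gives $\rel{\pih{h'}} \rel{\Post{a}{r}} r' \subseteq \rel{\pih{h'}} \rel{\Post{a}{r}} r''$. Since $\rel{\pih{h'}} \rel{\Post{a}{r}} r'' \in \Y_{w'a'}$ and $\Y_{w'a'}$ is downward closed, it lies below some $s \in \lceil \Y_{w'a'} \rceil$; chaining the two inclusions yields $\rel{\pih{h'}} \rel{\Post{a}{r}} r' \subseteq s$, as required.

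For the converse I would assume $\rel{\pih{h'}} \rel{\Post{a}{r}} r' \subseteq s$ for some $r' \in \lfloor \rel{\PointFixeEtoile} \rfloor$ and $s \in \lceil \Y_{w'a'} \rceil$. As $s \in \Y_{w'a'}$ and $\Y_{w'a'}$ is downward closed, $\rel{\pih{h'}} \rel{\Post{a}{r}} r' \in \Y_{w'a'}$. Since $r' \in \rel{\PointFixeEtoile}$, this $r'$ is a legitimate witness for $\rel{\pih{h'}} \rel{\Post{a}{r}} \in \Pref{\Y_{w'a'}}$, hence $r \in \Y_{wa}$. The point deserving care is the side condition hidden in $\Pref{\cdot}$, namely that $\rel{\pih{h'}} \rel{\Post{a}{r}}$ itself belongs to $\rel{\PointFixeEtoile}$: writing $r = \rel{\pih{h}}$, Lemmas~\ref{lem_macrostate_Pi} and~\ref{post_a_rel_pi} identify $\Post{a}{r}$ with the macrostate $\Pt{t}$ of the tree $t$ with root $a$ and children hedge $h$, so $\pih{h'}\,\Post{a}{r} = \pih{g}$ for the hedge $g$ obtained by appending $t$ to $h'$, whence its relation lies in $\rel{\PointFixeEtoile}$.

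I expect no genuine combinatorial obstacle here; the argument is bookkeeping once downward-closedness, monotonicity and finiteness are in place. The only delicate points are applying monotonicity to the correct composition and in the correct direction, and verifying that the membership side condition of $\Pref{\cdot}$ is automatically met rather than an extra hypothesis — which is precisely where the tree/hedge reading of $\Post{a}{r}$ via Lemmas~\ref{lem_macrostate_Pi} and~\ref{post_a_rel_pi} is needed.
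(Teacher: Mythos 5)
Your proof is correct and follows essentially the same route as the paper's, which establishes the statement by the same chain of equivalences (unfold $\Pref{\cdot}$, replace membership in the downward-closed set $\Y_{w'a'}$ by inclusion in a maximal element, then shrink the witness to a minimal element of $\rel{\PointFixeEtoile}$ via monotonicity of composition). Your explicit check that $\rel{\pih{h'}}\rel{\Post{a}{r}}$ and the composed relation lie in $\rel{\PointFixeEtoile}$ is a point the paper leaves implicit, but it does not change the argument.
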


\begin{proof}
\begin{eqnarray*}
r \in \Y_{wa} &\iff&  
\exists r' \in  \rel{\PointFixeEtoile}  : \rel{\pih{h'}} \rel{\Post{a}{r}}r' \in \Y_{w'a'} \quad (\text{Def.~\ref{def-Y_wa}}) \\
&\iff&\exists r' \in \rel{\PointFixeEtoile} , \exists s \in \lceil \Y_{w'a'} \rceil : \rel{\pi_{h'}} \rel{\Post{a}{r}} r' \subseteq s \\
&\iff&\exists r' \in \lfloor \rel{\PointFixeEtoile} \rfloor, \exists s \in \lceil \Y_{w'a'} \rceil : \rel{\pi_{h'}} \rel{\Post{a}{r}} r' \subseteq s
\end{eqnarray*}
\end{proof}

Based on the previous lemma, 
Algorithm \ref{algo:YwaFromYw'a'}  is an optimized version of Function \textproc{NextOpenTag}($u,a$)
which computes $Y = \lceil \Y_{wa} \rceil$ from $Y' = \lceil \Y_{w'a'} \rceil$ without computing the entire set $\Y_{wa}$. The idea is to have a set, called  \emph{Candidates},
containing all elements that could be potentially in $\Y$. 
Initially, it is the set  $\rel{\PointFixeEtoile}$. Otherwise, suppose that 
$\Y$ has been partially computed, then \emph{Candidates} is the set 
$\rel{\PointFixeEtoile} \setminus \{r' \mid \exists r \in \Y : r' \subseteq r \}$.
Function \textproc{MaximalElement}(\emph{Candidates}) returns a maximal element of the set
\emph{Candidates}.

\begin{algorithm}
\newcommand{\pile}{{\it Stack}}
\newcommand{\push}{\textproc{Push}}
\newcommand{\pop}{\textproc{Pop}}
\newcommand{\Top}{\textproc{Top}}
\newcommand{\Candidates}{{\it Candidates}}
\begin{algorithmic}
\Function{OptNextOpenTag}{$u, a$}
     \State $(\Y',r',a') \gets \Top(\pile)$ 
      \State $\Y \gets \emptyset$
      \State $\Candidates \gets \rel{\PointFixeEtoile}$
      \While{$\Candidates \not = \emptyset$}
          \State $r \gets \textproc{MaximalElement}(\Candidates)$
          \If {$\exists r'' \in \lfloor \rel{\PointFixeEtoile} \rfloor,
              \exists s \in \Y' :  r' \circ \rel{\Post{a}{r}} \circ r'' \subseteq s$}
              \State $\Y \gets \Y \cup \{ r \}$
              \State $\Candidates \gets 
              \{ r' \in \Candidates \mid r'\not  \subseteq r \}$ 
         \Else
              \State $\Candidates \gets \Candidates \backslash \{ r \}$
         \EndIf
    \EndWhile
    \If {$\Y = \emptyset$}
	 \State \Return $\AutA$ is $ua$-universal 
    \Else
	\State $\push(\pile,(\Y,\id,a))$
    \EndIf 
\EndFunction
\end{algorithmic}
\caption{Optimized Function  \textproc{NextOpenTag}}
\label{algo:YwaFromYw'a'}
\end{algorithm}

\section{Safe configurations approach} \label{sec:safe}

We present an algorithm for testing $u$-universality of 
a non-deterministic visibly pushdown automaton $\AutA$.
This algorithm is a generalization of the algorithm for the deterministic
case \cite{GauwinNiehrenTison09b}, 
adding several optimizations to avoid huge computations.
As in Section~\ref{sec:HedgeAutomataAndu-universality}, 
the algorithm is incremental in the sense that the 
linearization $\lin{t_0}$ of a given tree $t_0$ is read letter by letter, 
and while $\AutA$ is not $u$-universal for the current read prefix 
$u$ of $\lin{t_0}$, the next letter of $\lin{t_0}$ is read.

\subsection{Safe configurations}

In the deterministic case \cite{GauwinNiehrenTison09b}, the algorithm
relies on the incremental computation of the set of safe states.
In the non-deterministic case, safe states are not enough to decide 
$u$-universa\-lity.
Indeed In \cite{GauwinNiehrenTison09b}, safe states are computed
according to the unique run of the deterministic automaton on $u$.
In fact, safe configurations $(q,\stack)$ are considered, 
but all these configurations have the same stack $\stack$ here, so only
states $q$ have to be stored.
When the automaton is non-deterministic, we may have several runs on $u$,
and each of them may use a different stack.
All these stacks have to be considered for testing $u$-universality,
so we cannot consider only states.

Therefore, we have to consider \emph{safe configurations},
or more precisely sets of safe configurations as described in the next definition.
We use notions about \VPAs that are defined in Section~\ref{sec:VPA}, 
as well sets of configurations
that are antichains with respect to $\subseteq$, 
or $\subseteq$-upward (resp. $\subseteq$-downward) closed sets 
(see Section~\ref{sec:antichain-univ}).

\begin{definition}
Let $\AutA$ be a VPA and $\ConfSet\subseteq Q\times\Gamma^*$ be a set of configurations.
Let $u\in\PPref\TSigma$ be a prefix.
\begin{itemize}
\item $\ConfSet$ \emph{is safe for} $u$ if 
for every $v$ such that $uv\in\lin\TSigma$,
there exist $(q,\stack)\in\ConfSet$ and $p\in Q_f$ such that
$(q,\stack)\xrightarrow{v}(p,\epsilon)$ in $\AutA$. 
\item $\ConfSet$ \emph{is leaf-safe for $u$} if
for every $v=\overline av'$ with $\overline a\in\overline\Sigma$ 
such that $uv\in\lin\TSigma$,
there exist $(q,\stack)\in\ConfSet$ and $p\in Q_f$ such that
$(q,\stack)\xrightarrow{v}(p,\epsilon)$ in $\AutA$. 
\end{itemize}
We write 
$\Safe{u}$ for $\{\ConfSet \mid \ConfSet \text{ is safe for } u\}$
and 
$\LSafe{u}$ for $\{\ConfSet \mid \ConfSet $ is leaf-safe
for $ u\}$.
\end{definition}

Intuitively, as stated in Theorem~\ref{thm:prefix-universal-iff-safe} below,
if $\ConfSet$ is the set of configurations reached in $\AutA$
after reading $u$, then $\AutA$ is $u$-universal iff $\ConfSet$ is safe for $u$.
Indeed, for every possible $v$, one can find in $\ConfSet$ at least one
configuration leading to an accepting configuration after reading $v$.
We first note that, from the definitions, if a set of configurations $\ConfSet$
is safe (resp. leaf-safe) for $u$, then a larger set $\ConfSet'$ is also safe (resp. leaf-safe) for $u$. 

\begin{lemma}
\label{lem:safe-upward-closed}
$\Safe{u}$ and $\LSafe{u}$ are $\subseteq$-upward closed sets.
\end{lemma}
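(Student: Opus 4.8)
The plan is to prove both claims directly from the definitions by monotonicity. The statement asserts that $\Safe{u}$ and $\LSafe{u}$ are $\subseteq$-upward closed, which unwinds to: if $\ConfSet$ is safe (resp.\ leaf-safe) for $u$ and $\ConfSet\subseteq\ConfSet'$, then $\ConfSet'$ is safe (resp.\ leaf-safe) for $u$. This is exactly the informal remark made just before the lemma, so the proof is essentially the observation that enlarging the set of configurations can only help satisfy the existential witness condition.

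First I would fix $\ConfSet\subseteq\ConfSet'$ with $\ConfSet\in\Safe u$. To show $\ConfSet'\in\Safe u$, I take an arbitrary $v$ with $uv\in\lin\TSigma$. Since $\ConfSet$ is safe for $u$, there exist $(q,\stack)\in\ConfSet$ and $p\in Q_f$ with $(q,\stack)\xrightarrow{v}(p,\epsilon)$. Because $\ConfSet\subseteq\ConfSet'$, the same configuration $(q,\stack)$ lies in $\ConfSet'$, so the identical witness shows the required property for $\ConfSet'$. As $v$ was arbitrary, $\ConfSet'$ is safe for $u$. The leaf-safe case is handled by the verbatim same argument, merely restricting attention to those $v$ of the form $\overline a v'$ with $\overline a\in\overline\Sigma$; the witness configuration from $\ConfSet$ again survives in $\ConfSet'$ unchanged.

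There is no real obstacle here: the key point is simply that the defining condition for (leaf-)safety is an existential quantification over configurations in the set, and such conditions are trivially monotone under set inclusion. The proof therefore requires no new machinery beyond the definitions of $\Safe u$ and $\LSafe u$. I would write it compactly as follows.

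\begin{proof}
Let $\ConfSet\subseteq\ConfSet'$ be sets of configurations and suppose $\ConfSet$ is safe for $u$. For every $v$ with $uv\in\lin\TSigma$, there exist $(q,\stack)\in\ConfSet$ and $p\in Q_f$ such that $(q,\stack)\xrightarrow{v}(p,\epsilon)$. Since $\ConfSet\subseteq\ConfSet'$, we have $(q,\stack)\in\ConfSet'$, so the same configuration witnesses the safety condition for $\ConfSet'$. Hence $\ConfSet'$ is safe for $u$, showing that $\Safe{u}$ is $\subseteq$-upward closed. The argument for $\LSafe{u}$ is identical, restricting $v$ to words of the form $\overline a v'$ with $\overline a\in\overline\Sigma$.
\end{proof}
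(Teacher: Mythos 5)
Your proof is correct and matches the paper exactly: the paper states this lemma without a formal proof, remarking only that it follows immediately from the definitions because enlarging the set of configurations preserves the existential witness, which is precisely the argument you spell out. Nothing further is needed.
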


Let $\Reach{u}$ denote the set of configurations $(q,\stack)$ such that
$(q_0,\stack_0)\xrightarrow{u} (q,\stack)$ 
for some initial configuration $(q_0,\stack_0)$ of $\AutA$.

\begin{theorem}
\label{thm:prefix-universal-iff-safe}
$\AutA$ is $u$-universal iff $\Reach{u}\in\Safe{u}$.
\end{theorem}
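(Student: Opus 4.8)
The plan is to prove the two directions separately, using the definition of $u$-universality and the definition of $\Safe{u}$, with $\Reach u$ playing the role of the set of configurations actually reachable after reading the prefix $u$.

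First I would unfold both sides into a statement quantified over continuations $v$. By definition, $\AutA$ is $u$-universal iff every tree $t$ whose linearization $\lin t$ begins with $u$ is accepted, i.e. iff for every $v$ with $uv\in\lin\TSigma$ there is an accepting run of $\AutA$ on $uv=\lin t$. On the other side, $\Reach u\in\Safe u$ means, by definition of $\Safe u$, that for every $v$ with $uv\in\lin\TSigma$ there exist $(q,\stack)\in\Reach u$ and $p\in Q_f$ with $(q,\stack)\xrightarrow{v}(p,\epsilon)$. So the whole theorem reduces to showing, for each fixed $v$ with $uv\in\lin\TSigma$, the equivalence
\begin{equation}
\text{$\exists$ accepting run of $\AutA$ on $uv$}
\iff
\exists (q,\stack)\in\Reach u,\ p\in Q_f:\ (q,\stack)\xrightarrow{v}(p,\epsilon).
\end{equation}

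The key step is to split any run of $\AutA$ on $uv$ at the position between $u$ and $v$. For the forward direction, an accepting run on $uv$ starts in an initial configuration $(q_0,\epsilon)$ and reaches, after reading $u$, some intermediate configuration $(q,\stack)$, which by definition lies in $\Reach u$; the remainder of the run witnesses $(q,\stack)\xrightarrow{v}(p,\epsilon)$ with $p\in Q_f$ (the final configuration being final forces empty stack and a final state). For the converse, given $(q,\stack)\in\Reach u$ there is by definition an initial configuration $(q_0,\epsilon)$ with $(q_0,\epsilon)\xrightarrow{u}(q,\stack)$; concatenating this with the run $(q,\stack)\xrightarrow{v}(p,\epsilon)$ yields a run $(q_0,\epsilon)\xrightarrow{uv}(p,\epsilon)$ on $\lin t=uv$ ending in a final configuration, hence accepting. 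This composition of runs is justified by the definition of $\xrightarrow{\cdot}$ on words, which is exactly closed under concatenation.

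The one point requiring a little care — and the main obstacle, though a mild one — is bookkeeping around the stack and the acceptance condition. Since $\AutA$ operates only on well-matched single-rooted linearizations, the accepting condition is reaching a final state \emph{on empty stack}; I must confirm that the final configuration of the composed run is genuinely final in the sense of the definition, i.e.\ $p\in Q_f$ and empty stack, and that the intermediate stack $\stack$ (generally non-empty, reflecting the ancestors encoded by $\open u$) matches consistently between the two halves of the split. Because the $\xrightarrow{\cdot}$ relation tracks stack contents exactly and $\Reach u$ is defined to record precisely those reachable pairs $(q,\stack)$, the split and the concatenation preserve the stack content at the cut position automatically, so no mismatch can occur. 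Once this is observed, both directions follow directly and the theorem is established.
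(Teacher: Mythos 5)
Your proof is correct and follows essentially the same route as the paper: both directions reduce to splitting or concatenating runs on $uv$ at the boundary between $u$ and $v$, with the stack content at the cut handled by the definitions of $\xrightarrow{\cdot}$ and $\Reach{u}$. The only cosmetic difference is that the paper's forward direction first builds a safe subset $\ConfSet\subseteq\Reach{u}$ and then invokes upward-closedness of $\Safe{u}$, whereas you observe directly that the intermediate configuration of an accepting run already lies in $\Reach{u}$; both are valid.
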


\begin{proof}
$(\Rightarrow)$ Assume that $\AutA$ is $u$-universal.
Consider the set $\ConfSet$ of configurations $(q,\stack)$ of $\AutA$ such that
there exists
$v\in (\Sigma\cup\overline\Sigma)^*$, $q_i\in Q_i$ and $q_f\in Q_f$ 
verifying $uv\in\lin\TSigma$ and
$(q_i,\epsilon) \xrightarrow {u} (q,\stack) \xrightarrow {v} (q_f,\epsilon)$.
We have $\ConfSet\subseteq\Reach{u}$.

Let $v$ be such that $uv\in\lin\TSigma$. As $\AutA$ is $u$-universal,
there exists a configuration $(q,\stack)\in\ConfSet$ such that 
$(q,\stack) \xrightarrow{v} (q_f,\epsilon)$ with $q_f\in Q_f$.
Hence $\ConfSet\in\Safe{u}$.
By Lemma~\ref{lem:safe-upward-closed}, we get $\Reach{u}\in\Safe{u}$.

$(\Leftarrow)$ Assume now that $\Reach{u}\in\Safe{u}$, and let $v$
be such that $uv\in\lin\TSigma$.
As $\Reach{u}\in\Safe{u}$, there exists $(q,\stack)\in\Reach{u}$ and $p\in Q_f$
such that $(q,\stack) \xrightarrow{v} (p,\epsilon)$.
Thus, $uv\in\La\AutA$, and $\AutA$ is $u$-universal.
\end{proof}

\subsection{Incremental definition of safe configurations}

In this section, we detail how set $\Safe{u}$ of safe configurations can be defined from set $\Safe{u'}$ 
with $u'$ a proper prefix of $u$. In this way, while reading the linearization $\lin{t_0}$ of a given tree $t_0$, set $\Safe{u}$ with $u$ prefix of $\lin{t_0}$, can be incrementally defined. In the next section,
we will turn this approach into an algorithm.

\subsubsection{Starting point}

The starting point is to begin with $\Safe{a}$ for which we recall the definition.
$$\Safe{a} = \{ \ConfSet \mid \forall h \in \HdgSigma,
\exists q_f \in Q_f, \exists (q, \stack) \in \ConfSet :
(q, \stack)  \xrightarrow{h\ol{a}}  (q_f, \epsilon) \}.$$

\subsubsection{Reading a letter $\ol a\in\ol\Sigma$}

When reading an $\overline a\in\overline\Sigma$, we can 
retrieve safe configurations from prior sets of safe configurations:
$$
\Safe{u\overline a} = \Safe{u'} 
$$
where $u'$ is the unique prefix of $u$ such that $u = u' a \lin h$.
Indeed as shown by Lemma~\ref{lem:safe-invariant-by-hedge} below, 
we have $\Safe{u' a \lin h \overline a} = \Safe{u'}$.

Hence, from an algorithmic point of view, 
we just have to use a stack to store these safe configurations.
When opening $a$, we put $\Safe{u'}$ on the stack,
and when closing $\overline a$, we pop it.
As $h$ is a hedge, the stack before reading $\ol a$
is exactly the stack after reading $a$.

\begin{lemma}
\label{lem:safe-invariant-by-hedge}
If $h\in\HdgSigma$, then
$\Safe{u \lin h} = \Safe{u}$
and 
$\LSafe{u \lin h} = \LSafe{u}$.
\end{lemma}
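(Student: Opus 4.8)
The plan is to reduce both equalities to a single combinatorial fact: appending the well-balanced word $\lin h$ to a proper prefix does not change the sequence of pending open tags, and therefore does not change the set of words that complete it into a full tree linearization. Since safety and leaf-safety are defined purely by quantifying over such completions, $\Safe{u\lin h}$ and $\Safe{u}$ (resp.\ $\LSafe{u\lin h}$ and $\LSafe{u}$) will then be defined by literally the same condition on $\ConfSet$.

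First I would record the structural observation $\open{u\lin h}=\open u$. Writing $u=a_1\lin{h_1}\cdots a_n\lin{h_n}$ with $a_i\in\Sigma$ and $h_i\in\HdgSigma$, concatenation gives $u\lin h=a_1\lin{h_1}\cdots a_n\lin{h_n h}$, where $h_n h$ is again a hedge and $\lin{h_n}\lin h=\lin{h_n h}$; hence $\open{u\lin h}=a_1\cdots a_n=\open u$. Intuitively, $\lin h$ opens and closes every tag it contains, leaving the list of currently open ancestors untouched (cf.\ Figure~\ref{open_u}). Second, I would show the completion sets coincide: $\{v\mid uv\in\lin\TSigma\}=\{v\mid u\lin h\,v\in\lin\TSigma\}$. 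The point is that whether a prefix $x\in\PPref\TSigma$ can be completed by $v$ depends only on the stack of pending open tags $\open x$: any completion must, from the current position, possibly add further completed subtrees, then emit $\overline{a_n}$, add siblings higher up, emit $\overline{a_{n-1}}$, and so on up to $\overline{a_1}$. Since $\open{u\lin h}=\open u$, the admissible $v$ are identical for $u$ and $u\lin h$. Equivalently, one verifies both inclusions directly: if $uv\in\lin\TSigma$ then inserting the balanced block $\lin h$ right after $\lin{h_n}$, inside the still-open $a_n$-node, again yields a tree linearization, and conversely deleting this balanced factor from $u\lin h\,v\in\lin\TSigma$ preserves membership in $\lin\TSigma$.

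Finally I would conclude. For safety, $\ConfSet\in\Safe{u}$ asserts that for every $v$ with $uv\in\lin\TSigma$ there exist $(q,\stack)\in\ConfSet$ and $p\in Q_f$ with $(q,\stack)\xrightarrow{v}(p,\epsilon)$; this inner condition refers only to $\ConfSet$ and $v$, not to $u$. Since the range of $v$ is the same for $u$ and $u\lin h$ by the previous step, the defining conditions of $\Safe{u\lin h}$ and $\Safe{u}$ coincide, giving $\Safe{u\lin h}=\Safe{u}$. For leaf-safety the only change is that $v$ is further restricted to begin with a symbol of $\overline\Sigma$; intersecting the two equal completion sets with this restriction again yields equal sets, so $\LSafe{u\lin h}=\LSafe{u}$.

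The main obstacle is the rigorous justification of the completion-set equality in the second step: making precise, via the pending-open-tag stack (or via insertion and deletion of a balanced factor), that no completion of $u$ or of $u\lin h$ can ``peek inside'' the well-balanced block $\lin h$, and that the matching and single-rootedness requirements of $\lin\TSigma$ are respected under this insertion. Once this is established the lemma is immediate, since both statements are then equalities between sets cut out by identical predicates.
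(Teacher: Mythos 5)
Your proposal is correct and follows essentially the same route as the paper: both reduce the lemma to the observation that $\{v \mid uv\in\lin\TSigma\} = \{v \mid u\lin{h}v\in\lin\TSigma\}$ because $\lin h$ is a balanced hedge linearization, after which the defining predicates of $\Safe{u}$ and $\Safe{u\lin h}$ (and their leaf-safe variants) coincide since they depend only on $\ConfSet$ and $v$. The only difference is that you spell out the completion-set equality via $\open{u\lin h}=\open{u}$, whereas the paper simply asserts it with ``as $h$ is a hedge''.
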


\begin{proof}
$(\supseteq)$ 
Assume $\ConfSet\in\Safe{u}$, 
and let $v$ be such that $u\lin h v\in\lin\TSigma$.
As $h$ is a hedge, we have $uv\in\lin\TSigma$.
As $\ConfSet\in\Safe{u}$, there exists $(q,\stack)\in\ConfSet$ such that
$(q,\stack) \xrightarrow{v} (p,\epsilon)$ with $p\in Q_f$.
So $\ConfSet\in\Safe{u\lin h}$.

$(\subseteq)$
Conversely, assume $\ConfSet\in\Safe{u\lin h}$.
Let $v$ be such that $uv\in\lin\TSigma$.
We also have $u\lin h v\in\lin\TSigma$, so there exists 
$(q,\stack)\in\ConfSet$ such that 
$(q,\stack) \xrightarrow{v} (p,\epsilon)$ with $p\in Q_f$.
Thus $\ConfSet\in\Safe{u}$.

The proof is the same for $\LSafe{u \lin h} = \LSafe{u}$,
except that we only consider $v$ of the form $\overline a v'$.
\end{proof}

In the rest of Section~\ref{sec:safe}, we only treat sets $\Safe{ua}$ since the way of computing sets $\Safe{u\overline{a}}$ has been just detailed. The case of sets $\Safe{ua}$ is much more involved.

\subsubsection{Reading a letter $a\in\Sigma$}
%
When reading an $a\in\Sigma$,
two successive steps are performed,
with leaf-safe configurations as intermediate object:
$$
\Safe{u} \qquad \xrightarrow{\text{Step 1}} \qquad \LSafe{ua} 
\qquad \xrightarrow{\text{Step 2}} \qquad \Safe{ua}
$$ 
We now detail Step 1 and Step 2, i.e.
how $\LSafe{ua}$ can be defined from $\Safe{u}$,
and how $\Safe{ua}$ is defined from $\LSafe{ua}$.
Proposition~\ref{prop:lsafe-safe} gives a first idea of these links.
Equivalence~(\ref{eqn:safe-lsafe}) states that a set of configurations 
$\ConfSet$ is leaf-safe for $ua$ iff after performing a $\Post{\ol a}{\ConfSet}$
we get a safe set of configurations for $u$.
Equivalence~(\ref{eqn:lsafe-safe}) states that safe configurations for $ua$
are those from which traversing any hedge leads to a leaf-safe 
set of configurations, i.e. one can safely close the $a$-node.
Proposition~\ref{prop:lsafe-safe} thus relates sets $\Safe{u}$, $\LSafe{ua}$, and $\Safe{ua}$, 
however backwardly. Proposition~\ref{prop:lsafe-safe-pred} hereafter will relates them in the
right direction.

\begin{proposition} \label{prop:Post}
\label{prop:lsafe-safe}
Let $ua\in\PPref\TSigma$ with $a \in \Sigma$.
\begin{eqnarray}
\ConfSet\in\LSafe{ua} \iff \Post{\overline a}{\ConfSet} \in \Safe{u} 
\label{eqn:safe-lsafe}\\
\ConfSet\in\Safe{ua} \iff
\forall h\in\HdgSigma, 
\Post{\lin h}{\ConfSet} \in \LSafe{ua}
\label{eqn:lsafe-safe}
\end{eqnarray}
\end{proposition}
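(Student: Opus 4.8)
The plan is to prove each equivalence by unfolding the definitions of $\Safe{\cdot}$ and $\LSafe{\cdot}$ and tracking how the $\Post{}{\cdot}$ operator interacts with the acceptance condition. For Equivalence~(\ref{eqn:safe-lsafe}), I would start from the definition of leaf-safety for $ua$: a set $\ConfSet$ is leaf-safe for $ua$ iff for every word of the form $\ol a v'$ with $ua\ol a v'\in\lin\TSigma$, some $(q,\stack)\in\ConfSet$ reaches a final configuration $(p,\epsilon)$ by reading $\ol a v'$. The key observation is that reading $\ol a$ from $(q,\stack)$ lands exactly in $\Post{\ol a}{\ConfSet}$, so the remaining obligation ``read $v'$ from there and reach $(p,\epsilon)$'' is precisely the statement that $\Post{\ol a}{\ConfSet}\in\Safe{u}$. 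The one point requiring care is the correspondence between the words $v'$ relevant on the right (those with $uv'\in\lin\TSigma$) and the words $\ol a v'$ relevant on the left (those with $ua\ol a v'\in\lin\TSigma$); since $ua$ opens an $a$-node with no children yet, closing it immediately with $\ol a$ yields a leaf, and $ua\ol a v'\in\lin\TSigma$ iff $uv'\in\lin\TSigma$, so the two sets of continuations coincide.

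For Equivalence~(\ref{eqn:lsafe-safe}), I would again unfold $\Safe{ua}$: $\ConfSet$ is safe for $ua$ iff for every $v$ with $uav\in\lin\TSigma$ there is a configuration in $\ConfSet$ reaching a final one by $v$. The idea is that any such continuation $v$ of $ua$ must first finish off the current $a$-node by reading the linearization $\lin h$ of some hedge $h$ (the children of the $a$-node), then read a closing $\ol a$, then continue; so $v$ factors as $v=\lin h\,\ol a\,v'$. Reading $\lin h$ from $\ConfSet$ produces $\Post{\lin h}{\ConfSet}$, and from that point the obligation to read $\ol a v'$ to a final configuration is exactly leaf-safety of $\Post{\lin h}{\ConfSet}$ for $ua$. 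Quantifying over all hedges $h$ and invoking the already-established characterization of $\LSafe{ua}$ gives the right-hand side. The direction from right to left assembles a witness configuration by composing the run on $\lin h$ (landing in $\Post{\lin h}{\ConfSet}$) with the leaf-safe continuation on $\ol a v'$.

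The main obstacle I anticipate is the bookkeeping of the well-matched structure: I must argue that every continuation $v$ of $ua$ with $uav\in\lin\TSigma$ decomposes uniquely as $\lin h\,\ol a\,v'$ where $h\in\HdgSigma$ and $\ol a$ is the return matching the freshly opened call $a$. This is where the visibly pushdown discipline and the single-rooted well-matched assumption on linearizations are used, ensuring the stack symbol pushed on $a$ is popped exactly at the matching $\ol a$, and that $h$ ranges over all of $\HdgSigma$ as $v$ ranges over all valid continuations. Once this decomposition is justified, both equivalences follow by composing and decomposing runs through the $\Post{}{\cdot}$ operator, with no further subtlety.
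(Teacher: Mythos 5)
Your proposal is correct and follows essentially the same route as the paper: unfold the definitions of safety and leaf-safety, use the fact that every continuation of $ua$ decomposes as $\lin h\,\ol a\,v'$ with $h\in\HdgSigma$, and compose/decompose runs through the $\PostOp$ operator. The continuation correspondence you flag for Equivalence~(\ref{eqn:safe-lsafe}) is exactly what the paper packages as Lemma~\ref{lem:safe-invariant-by-hedge} (applied to the one-leaf hedge, giving $\Safe{u}=\Safe{ua\ol a}$), so no new idea is needed beyond what you describe.
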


\begin{proof}
$(\ref{eqn:safe-lsafe}, \Rightarrow)$
Let $\ConfSet \in \LSafe{ua}$ and $\ConfSet'=\Post{\overline a}{\ConfSet}$.
Let us show that $\ConfSet'\in\Safe{u}$.
By Lemma~\ref{lem:safe-invariant-by-hedge},
it is sufficient to prove that $\ConfSet'\in\Safe{ua\overline a}$.
Let $v$ such that $ua\overline a v\in\lin\TSigma$.
As $\ConfSet\in\LSafe{ua}$ and 
$\overline a v$ starts with $\overline a\in\overline\Sigma$,
there exists $(q,\stack)\in\ConfSet$ and $(q',\stack')$ such that 
$(q,\stack) \xrightarrow{\overline a} (q',\stack')
\xrightarrow{v} (p,\epsilon)$ for some $p\in Q_f$.
By definition of $\Post{\overline a}{\ConfSet}$ we have 
$(q',\stack')\in\ConfSet'$ and thus $\ConfSet'\in\Safe{ua\overline a}$.

$(\ref{eqn:safe-lsafe}, \Leftarrow)$
For the converse, let $\ConfSet'=\Post{\overline a}{\ConfSet}
\in \Safe{u}=\Safe{ua\overline a}$. Let us show that $\ConfSet\in\LSafe{ua}$.
Let $v$ be such that $uav\in\lin\TSigma$ and $v=\overline b v'$.
We necessarily have $\overline a=\overline b$.
As $\ConfSet'\in\Safe{ua\overline a}$, there exists $(q',\stack')\in\ConfSet'$
such that $(q',\stack') \xrightarrow{v'} (p,\epsilon)$ with $p\in Q_f$.
By definition of $\Post{\overline a}{\ConfSet}$, there also exists
$(q,\stack)\in\ConfSet$ such that 
$(q,\stack) \xrightarrow{\overline a}(q',\stack')$
and thus $(q,\stack) \xrightarrow{v=\overline a v'} (p,\epsilon)$ 
with $p\in Q_f$.

$(\ref{eqn:lsafe-safe}, \Rightarrow)$
Let $\ConfSet\in\Safe{ua}$ and $h\in\HdgSigma$.
Let us show that $\ConfSet'=\Post{\lin h}{\ConfSet}$ is in $\LSafe{ua}$.
Let $v$ such that $uav\in\lin\TSigma$ and $v=\overline b v'$.
We must have $\overline a=\overline b$.
We also have $ua\lin{h}\overline av'\in\lin\TSigma$.
As $\ConfSet\in\Safe{ua}$, there exists $(q,\stack)\in\ConfSet$ such that
$(q,\stack) \xrightarrow{\lin h} (q',\stack') \xrightarrow{v=\overline a v'}
(p,\epsilon)$ with $p\in Q_f$.
By definition of $\Post{\lin h}{\ConfSet}$, $(q',\stack')\in\ConfSet'$.

$(\ref{eqn:lsafe-safe}, \Leftarrow)$
Let us assume that for every hedge $h\in\HdgSigma$, 
$\Post{\lin h}{\ConfSet}\in\LSafe{ua}$.
Let us show that $\ConfSet\in\Safe{ua}$.
Let $v$ be such that $uav\in\lin\TSigma$.
Then we have $uav=ua\lin{h}\overline av'$ for some $h \in \HdgSigma$.
As $\ConfSet'=\Post{\lin{h}}{\ConfSet}\in\LSafe{ua}$,
$\ol a v'$ starts with $\ol a\in\ol\Sigma$
and $ua\ol a v'\in\lin\TSigma$,
there exists $(q',\stack')\in\ConfSet'$ such that 
$(q',\stack') \xrightarrow{\ol a v'} (p,\epsilon)$ for some $p\in Q_f$.
Hence, by definition of $\Post{\lin{h}}{\ConfSet}$, there also exists
$(q,\stack)\in\ConfSet$ such that
$(q,\stack) \xrightarrow{\lin{h}} (q',\stack') 
\xrightarrow{\ol a v'} (p,\epsilon)$ with $p\in Q_f$.
\end{proof}

We propose now the notion of \emph{predecessor} in a way to get Step~1 and Step~2 in the right direction.
\begin{definition}
Let $\ConfSet,\ConfSet'$ be two sets of configurations,
$\ol a\in\ol\Sigma$ and $h\in\HdgSigma$.
\begin{itemize}
\item $\ConfSet$ \emph{is an $\ol a$-predecessor of} $\ConfSet'$
if $\forall (q',\stack')\in\ConfSet',\ 
\exists (q,\stack)\in\ConfSet,\ 
(q,\stack) \xrightarrow{\ol a} (q',\stack')$.
\item $\ConfSet$ \emph{is an $h$-predecessor of} $\ConfSet'$
if $\forall (q',\stack')\in\ConfSet',\ 
\exists (q,\stack)\in\ConfSet,\ 
(q,\stack) \xrightarrow{\lin{h}} (q',\stack')$.
\end{itemize}
Let $\Pred{\ol a}{\ConfSet'} = \{ \ConfSet \mid 
\ConfSet \text{ is an $\ol a$-predecessor of }\ConfSet'\}$ and 
$\Pred{h}{\ConfSet'} = \{ \ConfSet \mid 
\ConfSet \text{ is an $h$-predecessor of }\ConfSet'\}$.
\end{definition}

From their definitions, the sets of predecessors are $\subseteq$-upward closed.

\begin{lemma}
\label{lem:pred-upward-closed} 
$\Pred{\ol a}{\ConfSet'}$ and $\Pred{h}{\ConfSet'}$ are $\subseteq$-upward closed sets.
\end{lemma}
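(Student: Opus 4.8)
The plan is to prove upward-closure directly from the definitions, by observing that the existential witness required by the predecessor condition persists when the candidate predecessor set is enlarged. I would treat the two claims in parallel, since the argument is identical up to replacing the single transition $\xrightarrow{\ol a}$ by the hedge transition $\xrightarrow{\lin h}$.

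First I would unfold what it means for, say, $\Pred{\ol a}{\ConfSet'}$ to be $\subseteq$-upward closed: I must show that whenever $\ConfSet \in \Pred{\ol a}{\ConfSet'}$ and $\ConfSet \subseteq \ConfSet''$, we also have $\ConfSet'' \in \Pred{\ol a}{\ConfSet'}$. So I would fix such a pair $\ConfSet \subseteq \ConfSet''$ with $\ConfSet$ an $\ol a$-predecessor of $\ConfSet'$, and verify that $\ConfSet''$ is an $\ol a$-predecessor of $\ConfSet'$ as well.

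The core step is then to take an arbitrary $(q',\stack') \in \ConfSet'$. Because $\ConfSet$ is an $\ol a$-predecessor of $\ConfSet'$, there exists $(q,\stack) \in \ConfSet$ with $(q,\stack) \xrightarrow{\ol a} (q',\stack')$. Since $\ConfSet \subseteq \ConfSet''$, that very configuration $(q,\stack)$ also lies in $\ConfSet''$ and still witnesses the transition to $(q',\stack')$. As $(q',\stack')$ was an arbitrary element of $\ConfSet'$, this shows that $\ConfSet''$ is an $\ol a$-predecessor of $\ConfSet'$, hence $\ConfSet'' \in \Pred{\ol a}{\ConfSet'}$. The identical argument, using $h$-predecessors and the transition $\xrightarrow{\lin h}$ in place of $\xrightarrow{\ol a}$, handles $\Pred{h}{\ConfSet'}$.

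There is no real obstacle here: the statement is immediate once one notices that both predecessor conditions merely assert the \emph{existence} of a witness in $\ConfSet$ for each element of the fixed target set $\ConfSet'$, and existential witnesses are never destroyed by passing to a superset. The only point requiring mild care is the direction of the quantifiers — the universal quantifier ranges over the fixed target $\ConfSet'$, while the existential ranges over the enlarging predecessor set — so growing $\ConfSet$ can only preserve the property, never violate it.
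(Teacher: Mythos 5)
Your proof is correct and matches the paper's reasoning: the paper simply notes that upward closure follows immediately from the definitions, and your argument is exactly the spelled-out version of that observation — the existential witness in $\ConfSet$ for each element of the fixed target $\ConfSet'$ survives enlarging $\ConfSet$ to any superset.
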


Predecessors closely relate to the $\PostOp$ operator.

\begin{lemma}
\label{lem:post-and-pred} 
$\ConfSet$ is an $\ol a$-predecessor of $\Post{\ol a}{\ConfSet}$.
If $\ConfSet$ is an $\ol a$-predecessor of $\ConfSet'$ then
$\ConfSet' \subseteq \Post{\ol a}{\ConfSet}$.
Both properties also hold for $\Post{\lin{h}}{\ConfSet}$.
\end{lemma}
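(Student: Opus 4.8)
The plan is to derive all three assertions directly from the definitions of $\Post{}{}$ and of predecessor, since in each case one statement is essentially a restatement of the other. First I would handle the first assertion, that $\ConfSet$ is an $\ol a$-predecessor of $\Post{\ol a}{\ConfSet}$. I take an arbitrary $(q',\stack')\in\Post{\ol a}{\ConfSet}$ and observe that, by the very definition of $\Post{\ol a}{\ConfSet}$, there exists $(q,\stack)\in\ConfSet$ with $(q,\stack)\xrightarrow{\ol a}(q',\stack')$. This is precisely the condition defining an $\ol a$-predecessor, so the first claim follows.

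For the second assertion, I would assume $\ConfSet$ is an $\ol a$-predecessor of $\ConfSet'$ and take an arbitrary $(q',\stack')\in\ConfSet'$. The predecessor hypothesis yields some $(q,\stack)\in\ConfSet$ with $(q,\stack)\xrightarrow{\ol a}(q',\stack')$, and the definition of $\Post{\ol a}{\ConfSet}$ then places $(q',\stack')$ in $\Post{\ol a}{\ConfSet}$. Since $(q',\stack')$ was arbitrary, this gives $\ConfSet'\subseteq\Post{\ol a}{\ConfSet}$.

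Finally, the two statements for $\Post{\lin h}{\ConfSet}$ are obtained by the identical arguments, replacing the single-letter relation $\xrightarrow{\ol a}$ by the word relation $\xrightarrow{\lin h}$ and the notion of $\ol a$-predecessor by that of $h$-predecessor. I do not expect any genuine obstacle here: both properties are immediate unfoldings of the definitions, and the only point worth stating carefully is that the existential witness supplied by the predecessor condition is exactly the witness required by the definition of $\Post{}{}$, and conversely.
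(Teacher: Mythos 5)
Your proof is correct and matches the paper's treatment: the paper states this lemma without proof precisely because, as you observe, both directions are immediate unfoldings of the definitions of $\Post{\ol a}{\ConfSet}$ and of $\ol a$-predecessor, with the hedge case obtained by substituting $\xrightarrow{\lin h}$ for $\xrightarrow{\ol a}$. Your careful identification of the existential witnesses is exactly the intended argument.
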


We can now rephrase Proposition~\ref{prop:lsafe-safe} in terms of predecessors.

\begin{proposition}
\label{prop:lsafe-safe-pred} 
Let $ua\in\PPref\TSigma$.
\begin{eqnarray}
\ConfSet\in\LSafe{ua} \iff \exists \ConfSet'\in\Safe{u},\ 
\ConfSet \text{ is an $\ol a$-predecessor of } \ConfSet'
\label{eqn:safe-lsafe-pred}\\
\ConfSet\in\Safe{ua} \iff
\text{$\forall h\in\HdgSigma$, }
\exists \ConfSet'\in\LSafe{ua},\ 
\ConfSet \text{ is a $h$-predecessor of } \ConfSet'
\label{eqn:lsafe-safe-pred}
\end{eqnarray}
\end{proposition}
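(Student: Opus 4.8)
The plan is to derive each of the two equivalences directly from Proposition~\ref{prop:lsafe-safe}, using Lemma~\ref{lem:post-and-pred} as the dictionary between the $\PostOp$ operator and the predecessor notion, and the upward-closedness of the safe sets (Lemma~\ref{lem:safe-upward-closed}) to absorb the one-sided inclusions that Lemma~\ref{lem:post-and-pred} produces. No induction is needed here, since Proposition~\ref{prop:lsafe-safe} already carries all the semantic content about linearizations; the work is purely to rewrite it in the ``forward'' predecessor form. Both equivalences have the same shape, so I would treat them in parallel.

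For~(\ref{eqn:safe-lsafe-pred}), in the forward direction I would assume $\ConfSet\in\LSafe{ua}$, apply~(\ref{eqn:safe-lsafe}) to obtain $\Post{\ol a}{\ConfSet}\in\Safe{u}$, and then exhibit $\ConfSet'=\Post{\ol a}{\ConfSet}$ as the witness: the first clause of Lemma~\ref{lem:post-and-pred} says precisely that $\ConfSet$ is an $\ol a$-predecessor of this $\ConfSet'$. In the converse direction I would start from some $\ConfSet'\in\Safe{u}$ of which $\ConfSet$ is an $\ol a$-predecessor; the second clause of Lemma~\ref{lem:post-and-pred} then gives $\ConfSet'\subseteq\Post{\ol a}{\ConfSet}$, so upward-closedness of $\Safe{u}$ (Lemma~\ref{lem:safe-upward-closed}) lifts this to $\Post{\ol a}{\ConfSet}\in\Safe{u}$, and~(\ref{eqn:safe-lsafe}) closes the argument.

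For~(\ref{eqn:lsafe-safe-pred}) the reasoning is identical, but quantified over all $h\in\HdgSigma$ and using the $h$-versions of Lemma~\ref{lem:post-and-pred}. Forward: from $\ConfSet\in\Safe{ua}$ and~(\ref{eqn:lsafe-safe}) I get $\Post{\lin h}{\ConfSet}\in\LSafe{ua}$ for every $h$, and Lemma~\ref{lem:post-and-pred} makes $\ConfSet$ an $h$-predecessor of that set. Converse: for each $h$ I pick the promised $\ConfSet'\in\LSafe{ua}$, combine $\ConfSet'\subseteq\Post{\lin h}{\ConfSet}$ with upward-closedness of $\LSafe{ua}$ to conclude $\Post{\lin h}{\ConfSet}\in\LSafe{ua}$, and finally invoke~(\ref{eqn:lsafe-safe}).

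The only subtle point — the main obstacle, if one can call it that — is that in the two converse directions Lemma~\ref{lem:post-and-pred} yields only the one-sided inclusion $\ConfSet'\subseteq\Post{\ol a}{\ConfSet}$ (resp. $\ConfSet'\subseteq\Post{\lin h}{\ConfSet}$), never equality, so the step genuinely requires upward-closedness of $\Safe{u}$ and $\LSafe{ua}$ rather than a direct substitution into Proposition~\ref{prop:lsafe-safe}. Everything else is routine quantifier bookkeeping.
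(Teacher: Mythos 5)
Your proof is correct and follows exactly the same route as the paper's: both directions of each equivalence are obtained by combining Proposition~\ref{prop:lsafe-safe} with Lemma~\ref{lem:post-and-pred}, using the upward-closedness of $\Safe{u}$ and $\LSafe{ua}$ (Lemma~\ref{lem:safe-upward-closed}) to handle the one-sided inclusion in the converse directions. The subtle point you flag is precisely the one the paper's proof relies on.
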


\begin{proof}
$(\ref{eqn:safe-lsafe-pred},\Rightarrow)$
Let $\ConfSet\in\LSafe{ua}$.
Then by Proposition~\ref{prop:lsafe-safe},
$\Post{\ol a}{\ConfSet} \in \Safe{u}$.
Moreover, $\ConfSet$ is an $\ol a$-predecessor of $\Post{\ol a}{\ConfSet}$
by Lemma~\ref{lem:post-and-pred}.

$(\ref{eqn:safe-lsafe-pred},\Leftarrow)$
Let $\ConfSet$ be an $\ol a$-predecessor of $\ConfSet'$, with 
$\ConfSet'\in\Safe{u}$.
By Lemma~\ref{lem:post-and-pred}, 
$\ConfSet'\subseteq\Post{\ol a}{\ConfSet}$.
By Lemma~\ref{lem:safe-upward-closed},
we also have $\Post{\ol a}{\ConfSet}\in\Safe{u}$, so
$\ConfSet\in\LSafe{ua}$ by Proposition~\ref{prop:lsafe-safe}.

$(\ref{eqn:lsafe-safe-pred})$ Same proofs, except that $\ol a$ has to be
replaced by $h$, for all $h\in\HdgSigma$.
\end{proof}

Proposition~\ref{prop:lsafe-safe-pred} can be used to perform
Step~1 and Step~2 of our method.
It states that safe sets of configurations are only
among predecessors of prior safe sets of configurations.
However, the number of hedges to consider in 
equivalence~(\ref{eqn:lsafe-safe-pred}) is infinite.
We use relations to overcome this.
Also the size of $\Safe{u}$ may be huge
and not all configurations of $\Safe{u}$ are crucial for checking
$u$-universality. We use antichains to have a representation of 
$\Safe{u}$ and to avoid computations of elements which are not crucial.
These two concepts are  explained in the following 
in a way to get an algorithm for incrementally checking $u$-universality.

\subsection{An algorithm for $u$-universality}

\subsubsection{Antichains}

Let $\subsetatc{\Safe{u}}$ denote the set of 
elements of $\Safe{u}$ which are minimal for $\subseteq$, similarly for $\LSafe{u}$.
These antichains are finite objects.

\begin{proposition} \label{prop:finiteObjects}
\label{prop:safe-lsafe-antichains} 
$\subsetatc{\Safe{u}}$ and $\subsetatc{\LSafe{u}}$ are
finite and only contain finite sets of configurations.
\end{proposition}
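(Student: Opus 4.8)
The plan is to exhibit a single \emph{finite} set of configurations that contains every member of both antichains; finiteness of the antichains and finiteness of each of their members then follow at once. The crucial structural fact I would isolate is that the stack alphabet $\Gamma$ is finite and that, for $u\in\PPref{\TSigma}$, only configurations whose stack has a fixed height can ever be relevant. Set $n:=|\open{u}|$, the number of pending call symbols.

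First I would make precise which configurations matter. Call $(q,\stack)$ \emph{$u$-useful} if there exist $v$ with $uv\in\lin{\TSigma}$ and $p\in Q_f$ such that $(q,\stack)\xrightarrow{v}(p,\epsilon)$. The key combinatorial observation is that any completion $v$ of $u$ into a tree linearization contains exactly $n$ unmatched return symbols, namely those closing the $n$ pending calls recorded in $\open{u}$, while every call opened inside $v$ is matched by a return inside $v$ (because $uv$ is well-matched and single-rooted). Hence reading $v$ yields a net stack decrease of exactly $n$, so a $u$-useful configuration $(q,\stack)$ must satisfy $|\stack|=n$. As $\Gamma$ is finite, the set $C=\{(q,\stack)\mid q\in Q,\ \stack\in\Gamma^{n}\}$ is finite, of cardinality $|Q|\cdot|\Gamma|^{n}$, and every $u$-useful configuration lies in $C$. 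The same count applies to leaf-safety: a completion $v=\ol a v'$ still closes all $n$ pending calls, so again the net decrease is $n$ and $|\stack|=n$.

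Next I would show that every $\subseteq$-minimal safe (resp. leaf-safe) set consists solely of $u$-useful configurations. If $\ConfSet\in\subsetatc{\Safe{u}}$ contained a configuration $c$ that is not $u$-useful, then $c$ can never serve as the witness required by the definition of $\Safe{u}$ for any $v$ with $uv\in\lin{\TSigma}$; therefore $\ConfSet\setminus\{c\}$ is still safe for $u$, contradicting minimality of $\ConfSet$. Consequently $\ConfSet\subseteq C$. The identical argument, restricted to completions $v$ that begin with a return symbol, shows that every element of $\subsetatc{\LSafe{u}}$ is likewise a subset of $C$.

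These two steps finish the proof: each member of $\subsetatc{\Safe{u}}$ or $\subsetatc{\LSafe{u}}$ is a subset of the finite set $C$, hence is itself a finite set of configurations, and since $C$ has only $2^{|C|}$ subsets, each antichain is finite. I expect the only delicate point to be the net-stack computation: one must argue carefully, from $uv\in\lin{\TSigma}$ being well-matched and single-rooted, that precisely the $n$ calls of $\open{u}$ remain to be closed by $v$ while every call/return pair created within $v$ cancels, forcing the net decrease $n$ and thus $|\stack|=n$. Everything else is routine bookkeeping over the finite set $C$.
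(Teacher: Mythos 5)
Your proof is correct and follows essentially the same route as the paper's: the paper also observes that any configuration $(q,\stack)$ actually witnessing acceptance of some completion $v$ of $u$ must have $|\stack|=|\open{u}|$, and that minimality of $\ConfSet$ forces every configuration in it to be such a witness, so $\ConfSet\subseteq Q\times\Gamma^{|\open{u}|}$. Your explicit ``remove a useless configuration'' step is just a slightly more detailed phrasing of the paper's ``every configuration in a minimal set is used for at least one $v$''.
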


\begin{proof}
We begin with the following observation.
Let $v$ be such that $\lin{uv}\in\TSigma$ and 
$(q,\stack) \xrightarrow{v} (p,\epsilon)$ with $p\in Q_f$.
Let $u' = \open{u}$ (recall that $\open{u}$ is the word obtained from $u$ by removing all factors that are linearizations of hedges).
Let $v'$ be the word obtained from $v$ in the same way.
Then $|u'|=|v'|$ and $|u'|=|\stack|$.

Let $\ConfSet \in \Safe{u}$. Then by definition
$$\forall v,\ uv\in\lin\TSigma \implies 
\exists (q,\stack)\in\ConfSet,\ 
(q,\stack) \xrightarrow{v} (p,\epsilon) \text{ with } p\in Q_f.$$
If $\ConfSet$ is minimal with respect to $\subseteq$, then
every $(q,\stack)\in\ConfSet$ is used for at least one $v$ in the
previous definition. Now by the previous observation,
each such $(q,\stack)$ belongs to $Q\times\Gamma^{|u'|}$.
Hence $\ConfSet\subseteq Q\times\Gamma^{|u'|}$,
and thus both $\ConfSet$ and $\subsetatc{\Safe{u}}$ are finite.

The same arguments hold for proving that $\subsetatc{\LSafe{u}}$ is finite
and contains only finite sets of configurations.
\end{proof}

We now try to use these antichains in the starting point, and in Steps~1 and~2 of our approach.

\subsubsection{Step~1 with antichains: 
from $\subsetatc{\Safe{u}}$ to $\subsetatc{\LSafe{ua}}$}

For the two steps, the goal is to adapt Proposition~\ref{prop:lsafe-safe-pred}
so that it uses $\subsetatc{\Safe{.}}$ instead of $\Safe{.}$,
and $\subsetatc{\LSafe{.}}$ instead of $\LSafe{.}$.
We begin with Step 1. Implication $(\Rightarrow)$ of equivalence~(\ref{eqn:safe-lsafe-pred}) 
can be directly adapted.

\begin{proposition}
\label{prop:lsafe-safe-atc} 
Let $ua\in\PPref\TSigma$.
$$
\ConfSet\in\subsetatc{\LSafe{ua}} \implies
\exists \ConfSet'\in\subsetatc{\Safe{u}},\ 
\text{ $\ConfSet$ is an $\ol a$-predecessor of $\ConfSet'$}
$$
\end{proposition}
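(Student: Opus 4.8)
The plan is to start from the leaf-safe witness, push it forward through $\ol a$ to land inside $\Safe u$, and then descend to a minimal safe set while preserving the predecessor relation.

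First I would take $\ConfSet \in \subsetatc{\LSafe{ua}}$, so in particular $\ConfSet \in \LSafe{ua}$. Applying the left-to-right direction of equivalence~(\ref{eqn:safe-lsafe}) in Proposition~\ref{prop:lsafe-safe} yields $\Post{\ol a}{\ConfSet} \in \Safe{u}$, and by Lemma~\ref{lem:post-and-pred} the set $\ConfSet$ is an $\ol a$-predecessor of $\Post{\ol a}{\ConfSet}$. At this stage $\Post{\ol a}{\ConfSet}$ need not be minimal in $\Safe{u}$, so the remaining work is to replace it by a minimal safe set lying below it.

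The key step is to produce $\ConfSet' \in \subsetatc{\Safe{u}}$ with $\ConfSet' \subseteq \Post{\ol a}{\ConfSet}$. Since $\ConfSet \in \subsetatc{\LSafe{ua}}$, Proposition~\ref{prop:finiteObjects} guarantees that $\ConfSet$ is a finite set of configurations; as each configuration has only finitely many $\ol a$-successors, $\Post{\ol a}{\ConfSet}$ is finite as well. Now $\Safe{u}$ is $\subseteq$-upward closed by Lemma~\ref{lem:safe-upward-closed}, and $\Post{\ol a}{\ConfSet}$ is a finite member of this family; among the finitely many subsets of $\Post{\ol a}{\ConfSet}$ that belong to $\Safe{u}$, I would choose one, say $\ConfSet'$, that is $\subseteq$-minimal. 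Such a $\ConfSet'$ is then $\subseteq$-minimal in all of $\Safe{u}$, since any strictly smaller safe set would also be a subset of $\Post{\ol a}{\ConfSet}$, contradicting the choice; hence $\ConfSet' \in \subsetatc{\Safe{u}}$.

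Finally I would invoke the monotonicity of the predecessor relation: the condition defining an $\ol a$-predecessor is a universal quantification over the target set, so it is preserved when the target shrinks. Since $\ConfSet$ is an $\ol a$-predecessor of $\Post{\ol a}{\ConfSet}$ and $\ConfSet' \subseteq \Post{\ol a}{\ConfSet}$, it follows that $\ConfSet$ is an $\ol a$-predecessor of $\ConfSet'$, which is exactly the desired conclusion. The main subtlety is the descent to a minimal safe set: in a general $\subseteq$-upward closed family over the infinite lattice of configuration sets an element need not dominate any minimal member, so the argument genuinely relies on the finiteness supplied by the hypothesis $\ConfSet \in \subsetatc{\LSafe{ua}}$ together with Proposition~\ref{prop:finiteObjects}.
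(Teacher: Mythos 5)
Your proof is correct, but it takes a genuinely different route from the paper's. The paper fixes a minimal $\ConfSet_0'\subseteq\Post{\ol a}{\ConfSet}$ in $\Safe{u}$, then builds the subset $\ConfSet_0=\{c\in\ConfSet\mid \exists c'\in\ConfSet_0',\ c\xrightarrow{\ol a}c'\}$, shows $\ConfSet_0\in\LSafe{ua}$ via Proposition~\ref{prop:lsafe-safe-pred}, and invokes the minimality of $\ConfSet$ in $\LSafe{ua}$ to conclude $\ConfSet_0=\ConfSet$, hence that $\ConfSet$ itself is an $\ol a$-predecessor of $\ConfSet_0'$. You bypass that construction entirely by observing that the $\ol a$-predecessor property is antitone in its second argument (it is a universal quantification over the target), so $\ConfSet$, already an $\ol a$-predecessor of $\Post{\ol a}{\ConfSet}$ by Lemma~\ref{lem:post-and-pred}, is automatically one of the minimal safe set $\ConfSet'$ you extract below $\Post{\ol a}{\ConfSet}$. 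Your argument is shorter and does not use the minimality of $\ConfSet$ at all, only its membership in $\LSafe{ua}$ together with finiteness, so it in fact establishes the conclusion for every finite element of $\LSafe{ua}$. You are also more careful than the paper on one point: you justify why $\Post{\ol a}{\ConfSet}$ dominates a minimal element of $\Safe{u}$ (an upward-closed family over an infinite ground set need not have minimal elements below every member), whereas the paper writes ``let $\ConfSet_0'\subseteq\ConfSet'$ such that $\ConfSet_0'\in\subsetatc{\Safe{u}}$'' without comment. What the paper's heavier construction buys is the extra fact, needed for Corollary~\ref{cor:Step1Antichains}, that $\ConfSet$ is a \emph{minimal} $\ol a$-predecessor of $\ConfSet_0'$; your argument does not deliver that refinement, but the proposition as stated does not ask for it.
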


\begin{proof}
Let $\ConfSet\in\subsetatc{\LSafe{ua}}$ and 
let $\ConfSet'=\Post{\ol a}{\ConfSet}$.
We know from Proposition~\ref{prop:lsafe-safe} that $\ConfSet'\in\Safe{u}$.
Let $\ConfSet_0'\subseteq\ConfSet'$ such that 
$\ConfSet_0'\in\subsetatc{\Safe{u}}$.
From the definition of $\ConfSet'$ we get:
$$
\forall c'\in\ConfSet',\ \exists c\in\ConfSet,\ c\xrightarrow{\ol a} c'
$$
We build $\ConfSet_0$ from these $c\in\ConfSet$ but for $c'\in\ConfSet_0'$:
$$
\ConfSet_0 = \{ c\in\ConfSet \hmid 
\exists c'\in\ConfSet_0',\ c\xrightarrow{\ol a} c'\}
$$
\Figure~\ref{fig:construction of ConfSet_0} illustrates
the construction.
\begin{figure}[h!]
\centering
\begin{tikzpicture}
\draw (0,0) ellipse (1cm and 2cm);
\draw (0,0.1) ellipse (0.6cm and 1cm);
\draw (4,0) ellipse (1cm and 2cm);
\draw (4.05,-0.1) ellipse (0.6cm and 1cm);

\node[circle, fill=black, inner sep= 1pt] (l1) at (0.1,1.4) {};
\node[circle, fill=black, inner sep= 1pt] (l2) at (0.2,0.8) {};
\node[circle, fill=black, inner sep= 1pt] (l3) at (-0.2,0.2) {};
\node[circle, fill=black, inner sep= 1pt] (l4) at (0.2,-0.4) {};
\node[circle, fill=black, inner sep= 1pt] (l5) at (-0.2,-0.6) {};
\node[circle, fill=black, inner sep= 1pt] (l6) at (-0.3,-1.4) {};

\node[circle, fill=black, inner sep= 1pt] (r1) at (4.2,1.7) {};
\node[circle, fill=black, inner sep= 1pt] (r2) at (4.3,1.25) {};
\node[circle, fill=black, inner sep= 1pt] (r3) at (4.1,0.7) {};
\node[circle, fill=black, inner sep= 1pt] (r4) at (4.2,0.25) {};
\node[circle, fill=black, inner sep= 1pt] (r5) at (3.9,0) {};
\node[circle, fill=black, inner sep= 1pt] (r6) at (4.3,-0.6) {};
\node[circle, fill=black, inner sep= 1pt] (r7) at (3.8,-1.25) {};
\node[circle, fill=black, inner sep= 1pt] (r8) at (3.9,-1.7) {};

\draw[->] (l1) -- (r1);
\draw[->] (l2) -- (r2);
\draw[->] (l2) -- (r3);
\draw[->] (l3) -- (r4);
\draw[->] (l3) -- (r5);
\draw[->] (l4) -- (r6);
\draw[->] (l5) -- (r6);
\draw[->] (l6) -- (r7);
\draw[->] (l6) -- (r8);
\draw[->] (l5) -- (r7);

\node at (-1,1.4) {$\ConfSet$};
\node at (5,1.4) {$\ConfSet'$};
\node at (-0.5,-1) {$\ConfSet_0$};
\node at (4.5,-1.2) {$\ConfSet_0'$};
\node at (2,1.75) {$\overline{a}$};
 
\end{tikzpicture}
\caption{Construction of $\ConfSet_0$}
\label{fig:construction of ConfSet_0}
\end{figure}
$\ConfSet_0$ is an $\ol a$-predecessor of $\ConfSet_0'$, so using 
Proposition~\ref{prop:lsafe-safe-pred},
we get $\ConfSet_0\in\LSafe{ua}$.
Furthermore, $\ConfSet_0\subseteq\ConfSet\in\subsetatc{\LSafe{ua}}$,
so $\ConfSet_0=\ConfSet$, and $\ConfSet$ is obtained as an
$\ol a$-predecessor of $\ConfSet_0'\in\subsetatc{\Safe{u}}$.
\end{proof}

Proposition~\ref{prop:lsafe-safe-atc} gives us a way to compute
$\subsetatc{\LSafe{ua}}$ from $\subsetatc{\Safe{u}}$:
it suffices to take all $\ol a$-predecessors of elements of 
$\subsetatc{\Safe{u}}$ and then limit to those predecessors that are $\subseteq$-minimal.
We can even only consider minimal $\ol a$-predecessors of 
$\subsetatc{\Safe{u}}$ in the following sense: $\ConfSet$ is a \emph{minimal $\ol a$-predecessor} of $\ConfSet'$ if for all $\ConfSet''$ $\ol a$-predecessor of $\ConfSet'$,
$\ConfSet''\subseteq\ConfSet \implies \ConfSet''=\ConfSet$.
We finally obtain:
\begin{corollary} \label{cor:Step1Antichains}
$$
\subsetatc{\LSafe{ua}} =
\subsetatc{\left\lbrace\ConfSet \hmid \ConfSet 
\text{ is a minimal $\ol a$-predecessor of } \ConfSet'\in\subsetatc{\Safe{u}}
\right\rbrace}
$$
\end{corollary}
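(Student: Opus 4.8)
The plan is to sandwich the right-hand collection between $\subsetatc{\LSafe{ua}}$ and $\LSafe{ua}$, and then read off the equality of minimal elements. Write $M$ for the set $\{\ConfSet \hmid \ConfSet \text{ is a minimal } \ol a\text{-predecessor of some } \ConfSet'\in\subsetatc{\Safe{u}}\}$, so that the claim becomes $\subsetatc{\LSafe{ua}} = \subsetatc{M}$. First I would establish the two inclusions $\subsetatc{\LSafe{ua}} \subseteq M \subseteq \LSafe{ua}$, and then derive the statement from them.

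For $M \subseteq \LSafe{ua}$: if $\ConfSet \in M$ is a minimal $\ol a$-predecessor of some $\ConfSet'\in\subsetatc{\Safe{u}}\subseteq\Safe{u}$, then in particular $\ConfSet$ is an $\ol a$-predecessor of an element of $\Safe{u}$, so $\ConfSet\in\LSafe{ua}$ by the $(\Leftarrow)$ direction of equivalence~(\ref{eqn:safe-lsafe-pred}) in Proposition~\ref{prop:lsafe-safe-pred}. For $\subsetatc{\LSafe{ua}} \subseteq M$: given $\ConfSet\in\subsetatc{\LSafe{ua}}$, Proposition~\ref{prop:lsafe-safe-atc} yields some $\ConfSet'\in\subsetatc{\Safe{u}}$ of which $\ConfSet$ is an $\ol a$-predecessor, and it remains to see this predecessor is \emph{minimal}: any $\ol a$-predecessor $\ConfSet''\subseteq\ConfSet$ of $\ConfSet'$ is again in $\LSafe{ua}$ by the same $(\Leftarrow)$ direction, so minimality of $\ConfSet$ inside $\LSafe{ua}$ forces $\ConfSet''=\ConfSet$; hence $\ConfSet\in M$.

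From the sandwich $\subsetatc{\LSafe{ua}} \subseteq M \subseteq \LSafe{ua}$ I would then conclude. The inclusion $\subsetatc{\LSafe{ua}} \subseteq \subsetatc{M}$ is immediate: a $\subseteq$-minimal element of $\LSafe{ua}$ that happens to lie in $M$ cannot strictly dominate any element of $M\subseteq\LSafe{ua}$, so it is also minimal in $M$. For $\subsetatc{M} \subseteq \subsetatc{\LSafe{ua}}$, take $\ConfSet\in\subsetatc{M}$; I need a $\subseteq$-minimal element of $\LSafe{ua}$ sitting below $\ConfSet$ in order to pin it down. Here I would use that every $\ConfSet\in M$ is \emph{finite}: a minimal $\ol a$-predecessor of a finite $\ConfSet'$ uses each of its configurations as the unique witness of some target, so $|\ConfSet|\le|\ConfSet'|<\infty$, the finiteness of $\subsetatc{\Safe{u}}$ and of its members coming from Proposition~\ref{prop:safe-lsafe-antichains}. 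Among the finitely many subsets of the finite set $\ConfSet$ that lie in $\LSafe{ua}$, a $\subseteq$-minimal one $\ConfSet_0$ is in fact minimal in all of $\LSafe{ua}$ (anything strictly below $\ConfSet_0$ would be a subset of $\ConfSet$ as well), so $\ConfSet_0\in\subsetatc{\LSafe{ua}}\subseteq M$ with $\ConfSet_0\subseteq\ConfSet$; minimality of $\ConfSet$ in $M$ then gives $\ConfSet_0=\ConfSet$, so $\ConfSet\in\subsetatc{\LSafe{ua}}$.

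I expect the main obstacle to be precisely this last well-foundedness point. The poset of sets of configurations over $Q\times\Gamma^*$ is infinite and not a priori well-founded under $\subseteq$, so one is not entitled to assume that a minimal element of $\LSafe{ua}$ exists below an arbitrary member of $M$. The finiteness of the candidate sets, inherited from the antichains $\subsetatc{\Safe{u}}$ being finite and containing only finite sets, is what makes the descending-chain argument go through. Everything else is a direct combination of Propositions~\ref{prop:lsafe-safe-pred} and~\ref{prop:lsafe-safe-atc}.
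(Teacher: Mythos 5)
Your proof is correct and follows the route the paper intends: the corollary is stated there without an explicit proof, as a direct consequence of Proposition~\ref{prop:lsafe-safe-atc} (every element of $\subsetatc{\LSafe{ua}}$ is an $\ol a$-predecessor of some element of $\subsetatc{\Safe{u}}$) combined with the upward-closure/predecessor characterization of Proposition~\ref{prop:lsafe-safe-pred}, which is exactly the sandwich $\subsetatc{\LSafe{ua}} \subseteq M \subseteq \LSafe{ua}$ you set up. Your explicit treatment of the well-foundedness point --- using the finiteness of minimal $\ol a$-predecessors of the finite members of $\subsetatc{\Safe{u}}$ (Proposition~\ref{prop:safe-lsafe-antichains}) to guarantee that a $\subseteq$-minimal element of $\LSafe{ua}$ exists below each element of $M$ --- is a detail the paper leaves implicit, and it is handled correctly.
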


\subsubsection{Step~2 with antichains: 
from $\subsetatc{\LSafe{ua}}$ to $\subsetatc{\Safe{ua}}$}

The second step for computing $\subsetatc{\Safe{ua}}$ from
$\subsetatc{\Safe{u}}$ relies on the introduction of antichains in 
equivalence~(\ref{eqn:lsafe-safe-pred}) of 
Proposition~\ref{prop:lsafe-safe-pred}.
Implication $(\Rightarrow)$ holds with antichains.

\begin{proposition}
\label{prop:safe-lsafe-atc} 
Let $ua\in\PPref\TSigma$.
$$
\ConfSet\in\subsetatc{\Safe{ua}} \implies
\text{$\forall h\in\HdgSigma$, }
\exists \ConfSet'\in\subsetatc{\LSafe{ua}},\ 
\ConfSet \text{ is a $h$-predecessor of } \ConfSet'
$$
\end{proposition}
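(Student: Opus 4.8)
The plan is to adapt the proof of Proposition~\ref{prop:lsafe-safe-atc}, but to route the argument through the $\Post{\lin h}{\cdot}$ operator and equivalence~(\ref{eqn:lsafe-safe}) of Proposition~\ref{prop:lsafe-safe} instead of $\Post{\ol a}{\cdot}$ and equivalence~(\ref{eqn:safe-lsafe}). Concretely, I would fix $\ConfSet\in\subsetatc{\Safe{ua}}$ and an arbitrary hedge $h\in\HdgSigma$, and set $\ConfSet'=\Post{\lin h}{\ConfSet}$. Since $\ConfSet\in\Safe{ua}$, equivalence~(\ref{eqn:lsafe-safe}) gives at once $\ConfSet'\in\LSafe{ua}$. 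Moreover, by Lemma~\ref{lem:post-and-pred}, $\ConfSet$ is an $h$-predecessor of $\ConfSet'=\Post{\lin h}{\ConfSet}$.

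The remaining task is to descend from $\ConfSet'$ to a $\subseteq$-minimal leaf-safe witness. Since $\LSafe{ua}$ is $\subseteq$-upward closed (Lemma~\ref{lem:safe-upward-closed}) and, by the reasoning behind Proposition~\ref{prop:safe-lsafe-antichains}, every witnessing configuration has stack height exactly $|\open{ua}|$, I would first intersect $\ConfSet'$ with $Q\times\Gamma^{|\open{ua}|}$ to obtain a finite leaf-safe subset, and then peel off configurations until reaching some $\ConfSet'_0\in\subsetatc{\LSafe{ua}}$ with $\ConfSet'_0\subseteq\ConfSet'$. A minimal leaf-safe subset of that finite set is automatically minimal in all of $\LSafe{ua}$, since any strictly smaller leaf-safe set would already contradict its minimality inside the finite set. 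Finally I would observe that the $h$-predecessor property is inherited when the target shrinks: as $\ConfSet$ is an $h$-predecessor of $\ConfSet'$ and $\ConfSet'_0\subseteq\ConfSet'$, for every $(q',\stack')\in\ConfSet'_0$ there is still some $(q,\stack)\in\ConfSet$ with $(q,\stack)\xrightarrow{\lin h}(q',\stack')$, so $\ConfSet$ is an $h$-predecessor of $\ConfSet'_0$. Taking $\ConfSet':=\ConfSet'_0$ closes the argument.

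The point requiring the most care, and the reason I would \emph{not} simply transcribe the Step~1 proof, is that equivalence~(\ref{eqn:lsafe-safe-pred}) of Proposition~\ref{prop:lsafe-safe-pred} quantifies over \emph{all} hedges. In Proposition~\ref{prop:lsafe-safe-atc} one forms the trimmed set $\ConfSet_0=\{c\in\ConfSet\mid\exists c'\in\ConfSet'_0,\ c\xrightarrow{\ol a}c'\}$, checks it is an $\ol a$-predecessor, deduces $\ConfSet_0\in\LSafe{ua}$ from the \emph{single-existential} equivalence~(\ref{eqn:safe-lsafe-pred}), and then invokes minimality of $\ConfSet$ to force $\ConfSet_0=\ConfSet$. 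That route is blocked here: a trimmed $h$-predecessor of one $\ConfSet'_0$ cannot be certified to lie in $\Safe{ua}$, because safety for $ua$ demands being a predecessor of a leaf-safe set for \emph{every} hedge simultaneously. I would therefore keep $\ConfSet$ whole and exploit only the downward inheritance of the predecessor property on the target side; note in passing that this makes the minimality of $\ConfSet$ in $\subsetatc{\Safe{ua}}$ unnecessary for this implication, which in fact holds for any $\ConfSet\in\Safe{ua}$. The sole technical obligation that remains is the existence of $\ConfSet'_0$, and this is exactly what the stack-height bound from Proposition~\ref{prop:safe-lsafe-antichains} secures.
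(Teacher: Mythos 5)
Your proof is correct, but it takes a genuinely different route from the paper's. Both arguments open identically: set $\ConfSet'_h=\Post{\lin h}{\ConfSet}$, invoke Proposition~\ref{prop:lsafe-safe} to get $\ConfSet'_h\in\LSafe{ua}$, and pick a minimal $\ConfSet''_h\subseteq\ConfSet'_h$ in $\subsetatc{\LSafe{ua}}$. They then diverge. The paper \emph{does} transcribe the Step~1 pattern: it trims $\ConfSet$ to $\ConfSet_h=\{c\in\ConfSet\mid\exists c'\in\ConfSet''_h,\ c\xrightarrow{\lin h}c'\}$, and resolves exactly the obstacle you diagnose (a single-hedge trim cannot be certified safe) by taking the union $\ConfSet_\cup=\bigcup_{h\in\HdgSigma}\ConfSet_h$ over \emph{all} hedges, showing $\ConfSet_\cup\in\Safe{ua}$ via equivalence~(\ref{eqn:lsafe-safe-pred}), and then using minimality of $\ConfSet$ to force $\ConfSet_\cup=\ConfSet$. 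You instead keep $\ConfSet$ whole and observe that the $h$-predecessor property is preserved when the target shrinks from $\ConfSet'_h$ to $\ConfSet''_h$; this yields the conclusion in one line and, as you note, shows the implication holds for every $\ConfSet\in\Safe{ua}$, minimal or not --- a genuine simplification and mild strengthening over the paper's argument. You are also more careful on one point the paper leaves implicit: the existence of a $\subseteq$-minimal element of $\LSafe{ua}$ below $\ConfSet'_h$ is not automatic in an infinite poset, and your appeal to the stack-height bound of Proposition~\ref{prop:safe-lsafe-antichains} (restricting to the finite set $Q\times\Gamma^{|\open{ua}|}$, within which minimal leaf-safe subsets exist and are globally minimal) fills that gap cleanly. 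Both proofs are valid; yours is shorter and isolates the one place where finiteness is genuinely needed.
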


\begin{proof}
The proof is in the same vein as for Proposition~\ref{prop:lsafe-safe-atc}. 
Let $\ConfSet\in\subsetatc{\Safe{ua}}$, and $h\in\HdgSigma$.
Let $\ConfSet_h'=\Post{\lin{h}}{\ConfSet}$.
By Proposition~\ref{prop:lsafe-safe},
$\ConfSet_h'\in\LSafe{ua}$.
Let $\ConfSet_h''\subseteq\ConfSet_h'$
such that $\ConfSet_h''\in\subsetatc{\LSafe{ua}}$.
We know that $\forall c'\in\ConfSet_h',\ \exists c\in\ConfSet$ such that
${c}\xrightarrow{\lin{h}}{c'}$.
We define $\ConfSet_h = 
\{ c\in\ConfSet \hmid \exists c'\in\ConfSet_h'',\ c\xrightarrow{\lin{h}} c' \}$.
For every $h\in\HdgSigma$, $\ConfSet_h$ is a $h$-predecessor
of $\ConfSet_h''\in\LSafe{ua}$.
Consider $\ConfSet_\cup = \bigcup_{h\in\HdgSigma} \ConfSet_h$, then
$\ConfSet_\cup$ is also a $h$-predecessor of $\ConfSet_h''$.
Using Proposition~\ref{prop:lsafe-safe-pred},
we have $\ConfSet_\cup\in\Safe{ua}$.
As $\ConfSet_\cup\subseteq\ConfSet$ and $\ConfSet\in\subsetatc{\Safe{ua}}$,
we also have that $\ConfSet_\cup=\ConfSet$.
Hence $\ConfSet$ verifies that
$\forall h\in\HdgSigma$, $\exists \ConfSet''\in\subsetatc{\LSafe{ua}}$ 
such that 
$\ConfSet$ is a $h$-predecessor of $\ConfSet''$.
\end{proof}

Note that this proof does not use the fact that $ua$ ends with a symbol
in $\Sigma$, so Proposition~\ref{prop:safe-lsafe-atc} also holds
when replacing $ua$ by $u$.

Similarly to Proposition~\ref{prop:lsafe-safe-atc},
we can restrict $h$-predecessors to consider to only minimal ones:
$\ConfSet$ is a \emph{minimal $h$-predecessor} of $\ConfSet'$
if for all $\ConfSet''$ $h$-predecessor of $\ConfSet'$,
$\ConfSet''\subseteq\ConfSet \implies \ConfSet''=\ConfSet$.
We obtain: 
\begin{corollary} \label{cor:Safe(ua)}
$$
\subsetatc{\Safe{ua}} =
\subsetatc{\left\lbrace\ConfSet \hmid \ConfSet=\bigcup_{h\in\HdgSigma} \ConfSet_h
\text{ with $\ConfSet_h$ a minimal $h$-predecessor of } 
\ConfSet'\in\subsetatc{\LSafe{ua}}\right\rbrace}
$$
\end{corollary}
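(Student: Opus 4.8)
The plan is to route the identity through the auxiliary family
$$\mathcal{S} = \left\lbrace\ConfSet \hmid \ConfSet=\bigcup_{h\in\HdgSigma}\ConfSet_h \text{ with } \ConfSet_h \text{ a minimal } h\text{-predecessor of some } \ConfSet'\in\subsetatc{\LSafe{ua}}\right\rbrace,$$
so that the goal becomes $\subsetatc{\Safe{ua}} = \subsetatc{\mathcal{S}}$. The engine of both inclusions is the preliminary containment $\mathcal{S}\subseteq\Safe{ua}$: if $\ConfSet=\bigcup_h\ConfSet_h\in\mathcal{S}$, then for each $h$ the set $\ConfSet$ contains an $h$-predecessor $\ConfSet_h$ of some $\ConfSet'\in\subsetatc{\LSafe{ua}}\subseteq\LSafe{ua}$; since $\Pred{h}{\ConfSet'}$ is $\subseteq$-upward closed (Lemma~\ref{lem:pred-upward-closed}), $\ConfSet$ is itself an $h$-predecessor of $\ConfSet'$. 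As this holds for every $h\in\HdgSigma$, the $(\Leftarrow)$ direction of equivalence~(\ref{eqn:lsafe-safe-pred}) in Proposition~\ref{prop:lsafe-safe-pred} gives $\ConfSet\in\Safe{ua}$.

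Next I would prove $\subsetatc{\Safe{ua}}\subseteq\subsetatc{\mathcal{S}}$. Take $\ConfSet\in\subsetatc{\Safe{ua}}$. By Proposition~\ref{prop:safe-lsafe-atc}, for each $h$ there is $\ConfSet'_h\in\subsetatc{\LSafe{ua}}$ of which $\ConfSet$ is an $h$-predecessor. Because $\ConfSet$ is a \emph{finite} set (Proposition~\ref{prop:safe-lsafe-antichains}), among the $h$-predecessors of $\ConfSet'_h$ contained in $\ConfSet$ one may select a $\subseteq$-minimal one $\ConfSet_h$; minimality among predecessors lying below $\ConfSet$ is in fact global minimality, since any predecessor strictly inside $\ConfSet_h$ would again sit below $\ConfSet$. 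Setting $\ConfSet_\cup=\bigcup_h\ConfSet_h\subseteq\ConfSet$, upward closure makes $\ConfSet_\cup$ an $h$-predecessor of every $\ConfSet'_h$, whence $\ConfSet_\cup\in\Safe{ua}$ by Proposition~\ref{prop:lsafe-safe-pred}; minimality of $\ConfSet$ in $\Safe{ua}$ then forces $\ConfSet_\cup=\ConfSet$. Thus $\ConfSet\in\mathcal{S}$, and since $\mathcal{S}\subseteq\Safe{ua}$ while $\ConfSet$ is already $\subseteq$-minimal in the larger set $\Safe{ua}$, it is a fortiori minimal in $\mathcal{S}$, i.e. $\ConfSet\in\subsetatc{\mathcal{S}}$.

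Finally, for $\subsetatc{\mathcal{S}}\subseteq\subsetatc{\Safe{ua}}$, take $\ConfSet\in\subsetatc{\mathcal{S}}\subseteq\mathcal{S}\subseteq\Safe{ua}$. If $\ConfSet$ were not minimal in $\Safe{ua}$, some strictly smaller element of $\Safe{ua}$ would dominate a minimal one $\ConfSet_0\in\subsetatc{\Safe{ua}}$ with $\ConfSet_0\subsetneq\ConfSet$; by the inclusion just proved, $\ConfSet_0\in\subsetatc{\mathcal{S}}\subseteq\mathcal{S}$, contradicting minimality of $\ConfSet$ in $\mathcal{S}$. Hence $\ConfSet\in\subsetatc{\Safe{ua}}$, which closes the equality.

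I expect the main obstacle to be the step building $\ConfSet_\cup$: one must guarantee that a minimal $h$-predecessor below $\ConfSet$ exists and that the union over the \emph{infinitely many} hedges $h$ collapses back to $\ConfSet$. Both points rest entirely on the finiteness statement of Proposition~\ref{prop:safe-lsafe-antichains}, which makes $\ConfSet$ finite and confines all configurations relevant to safety to $Q\times\Gamma^{|\open{ua}|}$; this is also what legitimizes ``dominating a minimal element'' in the last paragraph. Everything surrounding these two points is routine upward-closure bookkeeping via Lemma~\ref{lem:pred-upward-closed} and Proposition~\ref{prop:lsafe-safe-pred}.
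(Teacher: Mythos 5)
Your proof is correct and follows essentially the same route as the paper: the corollary is there obtained from Proposition~\ref{prop:safe-lsafe-atc} (whose proof already contains your $\ConfSet_\cup$ construction), the upward-closedness of predecessor sets, and Proposition~\ref{prop:lsafe-safe-pred}, with the refinement to \emph{minimal} predecessors left implicit. You simply make explicit the two finiteness points (existence of a minimal $h$-predecessor inside the finite set $\ConfSet$, and of a minimal element of $\Safe{ua}$ below any safe set), both of which indeed rest on the observation in the proof of Proposition~\ref{prop:safe-lsafe-antichains}, as you note.
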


This definition does not provide an algorithm, as it
still relies on a quantification
over an infinite number of hedges $h\in\HdgSigma$.
In fact, only a finite number of such hedges needs to be considered.
The reason is that a hedge does not change the original stack during
the run of a \VPA, so a hedge can be considered as a function
mapping each state $q$ to the set of states obtained when traversing $h$
from $q$.
Formally, we have the next definition.

\begin{definition}
For every $h\in\Hdg\Sigma$, $\vparel{h}$ is the function from $Q$ to $2^Q$ such that
$q'\in\vparel{h}(q)$ iff $(q,\stack) \xrightarrow{\lin h} (q',\stack)$
for some $\stack\in\Gamma^*$.
\end{definition}
The number of such functions is finite, and bounded by $|Q|\cdot 2^{|Q|}$. 
These functions naturally define an equivalence relation of
finite index over $\HdgSigma$:
$$
h\vpaheq h' \iff \vparel{h} = \vparel{h'}.
$$
Let us note $\finiteH$ for a subset containing one hedge per $\vpaheq$-class.
We have $|\finiteH| \le |Q|\cdot 2^{|Q|}$. The next lemma indicates that the computation
of $h$-predecessors can be limited to $h \in H$.

\begin{lemma}
\label{prop:finitely-many-hedges} 
For every $h\in\HdgSigma$, $\ConfSet$ is a $h$-predecessor of $\ConfSet'$ iff 
there exists $h' \in \finiteH, h\vpaheq h'$, 
such that  $\ConfSet$ is a $h'$-predecessor of $\ConfSet'$.
\end{lemma}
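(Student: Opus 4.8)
The plan is to reduce the notion of $h$-predecessor to the finite information carried by $\vparel{h}$, exploiting the fact that reading a hedge linearization leaves the stack untouched. First I would record the following stack-invariance property, already used implicitly in the proof of Theorem~\ref{thm:VPAVersHedgeaut}: since $\lin h$ is well-matched, any run $(q,\stack)\xrightarrow{\lin h}(q',\stack')$ satisfies $\stack'=\stack$, and moreover the run never pops below the initial stack level. Consequently such a run can be replayed over any stack, so that $(q,\stack)\xrightarrow{\lin h}(q',\stack)$ holds for some $\stack\in\Gamma^*$ iff it holds for every $\stack\in\Gamma^*$, and both are equivalent to $q'\in\vparel{h}(q)$. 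I would establish this by a straightforward induction on the length of the run (or on the structure of $h$), the only subtlety being to verify that the sequence of rules applied is independent of the untouched bottom part of the stack.

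Using this, I would rewrite the predecessor condition in purely relational terms. By definition $\ConfSet$ is an $h$-predecessor of $\ConfSet'$ iff for every $(q',\stack')\in\ConfSet'$ there is $(q,\stack)\in\ConfSet$ with $(q,\stack)\xrightarrow{\lin h}(q',\stack')$. By stack-invariance any such witness must have $\stack=\stack'$, and for those the transition holds iff $q'\in\vparel{h}(q)$. Hence
\begin{equation*}
\ConfSet \text{ is an $h$-predecessor of } \ConfSet' \iff \forall (q',\stack')\in\ConfSet',\ \exists q:\ (q,\stack')\in\ConfSet \text{ and } q'\in\vparel{h}(q).
\end{equation*}
Crucially, the right-hand side depends on $h$ only through the function $\vparel{h}$.

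Finally I would conclude both implications from this characterization. Given $h\in\HdgSigma$, let $h'\in\finiteH$ be the unique representative of its $\vpaheq$-class, so that $h\vpaheq h'$ and therefore $\vparel{h}=\vparel{h'}$. Since the displayed condition involves $h$ and $h'$ only through $\vparel{h}=\vparel{h'}$, it holds for $h$ iff it holds for $h'$; that is, $\ConfSet$ is an $h$-predecessor of $\ConfSet'$ iff $\ConfSet$ is an $h'$-predecessor of $\ConfSet'$. The forward direction then takes this $h'$ as the required witness in $\finiteH$, while the backward direction is immediate, since any $h'\in\finiteH$ with $h\vpaheq h'$ satisfies $\vparel{h'}=\vparel{h}$.

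I expect the main obstacle to be the careful justification of stack-independence, namely that a hedge run over one stack can be replayed verbatim over any other stack; everything else is a direct unfolding of the definitions of $h$-predecessor and of the equivalence $\vpaheq$.
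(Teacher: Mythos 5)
Your proof is correct and follows essentially the same route as the paper's: the paper's (much terser) proof simply recalls that the $h$-predecessor condition requires the same stack on both sides and therefore depends on $h$ only through $\vparel{h}$, which is exactly the characterization you derive. The only difference is that you make explicit the stack-independence of runs over well-matched words, which the paper leaves implicit; this is a welcome clarification rather than a divergence.
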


\begin{proof}
Let us recall the definition of $h$-predecessor:
$\ConfSet$ is a $h$-predecessor of $\ConfSet'$ if
$\forall (q',\stack)\in\ConfSet',\ 
\exists (q,\stack)\in\ConfSet,\ 
(q,\stack) \xrightarrow{h} (q',\stack)$.
Hence if $h\vpaheq h'$ then 
$\ConfSet$ is a $h$-predecessor of $\ConfSet'$ iff
$\ConfSet$ is a $h'$-predecessor of $\ConfSet'$.
\end{proof}

We propose an algorithm for computing such a set $\finiteH$ from a VPA $\AutA$.
Algorithm~\ref{algo:vparel} is based on the definition of hedges,
adapted to relations:
\begin{itemize}
\item $\epsilon$ is the empty hedge, and $\vparel{\epsilon}(q)=\{q\}$
for every $q\in Q$. We write this function $\vpaid_Q$.
\item if $h_1,h_2$ are two hedges, then $h_1h_2$ is a hedge,
and $\vparel{h_1h_2}=\vparel{h_2}\circ\vparel{h_1}$.
\item if $h$ is a hedge and $a\in\Sigma$, then
$ah\ol a$ is a hedge, and $\vparel{ah\ol a}(q)$ is the set of states $q'$
such that there exists $\gamma\in\Gamma$ verifying:
$$
(q,\epsilon) \xrightarrow{a} (p,\gamma)
\quad \text{ and } \quad
(p',\gamma) \xrightarrow{\ol a} (q',\epsilon)
\quad \text{ with } p'\in\vparel{h}(p).
$$
\end{itemize}
Algorithm ~\ref{algo:vparel} uses the variables \emph{ToProcess} and \emph{Functions} with the following meaning. \emph{Functions} contains initially the identity relation $\vpaid_Q$; at the end of the computation, it contains all functions $\vparel{h}$, for $h\in\HdgSigma$. \emph{ToProcess} contains all the newly constructed relations, and these relations are used to create other new relations as described in the previous definition by induction.   

\begin{algorithm}
\newcommand{\allfuncs}{{\it Functions}}
\newcommand{\toprocess}{{\it ToProcess}}
\newcommand{\newfuncs}{{\it NewFunctions}}
\newcommand{\compose}{{\it compose}}
\newcommand{\addroot}{{\it add\_root}}
\newcommand{\func}{{\it fct}}
\newcommand{\compositions}{{\it compositions}}

\begin{algorithmic}
\Function{HedgeFunctions}{$\AutA$}
    \State $\allfuncs \gets \{ \vpaid_Q \}$
    \State $\toprocess \gets \{ \vpaid_Q \}$
    \While{$\toprocess \not= \emptyset$}
        \State $\func \gets \textproc{Pop}(\toprocess)$
        \State $\newfuncs \gets \emptyset$
        \For {$f\in\allfuncs$}
            \State $\newfuncs \gets 
            \newfuncs \cup \{ f \circ \func, \func \circ f \}$
        \EndFor
        \For {$a\in\Sigma$}
            \State $f \gets f_\emptyset$ 
            \qquad // $f_\emptyset$ maps every $q\in Q$ to $\emptyset$
            \For {$\vparule{q}{a}{\gamma}{p}\in\Delta$ and 
              $\vparule{p'}{\ol a}{\gamma}{q'}\in\Delta$ with $p'\in\func(p)$}
                \State $f(q) \gets f(q) \cup \{q'\}$
            \EndFor
            \State $\newfuncs \gets \newfuncs \cup \{f\}$
        \EndFor
        \State $\toprocess \gets \toprocess \cup (\newfuncs \setminus \allfuncs)$
        \State $\allfuncs \gets \allfuncs \cup \newfuncs$
    \EndWhile
    \State \Return \allfuncs
\EndFunction
\end{algorithmic}
\caption{Computing all functions $\vparel{h}$, for $h\in\HdgSigma$.
\label{algo:vparel}}
\end{algorithm}

\begin{proposition}
Algorithm~\ref{algo:vparel} computes the set $\{ \vparel{h} \mid h\in\HdgSigma\}$.
\end{proposition}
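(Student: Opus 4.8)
The plan is to prove the two inclusions between the returned set \emph{Functions} and $\{\vparel h\mid h\in\HdgSigma\}$, using the inductive structure of hedges together with a worklist-closure argument entirely analogous to the correctness proof of \textproc{CompositionClosure} (Algorithm~\ref{algo:composition}).

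First I would establish \emph{soundness}, namely that \emph{Functions} $\subseteq\{\vparel h\mid h\in\HdgSigma\}$ is preserved as a loop invariant. Initially \emph{Functions} $=\{\vpaid_Q\}=\{\vparel\epsilon\}$. Each iteration inserts into \emph{NewFunctions} only functions of two kinds: compositions $f\circ\mathit{fct}$ and $\mathit{fct}\circ f$ of functions already present, and, for every $a\in\Sigma$, the function $f$ assembled in the inner loop. Using the defining identity $\vparel{h_1h_2}=\vparel{h_2}\circ\vparel{h_1}$, I would check that if $f=\vparel{h_1}$ and $\mathit{fct}=\vparel{h_2}$ then $\mathit{fct}\circ f=\vparel{h_1h_2}$ and $f\circ\mathit{fct}=\vparel{h_2h_1}$; and that, comparing the algorithm's pairing of a pushing rule $\vparule{q}{a}{\gamma}{p}$ with a popping rule $\vparule{p'}{\ol a}{\gamma}{q'}$ (where $p'\in\mathit{fct}(p)$) against the definition of the root operation, the function $f$ built for letter $a$ is exactly $\vparel{a\,h_2\,\ol a}$. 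Since $h_1h_2$, $h_2h_1$ and $a\,h_2\,\ol a$ are hedges, every inserted function is some $\vparel h$.

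For \emph{completeness} I would first give a structural characterization: $\{\vparel h\mid h\in\HdgSigma\}$ is the smallest set of functions that contains $\vpaid_Q$ and is closed under composition and under the root operation $f\mapsto\vparel{a\,h\,\ol a}$. This holds by induction on the number of nodes of $h$, since the empty hedge gives $\vpaid_Q$, a hedge $t_1\cdots t_n$ is the composition of the tree-functions $\vparel{t_i}$, and each tree $t_i=a_i\,h_i'\,\ol{a_i}$ is the root operation applied to $\vparel{h_i'}$ with $h_i'$ strictly smaller. Then I would argue that \emph{Functions}, at termination, is itself closed under both operations. Termination follows from finiteness: there are at most $(2^{|Q|})^{|Q|}$ functions $Q\to 2^Q$, and \emph{ToProcess} only ever receives functions not already in \emph{Functions}, so it empties after finitely many pops, each function entering \emph{Functions} being popped exactly once. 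When popped, a function is composed (in both orders) with all functions then present and is root-transformed for every $a$. Closure under the root operation is then immediate. For closure under composition, given $f,g\in\text{\emph{Functions}}$, the one popped later sees the other already in \emph{Functions} (the earlier-popped one entered \emph{Functions} no later), so both $f\circ g$ and $g\circ f$ are generated at that pop. As \emph{Functions} contains $\vpaid_Q$ and is closed, it contains the smallest such set, which combined with soundness yields equality.

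The main obstacle is the composition-closure step of the completeness direction: one must ensure that a pair whose compositions are needed is never missed because one of the two functions was discovered only after the other had already been processed. As in the proof for Algorithm~\ref{algo:composition}, this is resolved by observing that the \emph{later}-popped of the two functions triggers the required composition; making this timing argument precise is the crux. The soundness invariant, the structural characterization, and termination are otherwise routine.
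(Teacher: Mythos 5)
Your proof is correct and follows essentially the same route as the paper's: soundness is the routine invariant, and completeness rests on the inductive decomposition of hedges combined with the observation that the later-popped of two functions triggers the missing composition, exactly as in the paper's argument (which is phrased as a proof by contradiction rather than your smallest-closed-set formulation). Your structural characterization merely makes explicit the induction hidden in the paper's ``we can suppose wlog,'' so it is, if anything, slightly more careful on the same idea.
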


\begin{proof}
Let \emph{Functions} be the set computed by Algorithm~\ref{algo:vparel}. 
Clearly, $\text{\emph{Functions}} \subseteq  \{ \vparel{h} \mid h\in\HdgSigma\}$. 
Assume for contradiction that there exists $r = \vparel{h} $ 
with $h \in \HdgSigma$ such that $r \not\in \text{\emph{Functions}}$. 
Clearly, $r \neq \vpaid_Q$, and we can suppose wlog that 
either $r = r'_2 \circ r'_1$ with 
$r'_1, r'_2 \in \text{\emph{Functions}} \setminus \{\vpaid_Q\}$, 
or there exists $r' \in \text{\emph{Functions}}$ such that 
for all $q$, $r(q)$ is the set of $q'$ 
with $\vparule{q}{a}{\gamma}{p}\in\Delta$, 
$\vparule{p'}{\ol a}{\gamma}{q'}\in\Delta$ and $p'\in r'(p)$. 
Consider the first case. 
When they have been constructed by Algorithm~\ref{algo:vparel}, 
both $r'_1$ and $r'_2$ have been added to \emph{ToProcess} and 
to \emph{Functions}. After the last element (among $r'_1$ and $r'_2$) 
is popped from \emph{ToProcess}, then $r = r'_2 \circ r'_1$ is built 
during the loop on $f \in \text{\emph{Functions}}$, 
which leads to a contradiction. 
We also have a contradiction in the second case by considering 
the loop on $a \in \Sigma$.    
\end{proof}

Consequently we can rephrase our definition of $\subsetatc{\Safe{ua}}$
from $\subsetatc{\LSafe{ua}}$ given in Corollary~\ref{cor:Safe(ua)} 
by restricting the quantification on $h$ to the finite set $H$. 
Therefore we obtain a finite procedure for computing $\subsetatc{\Safe{ua}}$
from $\subsetatc{\LSafe{ua}}$: 
\begin{proposition}
\label{prop:safe-lsafe-atc-H} 
$$
\subsetatc{\Safe{ua}} =
\subsetatc{\left\lbrace\ConfSet \hmid \ConfSet=\bigcup_{h\in\finiteH} \ConfSet_h
\text{ with $\ConfSet_h$ a minimal $h$-predecessor of } 
\ConfSet'\in\subsetatc{\LSafe{ua}}\right\rbrace}
$$
\end{proposition}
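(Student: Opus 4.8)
The plan is to derive this proposition from Corollary~\ref{cor:Safe(ua)}, which already expresses $\subsetatc{\Safe{ua}}$ as the $\subseteq$-minimal elements of the set of unions $\bigcup_{h\in\HdgSigma}\ConfSet_h$ where each $\ConfSet_h$ is a minimal $h$-predecessor of some element of $\subsetatc{\LSafe{ua}}$. Write $A$ for that set and $B$ for its analogue obtained by letting the union range over the finite set $\finiteH$ instead of $\HdgSigma$. Then the whole task reduces to proving $\subsetatc{A}=\subsetatc{B}$. The enabling tool is Lemma~\ref{prop:finitely-many-hedges}: since $\ConfSet$ is an $h$-predecessor of $\ConfSet'$ iff it is an $h'$-predecessor whenever $h\vpaheq h'$, the set of $h$-predecessors of a fixed $\ConfSet'$ depends only on $\vparel{h}$, hence only on the $\vpaheq$-class of $h$; consequently the notion of \emph{minimal} $h$-predecessor of $\ConfSet'$ is also class-invariant. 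As $\finiteH$ contains exactly one representative of each $\vpaheq$-class, the admissible ``building blocks'' $\ConfSet_h$ are the same whether $h$ runs over $\HdgSigma$ or over $\finiteH$.

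First I would prove $B\subseteq A$. Given $\ConfSet=\bigcup_{h_0\in\finiteH}\ConfSet_{h_0}\in B$, I extend the family to all of $\HdgSigma$ by setting $\ConfSet_g:=\ConfSet_{h_0}$, where $h_0\in\finiteH$ represents the class of $g$; by Lemma~\ref{prop:finitely-many-hedges} this $\ConfSet_{h_0}$ is also a minimal $g$-predecessor of the same element of $\subsetatc{\LSafe{ua}}$, and since every class is hit while equivalent indices contribute the same set, $\bigcup_{g\in\HdgSigma}\ConfSet_g=\ConfSet$, so $\ConfSet\in A$. Next I would show that every element of $A$ contains an element of $B$: given $\ConfSet=\bigcup_{g\in\HdgSigma}\ConfSet_g\in A$, choosing one representative $g(h_0)$ per class and keeping only the corresponding blocks yields $\ConfSet_0:=\bigcup_{h_0\in\finiteH}\ConfSet_{g(h_0)}$, which is a union indexed by $\finiteH$ of minimal $h_0$-predecessors (again by class-invariance), hence $\ConfSet_0\in B$, with $\ConfSet_0\subseteq\ConfSet$.

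To finish I would invoke the elementary fact that if $B\subseteq A$ and every element of $A$ contains an element of $B$, then $\subsetatc{A}=\subsetatc{B}$. Indeed a $\subseteq$-minimal element $m$ of $A$ dominates some $b\in B\subseteq A$, and minimality of $m$ in $A$ forces $b=m$, so $m\in B$ and is minimal there; conversely a minimal element $m$ of $B$ lies in $A$, and any strictly smaller $a\in A$ would itself dominate a still smaller element of $B$, contradicting minimality of $m$ in $B$. Combining this with Corollary~\ref{cor:Safe(ua)} yields $\subsetatc{\Safe{ua}}=\subsetatc{A}=\subsetatc{B}$, which is exactly the claimed identity with the quantification restricted to $\finiteH$.

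The main obstacle to anticipate is that $A$ and $B$ need \emph{not} coincide: a union over $\HdgSigma$ may, within a single $\vpaheq$-class, aggregate several distinct minimal predecessors pointing to different leaf-safe targets, whereas a union over $\finiteH$ retains only one block per class, so in general $B\subsetneq A$. The key observation is that this discrepancy is invisible to $\subseteq$-minimisation, which is precisely what the two-sided domination argument formalises; the only nontrivial ingredient feeding it is the class-invariance of (minimal) predecessors supplied by Lemma~\ref{prop:finitely-many-hedges}, together with the finiteness of $\finiteH$ bounded by $|Q|\cdot 2^{|Q|}$, which is what turns the infinitary Corollary~\ref{cor:Safe(ua)} into an effective procedure.
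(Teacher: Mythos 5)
Your proposal is correct and follows exactly the route the paper intends: the paper states this proposition as an immediate consequence of Lemma~\ref{prop:finitely-many-hedges} applied to Corollary~\ref{cor:Safe(ua)}, and your argument simply fills in the details (class-invariance of minimal $h$-predecessors, the two-sided domination between the union over $\HdgSigma$ and the union over $\finiteH$, and the elementary fact that mutual domination preserves the sets of $\subseteq$-minimal elements). The elaboration is sound, including the correct observation that the two sets being minimised need not coincide but have the same minimal elements.
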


\subsubsection{Starting point with antichains}

It remains to explain how to compute $\Safe{a}$. Clearly, by definition of $H$, we can 
compute $\lfloor \Safe{a} \rfloor$ as follows:

\begin{proposition}
$$
\subsetatc{\Safe{a}} = 
\subsetatc{ \left\lbrace \ConfSet \mid \forall h \in \finiteH,
\exists q_f \in Q_f, \exists (q, \stack) \in \ConfSet :
(q, \stack)  \xrightarrow{h\ol{a}}  (q_f, \epsilon) \right\rbrace}.$$
\end{proposition}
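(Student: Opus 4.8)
The plan is to show that the two sets underlying the antichains on each side are in fact \emph{equal}, after which equality of their $\subseteq$-minimal elements is immediate. Recall that $\Safe{a}$ is defined by the very same set-builder, but with the universal quantifier ranging over all $h\in\HdgSigma$ rather than over the finite set $\finiteH$. Since $\finiteH\subseteq\HdgSigma$, the inclusion of $\Safe{a}$ in the right-hand set is trivial (it imposes fewer constraints), so only the reverse inclusion requires an argument.

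The key step is to observe that, for a fixed $\ConfSet$ and fixed $a$, the predicate
$$\exists q_f\in Q_f,\ \exists (q,\stack)\in\ConfSet:\ (q,\stack)\xrightarrow{h\ol a}(q_f,\epsilon)$$
depends on the hedge $h$ only through the function $\vparel h$. Indeed, since the linearization of a hedge is well-matched, reading it leaves the stack unchanged, so $(q,\stack)\xrightarrow{h}(q',\stack)$ holds exactly when $q'\in\vparel h(q)$, independently of $\stack$. The subsequent $\ol a$-transition pops a single stack symbol, hence it can reach $(q_f,\epsilon)$ only if $\stack=\gamma$ for a single $\gamma\in\Gamma$, and then requires a rule $\vparule{q'}{\ol a}{\gamma}{q_f}\in\Delta$ with $q'\in\vparel h(q)$. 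Thus the whole predicate rewrites as: there exist $(q,\gamma)\in\ConfSet$, $q'\in\vparel h(q)$ and $q_f\in Q_f$ with $\vparule{q'}{\ol a}{\gamma}{q_f}\in\Delta$, an expression in which $h$ enters only via $\vparel h$. This mirrors the stack bookkeeping already carried out in the proof of Lemma~\ref{prop:finitely-many-hedges}.

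From this the reverse inclusion follows at once. The predicate is constant on $\vpaheq$-classes: if $h\vpaheq h'$ then $\vparel h=\vparel{h'}$, so it holds for $h$ iff it holds for $h'$. Now take $\ConfSet$ in the right-hand set, so the predicate holds for every $h\in\finiteH$; given an arbitrary $h\in\HdgSigma$, choose the representative $h'\in\finiteH$ with $h\vpaheq h'$ and conclude that the predicate holds for $h$ as well. Hence $\ConfSet\in\Safe{a}$, the two underlying sets are equal, and therefore so are their $\subseteq$-minimal elements. There is no genuine obstacle here; the only point needing care is the central observation—that a hedge leaves the stack untouched, so that $\vparel h$ captures everything relevant, and that the closing $\ol a$ forces the stack to have been a single symbol. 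Both are standard properties of \VPAs on well-matched words already exploited in Lemma~\ref{prop:finitely-many-hedges} and Algorithm~\ref{algo:vparel}, so this proposition is essentially the finite-index property of $\vpaheq$ read off from the definition of $\Safe{a}$.
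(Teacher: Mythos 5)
Your proof is correct and matches the paper's (implicit) reasoning: the paper states this proposition without proof, remarking only that it holds ``by definition of $\finiteH$,'' and your argument---that the predicate $(q,\stack)\xrightarrow{h\ol a}(q_f,\epsilon)$ factors through $\vparel{h}$ because a well-matched hedge leaves the stack untouched, so quantifying over one representative per $\vpaheq$-class suffices and the two underlying sets coincide---is exactly the justification intended, in the same spirit as Lemma~\ref{prop:finitely-many-hedges}. Your observation that the sets themselves are equal (not merely their $\subseteq$-minimal elements) is a slight strengthening, but harmless.
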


\subsection{Algorithmic improvements}

The previous section resulted in a first algorithm
to incrementally compute sets of safe configurations.
This algorithm can be improved by limiting 
hedges to consider, and optimizing operators and predecessors to be computed. 
The goal here is to avoid the complexity of the on-the-fly
determinization procedure.

\subsubsection{Minimal hedges}

A first improvement is obtained by further restricting hedges to consider.
Indeed it suffices to consider \emph{minimal hedges}
wrt their function $\vparel{h}$.
Formally, let us write $h\le h'$ whenever
$\vparel{h}(q) \subseteq \vparel{h'}(q)$ for every $q\in Q$.
We denote by $\subsetatc{\finiteH}$ the $\le$-minimal elements of $\finiteH$.
Notice that Algorithm~\ref{algo:vparel} that computes the set $\{ \vparel{h} \mid h \in \finiteH\}$ can be easily adapted to compute the set of its minimal elements, such that \emph{NewFunctions} and \emph{ToProcess} 
are restricted to antichains of minimal elements.

From the definition of $h$-predecessor, for
every $\ConfSet,\ConfSet'\in Q\times\Gamma^*$ we have:
\begin{equation} \label{eqn:min-hedges-and-pred}
\ConfSet \text{ $h$-predecessor of }\ConfSet' \text{ and } h\le h' \implies 
\ConfSet \text{ $h'$-predecessor of }\ConfSet'
\end{equation}
This property can be used to replace $h\in\finiteH$ in 
Proposition~\ref{prop:safe-lsafe-atc-H} by $h\in\subsetatc{\finiteH}$.

\begin{proposition} \label{prop:Step2Improved}
\label{prop:safe-lsafe-atc-Hatc} 
$$
\subsetatc{\Safe{ua}} =
\subsetatc{\left\lbrace\ConfSet \hmid \ConfSet=\bigcup_{h\in\subsetatc{\finiteH}} 
\ConfSet_h
\text{ with $\ConfSet_h$ a minimal $h$-predecessor of } 
\ConfSet'\in\subsetatc{\LSafe{ua}}\right\rbrace}
$$
\end{proposition}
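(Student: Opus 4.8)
The plan is to reduce to the already-established Proposition~\ref{prop:safe-lsafe-atc-H}, whose only difference from the claim is that it quantifies over the whole finite set $\finiteH$ rather than over its $\le$-minimal elements $\subsetatc{\finiteH}$. So it suffices to show that narrowing the hedges from $\finiteH$ to $\subsetatc{\finiteH}$ leaves the resulting antichain of $\subseteq$-minimal configuration sets unchanged.

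First I would prove a narrowed predecessor characterization: $\ConfSet\in\Safe{ua}$ iff for every $h\in\subsetatc{\finiteH}$ there is some $\ConfSet'\in\LSafe{ua}$ such that $\ConfSet$ is an $h$-predecessor of $\ConfSet'$. The forward direction is immediate from Lemma~\ref{prop:finitely-many-hedges} together with Proposition~\ref{prop:lsafe-safe-pred}, since $\subsetatc{\finiteH}\subseteq\finiteH$. For the converse, take any $h'\in\finiteH$ and choose $h\in\subsetatc{\finiteH}$ with $h\le h'$ (such an $h$ exists because $\subsetatc{\finiteH}$ collects the $\le$-minimal elements). By hypothesis $\ConfSet$ is an $h$-predecessor of some $\ConfSet'\in\LSafe{ua}$, and by property~(\ref{eqn:min-hedges-and-pred}) it is then also an $h'$-predecessor of the same $\ConfSet'$. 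Hence the predecessor condition holds for every $h'\in\finiteH$, and Lemma~\ref{prop:finitely-many-hedges} (with Proposition~\ref{prop:lsafe-safe-pred}) yields $\ConfSet\in\Safe{ua}$.

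With this characterization in hand I would establish the two inclusions exactly as in the proof of Proposition~\ref{prop:safe-lsafe-atc}. For soundness, if $\ConfSet=\bigcup_{h\in\subsetatc{\finiteH}}\ConfSet_h$ with each $\ConfSet_h$ a minimal $h$-predecessor of some $\ConfSet'_h\in\subsetatc{\LSafe{ua}}$, then $\ConfSet\supseteq\ConfSet_h$ is itself an $h$-predecessor of $\ConfSet'_h$ by upward-closure (Lemma~\ref{lem:pred-upward-closed}), so the narrowed characterization gives $\ConfSet\in\Safe{ua}$. For completeness, take $\ConfSet\in\subsetatc{\Safe{ua}}$; for each $h\in\subsetatc{\finiteH}$ the characterization provides a leaf-safe set of which $\ConfSet$ is an $h$-predecessor, which I may shrink to a minimal one $\ConfSet'_h\in\subsetatc{\LSafe{ua}}$ (removing targets only weakens the predecessor requirement) and inside $\ConfSet$ pick a minimal $h$-predecessor $\ConfSet_h$ of $\ConfSet'_h$ (minimal predecessors exist below any predecessor because, by Proposition~\ref{prop:safe-lsafe-antichains}, all configurations lie in the finite set $Q\times\Gamma^{|\open u|}$). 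Then $\bigcup_h\ConfSet_h\subseteq\ConfSet$ is safe by soundness, so minimality of $\ConfSet$ forces equality, exhibiting $\ConfSet$ as such a union. Taking $\subseteq$-minimal elements of both sides then gives $\subsetatc{\Safe{ua}}$ on the right, as required.

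The main obstacle is the backward direction of the narrowed characterization, i.e.\ justifying that the non-minimal hedges may be discarded. This is precisely where property~(\ref{eqn:min-hedges-and-pred}) is essential: it converts ``$h$-predecessor for the minimal $h$'' into ``$h'$-predecessor for every $h'$ above it'', which is exactly what Lemma~\ref{prop:finitely-many-hedges} needs to recover safety over all of $\finiteH$. Everything else --- upward-closure of predecessors, the harmlessness of shrinking leaf-safe targets, and the existence of minimal predecessors --- is routine finiteness bookkeeping that mirrors the earlier antichain arguments.
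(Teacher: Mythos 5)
Your proof is correct, but it is organized differently from the paper's. The paper starts from Proposition~\ref{prop:safe-lsafe-atc-H} (the version quantified over all of $\finiteH$) and argues by induction on the union $\ConfSet=\bigcup_{h\in\finiteH}\ConfSet_h$ that the components indexed by non-minimal hedges can be discarded one at a time: for $h'\in\finiteH\setminus\subsetatc{\finiteH}$ it picks $h\le h'$ in $\subsetatc{\finiteH}$ and substitutes the already-present component $\ConfSet_{h}$ for $\ConfSet_{h'}$, invoking~(\ref{eqn:min-hedges-and-pred}) to claim that a \emph{minimal} $h$-predecessor is also a \emph{minimal} $h'$-predecessor. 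You instead push the restriction to $\subsetatc{\finiteH}$ one level earlier, into the safety characterization itself: you first show $\ConfSet\in\Safe{ua}$ iff for every $h\in\subsetatc{\finiteH}$ it is an $h$-predecessor of some element of $\LSafe{ua}$ (combining Lemma~\ref{prop:finitely-many-hedges}, Proposition~\ref{prop:lsafe-safe-pred} and~(\ref{eqn:min-hedges-and-pred})), and then rerun the antichain argument of Proposition~\ref{prop:safe-lsafe-atc} against this narrowed characterization. Your route buys something concrete: it only needs~(\ref{eqn:min-hedges-and-pred}) to transfer the plain \emph{predecessor} property from $h$ to $h'\ge h$, whereas the paper's substitution step additionally asserts that \emph{minimality} of the predecessor transfers as well --- a claim that does not literally follow from~(\ref{eqn:min-hedges-and-pred}) alone (a strictly smaller $h'$-predecessor need not be an $h$-predecessor) and would require a separate justification. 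The price is that you must redo the minimal-element bookkeeping (shrinking targets into $\subsetatc{\LSafe{ua}}$, extracting minimal predecessors inside the finite set $\ConfSet$, and the final two-sided comparison of $\subsetatc{\cdot}$), but these steps mirror Proposition~\ref{prop:safe-lsafe-atc} and the finiteness argument of Proposition~\ref{prop:safe-lsafe-antichains}, and you handle them correctly.
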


\begin{proof}
Let $S$ denote the set 
$$\left\lbrace\ConfSet \hmid \ConfSet=\bigcup_{h\in\finiteH} 
\ConfSet_h
\text{ with $\ConfSet_h$ a minimal $h$-predecessor of } 
\ConfSet'\in\subsetatc{\LSafe{ua}}\right\rbrace$$
Let $\ConfSet\in S$. We have: $\ConfSet = 
\underbrace{\ConfSet_{h_1}\cup\dots\cup\ConfSet_{h_k}}_{h_i\in\subsetatc{\finiteH}}\cup
\underbrace{\ConfSet_{h_1'}\cup\dots\cup\ConfSet_{h_n'}}_{h_i'\in\finiteH\setminus\subsetatc{\finiteH}}$.
Let us show that 
$\ConfSet_{h_1}\cup\dots\cup\ConfSet_{h_k}\cup
\ConfSet_{h_1'}\cup\dots\cup\ConfSet_{h_{n-1}'}\in S$.
By induction, this will prove that 
$\ConfSet_{h_1}\cup\dots\cup\ConfSet_{h_k}\in S$.
We have $h_n'\in\finiteH\setminus\subsetatc{\finiteH}$,
so there exists $h_i\in\subsetatc{\finiteH}$ such that $h_i\le h_n'$.
As $\ConfSet_{h_i}$ is a minimal $h_i$-predecessor of an element $\ConfSet'$ in
$\subsetatc{\LSafe{ua}}$, it follows 
from (\ref{eqn:min-hedges-and-pred}) that
$\ConfSet_{h_i}$ is also a minimal $h_n'$-predecessor of $\ConfSet'$.
So $\ConfSet_{h_1}\cup\dots\cup\ConfSet_{h_k}\cup
\ConfSet_{h_1'}\cup\dots\cup\ConfSet_{h_{n-1}'}\cup\ConfSet_{h_i}\in S$.
\end{proof}

We have also the next proposition.

\begin{proposition}
\label{prop:safe-a-finiteH}
$$
\subsetatc{\Safe{a}} = 
\subsetatc{ \left\lbrace \ConfSet \mid \forall h \in\subsetatc{\finiteH},
\exists q_f \in Q_f, \exists (q, \stack) \in \ConfSet :
(q, \stack)  \xrightarrow{h\ol{a}}  (q_f, \epsilon) \right\rbrace}.
$$
\end{proposition}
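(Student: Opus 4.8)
The plan is to show that the defining condition on $\ConfSet$ in the statement depends on each hedge $h$ only through its function $\vparel h$ and is monotone with respect to the order $\le$; the proposition then follows from its $\finiteH$-version (the immediately preceding proposition) by exactly the same antichain argument that lets Proposition~\ref{prop:safe-lsafe-atc-Hatc} be deduced from Proposition~\ref{prop:safe-lsafe-atc-H}.

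First I would isolate the relevant property of the condition. For the fixed $a\in\Sigma$, say that $\ConfSet$ \emph{covers} $h$ if there exist $q_f\in Q_f$ and $(q,\stack)\in\ConfSet$ with $(q,\stack)\xrightarrow{h\ol a}(q_f,\epsilon)$. Since a hedge is well-balanced and therefore never pops below the stack level at which it starts, reading $\lin h$ from $(q,\stack)$ ends in some $(q',\stack)$ with $q'\in\vparel h(q)$, independently of $\stack$; this is precisely the observation underlying Lemma~\ref{prop:finitely-many-hedges}. Hence $\ConfSet$ covers $h$ iff there are $(q,\stack)\in\ConfSet$, a state $q'\in\vparel h(q)$, and $q_f\in Q_f$ with $(q',\stack)\xrightarrow{\ol a}(q_f,\epsilon)$, so whether $\ConfSet$ covers $h$ depends on $h$ only through $\vparel h$.

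The key step is monotonicity: if $\ConfSet$ covers $h$ and $h\le h'$, then $\ConfSet$ covers $h'$. Indeed, a witnessing $q'\in\vparel h(q)$ satisfies $q'\in\vparel{h'}(q)$ by the definition of $\le$, so the same $(q,\stack)$, $q'$ and $q_f$ witness coverage of $h'$. This is the analogue for the starting point of implication~(\ref{eqn:min-hedges-and-pred}). Given this, I would establish the set equality
$$
\{\ConfSet \mid \forall h\in\finiteH,\ \ConfSet \text{ covers } h\}
=
\{\ConfSet \mid \forall h\in\subsetatc{\finiteH},\ \ConfSet \text{ covers } h\}.
$$
The inclusion $\subseteq$ is immediate since $\subsetatc{\finiteH}\subseteq\finiteH$. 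For $\supseteq$, let $\ConfSet$ cover every $h\in\subsetatc{\finiteH}$ and pick any $h'\in\finiteH$; as $\finiteH$ is finite there is a $\le$-minimal $h\in\subsetatc{\finiteH}$ with $h\le h'$, and coverage of $h$ lifts to coverage of $h'$ by monotonicity. Applying $\subsetatc{\cdot}$ to these two equal sets and invoking the preceding proposition, which identifies $\subsetatc{\Safe a}$ with $\subsetatc{\cdot}$ of the left-hand set, yields the claim.

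I do not expect a genuine obstacle here, since the argument is a direct transcription of the Step~2 optimization to the base case. The only point requiring care is the reduction in the second paragraph of the coverage condition to $\vparel h$, because it implicitly relies on the stack below the current $a$-node being untouched while traversing $h$; once that is recorded, the monotonicity claim and the finite-poset covering argument are routine.
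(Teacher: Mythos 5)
Your proof is correct. Note that the paper itself states Proposition~\ref{prop:safe-a-finiteH} without proof, so there is nothing to match it against; your argument fills that gap in exactly the spirit of the paper's Step~2 optimization. The two ingredients you isolate are the right ones: (i) the coverage condition $\exists q_f\in Q_f,\ \exists(q,\stack)\in\ConfSet:(q,\stack)\xrightarrow{h\ol a}(q_f,\epsilon)$ factors through $\vparel{h}$ because a hedge leaves the stack untouched (the same fact the paper uses in Lemma~\ref{prop:finitely-many-hedges} and, implicitly, in the proof of Theorem~\ref{thm:VPAVersHedgeaut}), and (ii) coverage is monotone in $h$ with respect to $\le$, which is the base-case analogue of implication~(\ref{eqn:min-hedges-and-pred}). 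Since every $h'\in\finiteH$ dominates some $h\in\subsetatc{\finiteH}$ in the finite poset $(\finiteH,\le)$, the two defining sets (quantified over $\finiteH$ and over $\subsetatc{\finiteH}$) coincide as sets, and the claim follows from the unlabelled proposition in the ``Starting point with antichains'' subsection. Your route is in fact slightly more direct than the paper's proof of the Step~2 analogue, Proposition~\ref{prop:safe-lsafe-atc-Hatc}: there the statement involves a union decomposition $\ConfSet=\bigcup_h\ConfSet_h$, which forces an element-replacement induction, whereas here the universally quantified coverage condition lets you conclude set equality outright before applying $\subsetatc{\cdot}$. The one point of care you flag --- that $(q,\stack)\xrightarrow{\lin h}(q',\stack)$ holds for all $\stack$ as soon as it holds for one --- is indeed the only nontrivial step, and it is standard for well-matched words on \VPAs.
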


\subsubsection{An appropriate union operator}

Proposition~\ref{prop:safe-lsafe-atc-Hatc} expresses that every 
set of configurations $\ConfSet$ in $\subsetatc{\Safe{ua}}$ is the union
of $\ConfSet_h$ with $h\in\subsetatc{\finiteH}$.
We introduce a new operator to improve the readability and find new
properties.

\begin{definition}
Let $S$ be a finite set, and $A,B \in  \in 2^{2^S \setminus \{\emptyset\}}$.
The set $A\roundcup B \in 2^{2^S}$ is defined by:
$$
A \roundcup B = \{ a\cup b \hmid a\in A\text{ and } b\in B \}
$$
\end{definition}
Operator $\roundcup$ builds sets obtained by taking one set of each of its
operands, and performing their union. 
It is obviously associative and commutative.
Notice that the elements of $A, B$ are supposed to be non-empty sets. 
This will always be the case in the following algorithms using this operator.
Proposition~\ref{prop:safe-lsafe-atc-Hatc} can now be rewritten as follows.

\begin{proposition} \label{prop:Step2Improvedbis}
\label{prop:safe-lsafe-atc-Hatc-roundcup}
$$
\subsetatc{\Safe{ua}} =
\subsetatc{
\bigsqcup_{h\in\subsetatc{\finiteH}}
\left\lbrace
\ConfSet_h \hmid 
\ConfSet_h \text{ is a minimal $h$-predecessor of } 
\ConfSet'\in\subsetatc{\LSafe{ua}}
\right\rbrace
}
$$
\end{proposition}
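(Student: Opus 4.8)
The plan is to derive this statement directly from Proposition~\ref{prop:safe-lsafe-atc-Hatc} by checking that the set to which $\subsetatc{\cdot}$ is applied on the right-hand side here is \emph{exactly the same set} as the one appearing inside $\subsetatc{\cdot}$ in Proposition~\ref{prop:safe-lsafe-atc-Hatc}. Once this equality of the two inner collections is established, applying $\subsetatc{\cdot}$ to both and invoking Proposition~\ref{prop:safe-lsafe-atc-Hatc} yields the claim. So the entire argument reduces to unfolding the definition of the operator $\roundcup$.

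Concretely, for each $h\in\subsetatc{\finiteH}$ I would set $A_h = \{ \ConfSet_h \mid \ConfSet_h \text{ is a minimal $h$-predecessor of some } \ConfSet'\in\subsetatc{\LSafe{ua}} \}$, so that the right-hand side reads $\subsetatc{\bigsqcup_{h\in\subsetatc{\finiteH}} A_h}$. Since $\roundcup$ is associative and commutative, the finite iterated application $\bigsqcup_{h\in\subsetatc{\finiteH}} A_h$ is well defined, and directly from the definition $A\roundcup B = \{a\cup b \mid a\in A, b\in B\}$ it equals the collection of all unions $\bigcup_{h\in\subsetatc{\finiteH}} \ConfSet_h$ obtained by choosing one element $\ConfSet_h\in A_h$ for each index $h$. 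By the definition of $A_h$, each chosen $\ConfSet_h$ is precisely a minimal $h$-predecessor of an element of $\subsetatc{\LSafe{ua}}$, so $\bigsqcup_{h\in\subsetatc{\finiteH}} A_h$ coincides, as a set, with $\{ \ConfSet \mid \ConfSet = \bigcup_{h\in\subsetatc{\finiteH}} \ConfSet_h \text{ with } \ConfSet_h \text{ a minimal $h$-predecessor of } \ConfSet'\in\subsetatc{\LSafe{ua}} \}$, which is exactly the inner set of Proposition~\ref{prop:safe-lsafe-atc-Hatc}.

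The only point needing a little care — and the closest thing to an obstacle in an otherwise definitional argument — is to confirm the rewriting stays legitimate in two degenerate situations. First, if some operand $A_h$ is empty, then $\bigsqcup_{h} A_h=\emptyset$; but symmetrically no configuration then admits a decomposition of the required form, so the inner set of Proposition~\ref{prop:safe-lsafe-atc-Hatc} is empty as well, and the two descriptions degenerate consistently. Second, I would verify that $\roundcup$ is applied to admissible arguments, i.e.\ that every element of every $A_h$ is a non-empty set of configurations: since $ua\in\PPref\TSigma$ the open $a$-node can always be closed, whence $\emptyset\notin\LSafe{ua}$, so each $\ConfSet'\in\subsetatc{\LSafe{ua}}$ is non-empty and so is any of its minimal $h$-predecessors. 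With the two inner collections shown equal and the typing of $\roundcup$ justified, applying $\subsetatc{\cdot}$ and Proposition~\ref{prop:safe-lsafe-atc-Hatc} finishes the proof.
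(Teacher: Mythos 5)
Your argument is correct and matches the paper's intent exactly: the paper offers no explicit proof here, presenting the statement as an immediate rewriting of Proposition~\ref{prop:safe-lsafe-atc-Hatc} obtained by unfolding the definition of $\roundcup$, which is precisely what you do. Your additional checks --- that the degenerate case of an empty operand is consistent on both sides, and that the elements fed to $\roundcup$ are non-empty sets as its definition requires --- are sound and slightly more careful than the paper itself.
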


When combined with operator $\subsetatc{.}$,
clauses of the $\roundcup$ operator can be splitted,
so that $\roundcup$ is to be computed on smaller sets.

\begin{lemma}
\label{lem:decompose-roundcup} 
$\subsetatc{A \roundcup B} = 
\subsetatc{(A \cap B)  \cup  
(A\setminus B\ \roundcup \ B\setminus A)}$
\end{lemma}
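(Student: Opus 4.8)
The plan is to reduce the claimed identity to the elementary fact that two finite families of sets, ordered by $\subseteq$, have the same $\subseteq$-minimal elements precisely when each member of one family contains some member of the other and vice versa. Concretely, writing $X = A\roundcup B$ and $Y = (A\cap B)\cup(A\setminus B\ \roundcup\ B\setminus A)$, I would first invoke the following standard remark: $\subsetatc{X}=\subsetatc{Y}$ as soon as (i) every $x\in X$ admits some $y\in Y$ with $y\subseteq x$, and (ii) every $y\in Y$ admits some $x\in X$ with $x\subseteq y$. This equivalence has a two-line proof using minimality, so the work reduces to checking (i) and (ii). To organise the case analysis I would abbreviate $C=A\cap B$, $A'=A\setminus B$ and $B'=B\setminus A$, so that $A=C\cup A'$ and $B=C\cup B'$ are disjoint unions and $Y=C\cup(A'\roundcup B')$.

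For direction (i), I would take an arbitrary $x=a\cup b\in A\roundcup B$ with $a\in A$ and $b\in B$. The key observation is that $a\subseteq x$ and $b\subseteq x$, so whenever one of the two operands already lies in $C$ it serves as the required witness in $Y$: if $a\in C$ then $a\in Y$ and $a\subseteq x$, and symmetrically if $b\in C$. In the remaining case $a\notin C$ and $b\notin C$; since $a\in A$ this forces $a\in A'$, and since $b\in B$ it forces $b\in B'$, so $x=a\cup b\in A'\roundcup B'\subseteq Y$ with $x\subseteq x$ trivially. This settles (i).

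For direction (ii), I would take $y\in Y$. If $y\in C=A\cap B$, then $y\in A$ and $y\in B$, hence $y=y\cup y\in A\roundcup B=X$, giving the witness $y\subseteq y$. If instead $y\in A'\roundcup B'$, then $y=a\cup b$ with $a\in A'\subseteq A$ and $b\in B'\subseteq B$, so again $y\in X$. In both cases $y$ itself belongs to $X$, which yields (ii) and completes the argument. The only real subtlety — the step I would flag as the crux — is the middle case of (i): one must notice that the ``mixed'' products such as $C\roundcup B'$ and $A'\roundcup C$, which do appear when $A\roundcup B$ is fully distributed over $A=C\cup A'$ and $B=C\cup B'$, never contribute new minimal elements, precisely because each of their members contains an element of $C$ that is itself present in $A\roundcup B$ and reappears in $Y$ under the summand $A\cap B$. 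Everything else is routine bookkeeping on the disjoint decompositions of $A$ and $B$.
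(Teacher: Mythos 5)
Your proof is correct, but it is organised differently from the paper's. The paper proves the two inclusions directly by contradiction: for $\supseteq$ it supposes a minimal element of the right-hand family is not minimal in $A\roundcup B$ and analyses where a strictly smaller element of $A\roundcup B$ could live; for $\subseteq$ it first shows a minimal element $\ConfSet$ of $A\roundcup B$ belongs to $(A\cap B)\cup(A\setminus B\ \roundcup\ B\setminus A)$ (ruling out, by minimality, that the $B$-component of $\ConfSet=a\cup b$ lies in $A$), and then reuses the already-proved $\supseteq$ inclusion to conclude minimality. You instead factor out a general principle — if every element of $X$ contains an element of $Y$ and every element of $Y$ contains an element of $X$, then $\subsetatc{X}=\subsetatc{Y}$ — and then verify the two domination conditions by short, direct case analyses; your verification of both conditions is complete (including the implicit degenerate cases where $A\setminus B$ or $B\setminus A$ is empty, which are handled vacuously). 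The combinatorial core is the same in both proofs, namely that a mixed union $a\cup b$ with, say, $a\in A\cap B$ is dominated by $a$ itself, which survives in the summand $A\cap B$; but your route makes this explicit once and for all, avoids the nested contradictions, and yields a reusable lemma, at the cost of having to state and prove (even if in two lines) the auxiliary equivalence that the paper never isolates.
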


\begin{proof}
$(\supseteq)$
Let $\ConfSet \in \subsetatc{(A \cap B)  \cup  
(A\setminus B\ \roundcup \ B\setminus A)}$.
Then $\ConfSet\in A\roundcup B$.
For contradiction, let us assume that there exists $\ConfSet'\subsetneq\ConfSet$
such that $\ConfSet'\in\subsetatc{A\roundcup B}$.
If $\ConfSet'\in A\cap B$ then $\ConfSet'\in (A \cap B)  \cup  
(A\setminus B\  \roundcup\  B\setminus A)$,
which contradicts $\ConfSet$.
So $\ConfSet'\notin A\cap B$, and assume wlog that $\ConfSet'=a\cup b$
with $a\in A\setminus B$ and $b\in B$.
If $b\in A$ then $b\in A\cap B\subseteq A\roundcup B$ and 
$b\subsetneq\ConfSet'$, but this contradicts $\ConfSet'$.
If $b\notin A$ then $\ConfSet'\in A\setminus B\roundcup B\setminus A$,
so $\ConfSet'\in A\cap B \cup (A\setminus B\ \roundcup \ B\setminus A)$,
and $\ConfSet'\subsetneq\ConfSet$, which contradicts $\ConfSet$.

$(\subseteq)$
Let $\ConfSet\in\subsetatc{A\roundcup B}$.
Let us first show that 
$\ConfSet\in (A\cap B) \cup (A\setminus B\ \roundcup\ B\setminus A)$.
If $\ConfSet\in A\cap B$ this is direct. 
Otherwise $\ConfSet=a\cup b$ with $a\in A\setminus B$ and $b\in B$
(the other case is symmetric).
If $b\in A$ then $b\in A\cap B\subseteq A\roundcup B$ and $b\subsetneq\ConfSet$,
which contradicts the definition of $\ConfSet$.
So $b\in B\setminus A$, and $\ConfSet\in A\setminus B \ \roundcup \ B\setminus A$.
Now, assume for contradiction that there exists $\ConfSet'\subsetneq\ConfSet$
such that 
$\ConfSet'\in\subsetatc{(A\cap B) \cup (A\setminus B \ \roundcup \ B\setminus A)}$.
Then, according to $(\supseteq)$, $\ConfSet'\in\subsetatc{A\roundcup B}$,
which contradicts the definition of $\ConfSet$.
\end{proof}

\begin{corollary}
If $A\subseteq B$, then $\subsetatc{A \roundcup B} = \subsetatc{A}$.
\end{corollary}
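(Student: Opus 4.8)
The plan is to derive this directly from Lemma~\ref{lem:decompose-roundcup} by specializing its right-hand side to the case $A \subseteq B$. Lemma~\ref{lem:decompose-roundcup} states that
$$\subsetatc{A \roundcup B} = \subsetatc{(A \cap B) \cup (A\setminus B \roundcup B\setminus A)},$$
so it suffices to simplify the two terms $A \cap B$ and $A\setminus B \roundcup B\setminus A$ under the hypothesis, and the whole claim reduces to elementary set manipulation.

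First I would use $A \subseteq B$ to record the two elementary identities $A \cap B = A$ and $A \setminus B = \emptyset$. Then I would observe that $\roundcup$ is absorbing for an empty operand: straight from its definition, $\emptyset \roundcup X = \{a \cup b \mid a \in \emptyset,\ b \in X\} = \emptyset$, since there is no $a$ to choose in $\emptyset$. Consequently $A\setminus B \roundcup B\setminus A = \emptyset \roundcup (B\setminus A) = \emptyset$. Substituting both simplifications into the expression above yields $\subsetatc{A \cup \emptyset} = \subsetatc{A}$, which is exactly the desired equality.

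There is no genuine obstacle here: the statement is an immediate specialization of Lemma~\ref{lem:decompose-roundcup}, and the only step worth flagging is the absorption property $\emptyset \roundcup X = \emptyset$, which follows from the definition of $\roundcup$ (one picks an element from each operand, which is impossible when an operand is empty) rather than from any feature of the $\subsetatc{\cdot}$ operator. Once this is noted, the remainder is routine set algebra.
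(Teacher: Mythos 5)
Your proof is correct and takes the same route the paper intends: the statement is presented as an immediate consequence of Lemma~\ref{lem:decompose-roundcup}, and your specialization --- $A \cap B = A$, $A \setminus B = \emptyset$, and the absorption $\emptyset \roundcup X = \emptyset$ read off from the definition of $\roundcup$ --- is exactly the computation the paper leaves implicit. Nothing is missing.
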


The $\roundcup$ operator also simplifies the definition of
$\lfloor \Safe{a} \rfloor$. From this new definition, an algorithm follows.

\begin{proposition}
$\subsetatc{\Safe{a}} = 
\subsetatc{ \bigsqcup_{h\in\subsetatc{\finiteH}} A_h }$
with 
$$A_h = \left \{ \{ (q,\sigma) \} \mid
q \in Q, \sigma \in \Gamma : \exists q_f \in Q_f :  (q,\sigma) \xrightarrow{h{\ol a}} 
(q_f, \epsilon) \right \}.$$
\end{proposition}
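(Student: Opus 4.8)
The plan is to reduce the claim to Proposition~\ref{prop:safe-a-finiteH}. That proposition already states $\subsetatc{\Safe{a}} = \subsetatc{B}$ where
$$B = \left\lbrace \ConfSet \mid \forall h \in \subsetatc{\finiteH},\ \exists q_f \in Q_f,\ \exists (q,\stack) \in \ConfSet : (q,\stack) \xrightarrow{h\ol{a}} (q_f,\epsilon) \right\rbrace.$$
Writing $C = \bigsqcup_{h \in \subsetatc{\finiteH}} A_h$ for the right-hand side of the statement, it therefore suffices to prove $\subsetatc{B} = \subsetatc{C}$. I would establish this through two elementary facts about $B$ and $C$, followed by a routine antichain argument.

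First I would unfold the definition of $\roundcup$. Since each $A_h$ consists only of singletons $\{(q,\sigma)\}$ whose configuration satisfies $(q,\sigma) \xrightarrow{h\ol{a}} (q_f,\epsilon)$ for some $q_f \in Q_f$, an element of $C$ is exactly a union $\bigcup_{h \in \subsetatc{\finiteH}} \{(q_h,\sigma_h)\}$, that is, a choice of one witness configuration per $h$. Reading this off immediately gives $C \subseteq B$, because such a set contains a witness for every $h \in \subsetatc{\finiteH}$. Conversely, I would prove the domination property: every $\ConfSet \in B$ contains an element of $C$. Indeed, given $\ConfSet \in B$, for each $h$ pick a witness $(q_h,\sigma_h) \in \ConfSet$; then $\ConfSet_0 = \{ (q_h,\sigma_h) \mid h \in \subsetatc{\finiteH}\}$ belongs to $C$ and satisfies $\ConfSet_0 \subseteq \ConfSet$.

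These two facts --- $C \subseteq B$ and ``every element of $B$ dominates some element of $C$'' --- then force $\subsetatc{B} = \subsetatc{C}$ by the standard reasoning: a $\subseteq$-minimal element of $B$ must coincide with the element of $C$ it dominates, hence lies in $C$ and stays minimal there since $C \subseteq B$; conversely a minimal element of $C$ lies in $B$ and cannot strictly contain any member of $B$, as that member would itself dominate an element of $C$, contradicting minimality in $C$. I expect the only delicate points to be the precise reading of $\roundcup$ on singleton operands (used to match ``one witness per hedge class'' with the unions produced by $\bigsqcup$) and the degenerate case in which some $A_h$ is empty: there $B$, $C$, and $\Safe{a}$ are all empty and the identity holds trivially. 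Once $C \subseteq B$ and the domination property are in hand, the antichain bookkeeping is entirely routine.
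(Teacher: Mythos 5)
Your proof is correct and follows essentially the same route as the paper's: containment of $\bigsqcup_{h\in\subsetatc{\finiteH}} A_h$ in (a set with the same minimal elements as) $\Safe{a}$, the domination property obtained by selecting one witness configuration per $h\in\subsetatc{\finiteH}$, and the standard antichain bookkeeping. The only cosmetic difference is that you route explicitly through Proposition~\ref{prop:safe-a-finiteH}, whereas the paper argues directly with $\Safe{a}$ and leaves the reduction from all hedges to $\subsetatc{\finiteH}$ implicit in its first step.
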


\begin{proof}
\begin{enumerate}
\item Every element of $\bigsqcup_{h\in\subsetatc{\finiteH}} A_h$ belongs to $\Safe{a}$. Thus $\subsetatc{\bigsqcup_{h\in\subsetatc{\finiteH}} A_h}$ $\subseteq \Safe{a}$.
\item Let us show that for each $\ConfSet$ in $\Safe{a}$, there exists $\ConfSet' \in \bigsqcup_{h\in\subsetatc{\finiteH}} A_h$ such
that $\ConfSet' \subseteq \ConfSet$.
Let $\ConfSet \in \Safe{a}$. By definition, for all $h \in \subsetatc{\finiteH}$
there exists $(q_h, \stack_h) \in \ConfSet$ and $q_f \in Q_f$ such that 
$(q_h, \stack_h)  \xrightarrow{h\ol{a}}  (q_f, \epsilon)$.
Let $\ConfSet' = \{ (q_h, \stack_h) \mid h  \in \subsetatc{\finiteH}  \}$. Then
$\ConfSet' \subseteq \ConfSet$ and $\ConfSet' \in \bigsqcup_{h\in\subsetatc{\finiteH}} A_h$
because $\{ (q_h,\stack_h) \} \in A_h, \forall h$.
\item Assume that there exists $\ConfSet_* \in \subsetatc{\bigsqcup_{h\in\subsetatc{\finiteH}} A_h} \setminus \subsetatc{\Safe{a}}$. By 1., there exists $\ConfSet$ in $\subsetatc{\Safe{a}}$ such that $\ConfSet \subsetneq \ConfSet_*$; and by 2., there exists $\ConfSet' \in \bigsqcup_{h\in\subsetatc{\finiteH}} A_h$ such that $\ConfSet' \subseteq \ConfSet \subsetneq \ConfSet_*$ in contradiction with the definition of $\ConfSet_*$. Therefore $\subsetatc{\bigsqcup_{h\in\subsetatc{\finiteH}} A_h} \subseteq \subsetatc{\Safe{a}}$.
\item Let $\ConfSet \in \subsetatc{\Safe{a}}$. By 2., there exists $\ConfSet' \in \subsetatc{\bigsqcup_{h\in\subsetatc{\finiteH}} A_h}$ such that $\ConfSet' \subseteq \ConfSet$. By 3., it follows that $\ConfSet = \ConfSet'$ and thus $\subsetatc{\Safe{a}} \subseteq \subsetatc{\bigsqcup_{h\in\subsetatc{\finiteH}} A_h}$.

\end{enumerate}
\end{proof}

\subsubsection{Using SAT solvers to find minimal predecessors}

The computation of minimal predecessors is the key operation
for Step~1 and Step~2 which respectively compute $\subsetatc{\LSafe{ua}}$ from $\subsetatc{\Safe{u}}$ and $\subsetatc{\Safe{ua}}$ from $\subsetatc{\LSafe{ua}}$ using the following formulas (see Corollary~\ref{cor:Step1Antichains} and Proposition~\ref{prop:Step2Improvedbis}) :
$$
\subsetatc{\LSafe{ua}} =
\subsetatc{\left\lbrace\ConfSet \hmid \ConfSet 
\text{ is a minimal $\ol a$-predecessor of } \ConfSet'\in\subsetatc{\Safe{u}}
\right\rbrace}
$$
$$
\subsetatc{\Safe{ua}} =
\subsetatc{
\bigsqcup_{h\in\subsetatc{\finiteH}}
\left\lbrace
\ConfSet_h \hmid 
\ConfSet_h \text{ is a minimal $h$-predecessor of } 
\ConfSet'\in\subsetatc{\LSafe{ua}}
\right\rbrace
}
$$

We propose a method to compute minimal predecessors by performing
multiple calls to a SAT solver. A SAT solver is an algorithm used to efficiently test the satisfiability of a boolean formula $\varphi$, that is to check whether there exists a valuation $\charactval$ of the boolean variables of $\varphi$ that makes $\varphi$ true. In this case we say that $v$ is a \emph{model} of $\varphi$, denoted by $\charactval \models \varphi$.

Most of the SAT solvers require that the boolean formula given as input is
a conjunction of clauses (where a clause is a disjunction of literals, and
a literal is a variable or its negation).
Such formulas are said to be in conjunctive normal form (CNF).
In the following all input formulas will be in CNF.

We first detail a method to compute all minimal $\ol a$-predecessor of $\ConfSet'$. It is also valid to compute all minimal $h$-predecessors of $\ConfSet'$. 

\paragraph{Minimal predecessors.}
We recall that $\ConfSet$ is a $\ol a$-predecessor of
$\ConfSet'$ if for all $(q',\stack')\in\ConfSet'$, 
there exists $(q,\stack)\in\ConfSet$ such that
$(q,\stack) \xrightarrow{\ol a} (q',\stack')$.
Let us write $\satform{\ol a}{\ConfSet'}$ for the following boolean formula:
$$
\satform{\ol a}{\ConfSet'} = 
\bigwedge_{c'\in\ConfSet'} \bigvee_{c\xrightarrow{\ol a} c'} x_c,
$$
and let $\charact{\ConfSet}$ be the valuation such that $\charact{\ConfSet}(x_c) = 1$ iff $c\in\ConfSet$.
Then we immediately obtain that:
$$
\charact{\ConfSet}\models \satform{\ol a}{\ConfSet'}
\text{\qquad iff \qquad
$\ConfSet$ is an $\ol a$-predecessor of $\ConfSet'$}
$$

We define an ordering over valuations as follows, in a way to have a notion of minimal models equivalent to minimal predecessors.
Let $\varphi$ be a CNF boolean formula over the set $V$ of boolean variables, let $\charactval$ and $\charactval'$ be two valuations over $V$. We define $\charactval' \le \charactval$ iff for all variables $x\in V$, 
$\charactval'(x) = 1 \implies \charactval(x) = 1$. 
We denote  $\charactval' < \charactval$ if $\charactval' \le \charactval$ and $\charactval' \not = \charactval$.
We say that a model $\charactval$ of $\varphi$ is \emph{minimal} if for all model $\charactval'$ of $\varphi$, we have $\charactval' \le \charactval \implies \charactval' = \charactval$. We get the next characterization which also holds for $h$-predecessors.

\begin{lemma}
\label{lem:predecessors-using-sat} 
$\ConfSet$ is a minimal $\ol a$-predecessor of $\ConfSet'$ iff
$\charact{\ConfSet}$ is a minimal model of $\satform{\ol a}{\ConfSet'}$.
\end{lemma}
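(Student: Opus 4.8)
The plan is to build on the equivalence recorded just above the statement, namely that $\charact{\ConfSet} \models \satform{\ol a}{\ConfSet'}$ holds precisely when $\ConfSet$ is an $\ol a$-predecessor of $\ConfSet'$. Once this is in hand, the lemma becomes a purely order-theoretic statement: I only need that the encoding $\ConfSet \mapsto \charact{\ConfSet}$ is an order isomorphism sending the $\subseteq$-order on predecessors to the $\le$-order on models, after which minimal elements necessarily correspond on both sides.

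First I would verify the two ingredients of this isomorphism. Monotonicity in both directions is immediate from the definitions: since $\charact{\ConfSet}(x_c) = 1 \iff c\in\ConfSet$ and $\charactval' \le \charactval \iff (\charactval'(x)=1 \Rightarrow \charactval(x)=1)$, one has $\ConfSet_1 \subseteq \ConfSet_2 \iff \charact{\ConfSet_1} \le \charact{\ConfSet_2}$. Surjectivity onto valuations is equally clear, as any valuation $\charactval'$ equals $\charact{\ConfSet}$ for $\ConfSet = \{c \mid \charactval'(x_c)=1\}$. Combined with the predecessor/model equivalence, I then read off both directions. For $(\Rightarrow)$, if $\ConfSet$ is a minimal $\ol a$-predecessor then $\charact{\ConfSet}$ is a model, and any model $\charactval' < \charact{\ConfSet}$ would be $\charact{\ConfSet''}$ with $\ConfSet'' \subsetneq \ConfSet$ still a predecessor, contradicting minimality; hence $\charact{\ConfSet}$ is a minimal model. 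The direction $(\Leftarrow)$ is symmetric, turning a strictly smaller predecessor into a strictly smaller model.

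The step that requires care is matching the variables of $\satform{\ol a}{\ConfSet'}$ with the configurations a predecessor may usefully contain. The variables are exactly the $x_c$ for which $c \xrightarrow{\ol a} c'$ for some $c'\in\ConfSet'$, a finite set because each target configuration admits finitely many $\ol a$-predecessors in a \VPA. I would therefore observe that a $\subseteq$-minimal $\ol a$-predecessor contains only such \emph{relevant} configurations: any $c\in\ConfSet$ with $c \xrightarrow{\ol a} c'$ for no $c'\in\ConfSet'$ could be deleted without breaking the predecessor property, so minimality forbids it. This is precisely what makes $\ConfSet \mapsto \charact{\ConfSet}$ a genuine bijection between the relevant configuration sets and the valuations over the formula's variables, and it prevents a spurious non-minimal predecessor (one padded with irrelevant configurations) from masquerading as a minimal model. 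Once this relevance restriction is pinned down, the order-theoretic transfer of minimality described above goes through without further obstacles.
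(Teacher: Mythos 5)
Your proposal is correct and follows the same route the paper intends: the lemma is stated there without proof, as an immediate consequence of the equivalence $\charact{\ConfSet}\models \satform{\ol a}{\ConfSet'}$ iff $\ConfSet$ is an $\ol a$-predecessor of $\ConfSet'$, together with the fact that $\ConfSet\mapsto\charact{\ConfSet}$ carries $\subseteq$ to $\le$. Your additional observation about \emph{relevant} configurations --- that the variables of $\satform{\ol a}{\ConfSet'}$ only range over configurations $c$ with $c\xrightarrow{\ol a}c'$ for some $c'\in\ConfSet'$, and that a $\subseteq$-minimal predecessor can contain no others --- is a genuine subtlety the paper glosses over, and it is exactly what is needed to make the converse direction airtight.
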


We can now explain how to compute all the minimal $\ol a$-predecessors of $\ConfSet'$,
or equivalently all the minimal models of formula $\satform{\ol a}{\ConfSet'}$.

Let $\varphi$ be a CNF boolean formula over $V$. First, we explain, knowing a model $\charactval$ of $\varphi$, how to compute a model $\charactval'$ of $\varphi$ such that $\charactval' < \charactval$ (if it exists). Consider the next formula $\varphi'$:
$$
\varphi' = \varphi  \wedge
( \bigwedge_{x \in V_0} {\neg x}  )
\wedge ( \bigvee_{x \in V_1} {\neg x} )
$$
where $V_0$  (respectively $V_1$) is the set of all variables $x \in V$
such that $\charactval(x) = 0$ (resp. $\charactval(x) = 1$). If $\varphi'$ has a model $\charactval'$, it follows from the definition of $\varphi'$ that $\charactval'$ is a model of $\varphi$ such that $\charactval' < 
\charactval$. Otherwise, $\charactval$ is a minimal model of $\varphi$.
So from a model of $\varphi$ we can compute a minimal model of $\varphi$
by repeating the above procedure.

Second, let us explain how to compute all the minimal models of $\varphi$. Suppose that we already know some minimal model $\charactval$ of $\varphi$, and let $V_1$ be the set of variables $x \in V$ such that $\charactval(x) =1$. Consider the formula
$$ \varphi' = \varphi \wedge ( \bigvee_{x \in V_1} \neg x ).$$
Then a model $\charactval'$ of $\varphi'$, if it exists, is a model of $\varphi$ such that neither $\charactval' < \charactval$ (since $v$ is minimal) nor $\charactval < \charactval'$ (by definition of $\varphi'$). 
With the previous procedure, we thus get a minimal model of $\varphi$ that is distinct from $\charactval$. In this way we can compute all minimal models of $\varphi$.

This approach has been detailed for minimal $\ol a$-predecessors. It also works for minimal $h$-predecessors.

\paragraph{Step 1 with SAT solvers.}
The computation of the set $\subsetatc{\LSafe{ua}}$ from $\subsetatc{\Safe{u}}$ can also be done using SAT solvers. Indeed, suppose that given $\ConfSet'_1 \in\subsetatc{\Safe{u}}$, we have computed all the minimal $\ol a$-predecessors of $\ConfSet'_1$ as explained before. Let $\ConfSet'_2$ be another elements of $\subsetatc{\Safe{u}}$. As done previously, we can express by boolean formulas, that we want to compute minimal $\ol a$-predecessor of $\ConfSet'_2$ that are either strictly included in some minimal $\ol a$-predecessor of $\ConfSet'_1$, or incomparable with all minimal $\ol a$-predecessors of $\ConfSet'_1$. 

\paragraph{Step 2 with SAT solvers.}
The computation of the set $\subsetatc{\Safe{ua}}$ from $\subsetatc{\LSafe{ua}}$ can be done as in Proposition~\ref{prop:Step2Improvedbis} by using operator $\roundcup$ and exploiting its properties. 

Under the hypothesis that $\epsilon \in \subsetatc{\finiteH}$, an alternative is possible with Proposition~\ref{prop:Step2Improved} stating that $\subsetatc{\Safe{ua}}$ is equal to
$$
\subsetatc{\left\lbrace\ConfSet \hmid \ConfSet=\bigcup_{h\in\subsetatc{\finiteH}} 
\ConfSet_h
\text{ with $\ConfSet_h$ a minimal $h$-predecessor of } 
\ConfSet'\in\subsetatc{\LSafe{ua}}\right\rbrace}
$$
It is based on the following observations. Fix some $\ConfSet$ and $\ConfSet'$ in the previous equality. First, if $h = \epsilon$, then $\ConfSet'$ is the only minimal $h$-predecessor of $\ConfSet'$ and thus $\ConfSet' \subseteq \ConfSet$. Second we know by the proof of Proposition~\ref{prop:finiteObjects} that $\ConfSet \subseteq Q \times \Gamma^{|u'|}$ where $u' = \open{u}$. Therefore, instead of computing $\ConfSet$ as a union $\bigcup_{h\in\subsetatc{\finiteH}} \ConfSet_h$, we can compute it starting from $\ConfSet'$ and adding elements of $Q \times \Gamma^{|u'|}$ one by one, until we get an element $\ConfSet$ of $\Safe{ua}$. By the way it is constructed, $\ConfSet \in \subsetatc{\Safe{ua}}$. We can check that such an element belongs to $\Safe{ua}$ with Proposition~\ref{prop:Post} by testing for all $h \in \subsetatc{\finiteH}$, whether there exists $\ConfSet'' \in \subsetatc{\LSafe{ua}}$ such that $\Post{\lin h}{\ConfSet} \supseteq \ConfSet''$. To get the whole set $\subsetatc{\Safe{ua}}$, we need to consider all the possibilities to enlarge $\ConfSet'$ with elements of $Q \times \Gamma^{|u'|}$. This task can be done efficiently with the help of SAT solvers (with ideas similar to the ones developed above).

\bibliographystyle{amsalpha}
\bibliography{bibliographie}

\end{document}